\let\newfloat\newfloat@ltx
\theoremstyle{definition}
\newtheorem{definition}{Definition}[section]
\newtheorem{theorem}{Theorem}[section]
\newtheorem{proposition}{Proposition}[section]
\newtheorem{lemma}{Lemma}[theorem]
\newtheorem{corollary}{Corollary}[theorem]
\newtheorem*{example}{Example}
\tikzstyle{arrow}=[-,draw=black,postaction={decorate},decoration={markings,mark=at position 0.99 with {\arrow{>}}},thick]
\begin{document}

\title{Grand Unification of continuous-variable codes}
\author{Allan D. C. Tosta}
\affiliation{Federal University of Rio de Janeiro, Caixa Postal 68528, Rio de Janeiro, RJ 21941-972, Brazil}
\author{Thiago  O. Maciel}
\affiliation{Federal University of Rio de Janeiro, Caixa Postal 68528, Rio de Janeiro, RJ 21941-972, Brazil}
\author{Leandro Aolita}
\affiliation{Federal University of Rio de Janeiro, Caixa Postal 68528, Rio de Janeiro, RJ 21941-972, Brazil}
\affiliation{Quantum Research Center, Technology Innovation Institute, Abu Dhabi, UAE}
\begin{abstract}
Quantum error correction codes in continuous variables (also called CV codes, or single-mode bosonic codes) have recently been identified to be a technologically viable option for building fault-tolerant quantum computers. The best known examples are the GKP code and the cat-code, both of which were shown to have some advantageous properties over any discrete-variable, or qubit codes. It was recently shown that the cat-code, as well as other kinds of CV codes, belong to set of codes with common properties called rotation-symmetric codes. We expand this result by giving a general description of sets of codes with common properties, and rules by which they can be mapped into one another, effectively creating a unified description of continuous-variable codes. We prove that the properties of all of these sets of codes can be obtained from the properties of the GKP code. We also show explicitly how this construction works in the case of rotation-symmetric codes, re-deriving known properties and finding new ones.
\end{abstract}

\maketitle

\section{Introduction}
Quantum error correction codes defined over an infinite dimensional Hilbert space, such as a single-mode bosonic system, are called \textit{continuous variable codes} or \textit{bosonic codes}. It is known that bosonic codes have some advantages over codes embedded in finite-dimensional Hilbert spaces (e.g qubit-based codes), be it with regards to how it protects quantum information \cite{albert_performance_2018,guillaud_repetition_2019} or with the ways their codewords can be created and manipulated experimentally \cite{mirrahimi_dynamically_2014}. In particular, there have been recent proposals \cite{chamberland_building_2022,noh_low_2021,bourassa_blueprint_2021} for using bosonic codes to create logical qubits that can latter be used as a basis to define a qubit-based code over (code concatenation \cite{noh_fault-tolerant_2020}).

The first example of bosonic code was the Gottesman-Kitaev-Preskill code, or GKP code \cite{gottesman_encoding_2001}. It is defined as a subspace of the single-mode bosonic Fock space, spanned by codewords that are eigenstates of to two commuting finite translations, $T_{q}$ and $T_{p}$, over quadrature variables $q,p$ satisfying canonical commutation relations. In \cite{fukui_analog_2017} it was shown that the form of the GKP codewords imposes a probability distribution over syndrome measurements, in a way that is not possible for discrete-variable codes in general. This allows for the construction of better decoding algorithms, and for better behavior under concatenation with other qubit codes \cite{noh_fault-tolerant_2020,campagne-ibarcq_quantum_2020,bourassa_blueprint_2021}.

Another important example is the \textit{cat-code} \cite{leghtas_hardware-efficient_2013,mirrahimi_dynamically_2014}. Its codewords are linear combinations of coherent states with the same amplitude and different phases. This code can be implemented in a way such that bit-flip errors are exponentially less likely to occur, while phase-flip errors are linearly more likely \cite{lescanne_exponential_2020}. This error bias for cat-code qubits can be exploited for concatenating it with qubit-based codes tailored for correcting phase-flip errors, such as the repetition code \cite{guillaud_repetition_2019,chamberland_building_2022} and the $\mathrm{ZXXZ}$ surface code \cite{darmawan_practical_2021}.

However, despite these fast advancements there is no unified understanding of bosonic codes. First steps in this direction were taken by Grimsmo et. al. in \cite{grimsmo_quantum_2020}. There, they characterized a family of codes called \textit{rotation-symmetric codes}, which are codes whose codewords are eigenstates of a phase-space rotation by a fixed angle. It was shown that several codes, including cat-codes, belong to this family. This generalization also allowed for the design of general error correction protocols that apply to all of them.

Here we develop a framework that provides a grand unification of the understanding about a very large class of CV codes we call \textit{unitary-invariant codes} (or $U$-invariant codes, for a unitary $U$). These families of codes includes all known CV codes, such as the the GKP code, the cat-code, and other rotation-symmetric codes, as special cases. Essentially, we prove a theorem that can be used as a toolbox to constructively build an explicit map between any two families of unitary-invariant codes under some weak assumptions. We use this toolbox to show that all properties of a particular code family can be mapped to any other family. With this we build the set of logical operators and detectable errors of any family of unitary-invariant codes. As an application, we use this framework to map the family of rotation-invariant codes to the family of translation-invariant codes, the one containing the GKP code, and obtain all logical operators and detectable errors for all codes in both families.

In technical terms, the theorem we prove is based of methods developed in the 70's and 80's for creating a consistent theory of canonical transformations in quantum mechanics \cite{mello_nonlinear_1975,moshinsky_canonical_1978,moshinsky_canonical_1979,deenen_canonical_1980,plebanski_partial_1980,newton_quantum_1980,bauer_time_1983}. We use those technical tools to derive a rigorous, constructive method to embed an hermitian operator, whose spectrum satisfies a few conditions, into another operator whose spectrum in the whole real line. This allows us describe a map between those hermitian operators and the momentum ( position) operator as product of a unitary and a projection map. However, we prove that this construction also requires the tools of rigged Hilbert spaces to provide meaningful answers. Therefore, we also introduce a technical innovation by extending the formalism developed most rigorously in \cite{plebanski_partial_1980}, into rigged Hilbert spaces.

This paper is structured as follows: In Sec.\ref{sec:prel}, we do a brief review of rotation-symmetric codes (Sec.\ref{sub:rot-sym}), and of the GKP code (Sec.\ref{sub:gkp}). Next, in Sec.\ref{sec:resu} we state the definition of $U$-invariant codes, and use it to define \textit{translation-invariant codes} as well as some other examples (Sec.\ref{sub:u-inv-cod}). Then, we give a simplified recipe for how to map any unitary-invariant code into a generalized version of a translation-invariant code (Sec.\ref{sub:uni-bos}), and give the explicit map between rotation-symmetric and translation-symmetric codes as an example (Sec.\ref{sub:map-trans-rot}). Finally, in Sec.\ref{sec:meth} we briefly describe the mathematics behind our constructions. We contextualize our formalism historically, by discussing its birth as a tool to represent the canonical transformations of Hamiltonian mechanics in quantum theory. Next, we introduce the mathematics of partial isometries and explain how they are used to build maps between operators with different spectra that preserve algebraic relations.

\section{Preliminaries}\label{sec:prel}

Here, we begin our review. In Sec.\ref{sub:rot-sym}, we define what are rotation-symmetric codes, show that they are parametrized by a set of states in Fock space, and describe their set of detectable errors. In Sec.\ref{sub:gkp}, we define GKP codes, show their set of logical operators, their set of detectable errors and, compare these sets with the ones for rotation-symmetric codes. 

Let us introduce the following notation for eigenstates of an hermitian operator: for any hermitian operator $F$ and any of its eigenvalues $\lambda$, the corresponding eigenvector is denoted by $\ket{\lambda}_{F}$. Or in other words,
\begin{equation}
G\ket{\lambda}_{G}:=\lambda\ket{\lambda}_{G}.
\end{equation}
In a similar way, the "eigenbras" have a label to their left, as in $\tensor[_G]{\bra{\lambda}}{}$ The "ket-bras" have either the form $\tensor[_X]{\ketbra{\psi}{\psi'}}{_Y}$, if $\psi$ and $\psi'$ belong to Hilbert spaces defined by different operators $X$ and $Y$, or the form $\ketbra{\psi}{\psi'}_{X}$ if the Hilbert spaces are defined by the same operator.

We also introduce the following notation for the exponential of operators: for any hermitian operator $G$, and any operator $U(s)=e^{isG}$, we write
\begin{equation}
U_{s}:=U(s).
\end{equation}
For example, if $G=n$, where $n$ is as before, we write the rotation operator $R(\theta)$ as $R_{\theta}$. Or, if $G=-p$, we write the translation in position $T^{q}(\eta)=e^{-i\eta p}$ as $T^{q}_{\eta}$.

\subsection{Rotation-symmetric codes}\label{sub:rot-sym}

A \textit{rotation-symmetric code of order $N$} (or an $\mathcal{S}_{R_{2\pi/N}}$ code) is a two-dimensional subspace of the single-mode bosonic Fock space $\mathcal{F}$ defined by choosing a basis \{$\ket{i}_{\mathcal{S}}|i=0,1\}$, where
\begin{equation}
R_{\pi/N}\ket{i}_{\mathcal{S}_{R_{2\pi/N}}}=(-1)^{i}\ket{i}_{\mathcal{S}_{R_{2\pi/N}}},
\end{equation} 
with being $N$ is a non-zero integer. Note that any such basis is invariant under the rotation $R_{2\pi/N}$, giving the code family its name. This set of codes was first defined in \cite{grimsmo_quantum_2020}, and includes popular codes, such as cat-codes \cite{leghtas_hardware-efficient_2013,mirrahimi_dynamically_2014} and binomial codes \cite{michael_new_2016}.

Grimsmo et al. showed that the set of all such codes is parametrized by vectors called \textit{primitive states}, $\ket{\Theta}$, from which the associated logical states $\left\{\ket{j}^{\Theta}_{\mathcal{S}_{R_{2\pi/N}}}|j=0,1\right\}$ are obtained by
\begin{subequations}
\begin{align}
\ket{j}^{\Theta}_{\mathcal{S}_{R_{2\pi/N}}}&=\frac{\Pi^{j}_{R_{\pi/N}}\ket{\Theta}}{\sqrt{\mel{\Theta}{\Pi^{j}_{R_{\pi/N}}}{\Theta}}}\text{, with}\label{eq:rot_code_prim}\\
\Pi^{j}_{R_{\pi/N}}&=\sum_{m=0}^{2N-1}((-1)^{j}R_{\pi/N})^{m}\label{eq:rot_code_proj}
\end{align}
\end{subequations}
being orthogonal projectors. The condition that an arbitrary state must satisfy to be a primitive state of some $\mathcal{S}^{\Theta}_{R_{2\pi/N}}$ code is to have in its number-basis expansion at least two component states of the form $\ket{kN}_{n}$, one with $k$ even and one with $k$ odd.

For any $\mathcal{S}_{R_{2\pi/N}}$ code, the set of \textit{detectable errors} includes \textit{number-shift operators}
\begin{equation}
\Gamma_{m}=(a\sqrt{n})^{m}=\sum_{l=0}^{\infty}\ketbra{l}{l+m}_{n}
\end{equation}
for $m\in\{1,\dots,N\}$, as well as their hermitian conjugates. Also, for any $\mathcal{S}_{R_{2\pi/N}}$ code, the set of logical operators includes
\begin{equation}\label{eq:log-rot-sym}
\Bar{Z}=e^{i\frac{\pi}{N}n},\quad\Bar{S}=e^{i\frac{\pi}{2N^{2}}n^{2}}\text{, and }\Bar{T}=e^{i\frac{\pi}{4N^{4}}n^{4}},
\end{equation}
where $\Bar{Z}$, $\Bar{S}$ and $\Bar{T}$ are, respectively, the Pauli-Z gate, the phase gate and the $\pi/8$ gate\footnote{This form of the operators, and the fact that they can not be simpler, is proven by calculating their actions over $\ket{(2kN)_{n}}$ and $\ket{((2k+1)N)_{n}}$.}. Similarly, for any encoded two-qubit system of the form $\mathcal{S}_{R_{2\pi/N}}\otimes\mathcal{S}_{R_{2\pi/M}}$, where $N$ can be different than $M$, the set of two-qubit logical operators includes the gate
\begin{equation}
\Bar{CZ}=\mathrm{CROT}_{NM}:=e^{i(\pi/NM)n\otimes n},
\end{equation}
where $\Bar{CZ}$ is the controlled Pauli-Z gate. 

In \cite{grimsmo_quantum_2020}, it was shown that, if we are able to prepare the states $\ket{\pm_{R_{\pi/N}}}$ and measure in this basis, then the encoded gates above allow universal quantum computing, regardless of the specific $\mathcal{S}_{R_{2\pi/N}}$ code used (code-independent universality). They have also developed error correction protocols that are code-independent, which can be used in concatenation with other codes for producing fault-tolerant computation schemes.

Extending the class of primitive states to include non-normalizable states, it was shown that the primitive state
\begin{equation}
\ket{*}=\sum_{m\in\mathbb{N}}\ket{m}_{n}
\end{equation}
defines a particular code, called the \textit{ideal rotation-symmetric code of order} $N$, or $\mathcal{S}^{*}_{R_{2\pi/N}}$. This code has logical states of the form
\begin{equation}
\ket{i}^{*}_{\mathcal{S}_{R_{2\pi/N}}}\propto\sum_{k=0}^{\infty}\ket{(2k+i)N}_{n}.
\end{equation}

For $\mathcal{S}^{*}_{R_{2\pi/N}}$, it was shown that the set of logical operators includes the operators discussed above and the Pauli-X $\Bar{X}=(a\sqrt{n})^N$. The set of detectable errors however, is completely known, and is generated by linear combinations of products of
\begin{subequations}
\begin{align}
(a\sqrt{n})^{m}&\text{, where }m\in\{1,\dots,N-1\}\text{, and}\\
R_{\theta}&\text{, where }\theta\in\left(0,\frac{\pi}{N}\right).
\end{align}
\end{subequations}

\subsection{GKP codes}\label{sub:gkp}

A \textit{GKP code of scaling $\lambda$} (or an $\mathcal{GKP}_{\lambda}$ code) is a two-dimensional subspace spanned by the logical states $\ket{0_{\mathcal{GKP}_{\lambda}}}$ and $\ket{1_{\mathcal{GKP}_{\lambda}}}$, defined by
\begin{subequations}
\begin{align}
T^{q}_{\lambda\sqrt{\pi}}\ket{i}_{\mathcal{GKP}_{\lambda}}&=(-1)^{i}\ket{i}_{\mathcal{GKP}_{\lambda}},\\
T^{p}_{\sqrt{\pi}/\lambda}\ket{i}_{\mathcal{GKP}_{\lambda}}&=\ket{i\oplus_{2}1}_{\mathcal{GKP}_{\lambda}},
\end{align}
\end{subequations}
where $\oplus_{2}$ is addition modulo $2$, and \begin{equation}
T^{q}_{\lambda\sqrt{\pi}}=e^{-i(\lambda\sqrt{\pi})p},\quad T^{p}_{\sqrt{\pi}/\lambda}=e^{i\frac{\sqrt{\pi}}{\lambda}q}
\end{equation}
are translation operators, $\lambda$ is an arbitrary real scaling factor, and $q$ and $p$ are the position and momentum operators. Clearly, $\mathcal{GKP}_{\lambda}$ is invariant by the actions of both $T^{q}_{2\lambda\sqrt{\pi}}$ and $T^{p}_{2\sqrt{\pi}/\lambda}$, which commute.

The codewords have infinite norm, and are expressed in the momentum operator basis by
\begin{equation}
\ket{i}_{\mathcal{GKP}_{\lambda}}\propto\sum_{k\in\mathbb{Z}}\ket{(2k+i)\frac{\sqrt{\pi}}{\lambda}}_{p}.
\end{equation}
It is known from \cite{gottesman_encoding_2001} that the set of detectable errors is given by linear combinations of products of the operators 
\begin{subequations}
\begin{align}
T^{q}_{\eta}&\text{, with }\eta\in(-\lambda\sqrt{\pi},\lambda\sqrt{\pi})\text{, and }\\
T^{p}_{\zeta}&\text{, with }\zeta\in(-\sqrt{\pi}/\lambda,\sqrt{\pi}/\lambda).
\end{align}
\end{subequations}

It is also known \cite{noh_fault-tolerant_2020} that the single-qubit logical operators for $\mathcal{GKP}_{\lambda}$ are
\begin{subequations}
\begin{align}
\Bar{Z}=e^{-i\lambda\sqrt{\pi}p},\quad&\Bar{S}=e^{i\frac{\lambda^{2}}{2}p^{2}},\quad\Bar{T}=e^{i\frac{\lambda^{4}}{4\pi}\hat{p}^{4}}\label{eq:log-gkp}\\
\Bar{X}=e^{i(\sqrt{\pi}/\lambda)q}&\text{, and }\Bar{H}=e^{i\pi/2\left(\lambda^{2}p^{2}+\frac{q^{2}}{\lambda^{2}}\right)},
\end{align}
\end{subequations}
and the set of two-qubit logical operators for an encoded two-qubit system of the form $\mathcal{GKP}_{\lambda}\otimes\mathcal{GKP}_{\lambda'}$ includes the gate
\begin{equation}
\Bar{CZ}=e^{-i\lambda\lambda'p\otimes p}.
\end{equation}
Which together with the single-qubit gates, they form a universal gate set.

Looking at the form of the logical operators $\Bar{Z}$, $\Bar{S}$, and $\Bar{T}$ in Tab.\ref{tab:my_label}, it could suggest that an unitary map $U$ taking the operator $p$ into the operator $n$ would transform the $\mathcal{GKP}_{\lambda}$ code with $\lambda=-\sqrt{\pi}/N$ into the $\mathcal{S}^{*}_{R_{2\pi/N}}$ code. However, such an $U$ does not exist, since unitary maps are \textit{isospectral}, and $p$ and $n$ have different spectra. This is further illustrated by observing the last line in Tab.\ref{tab:my_label}, and noticing that no unitary transformation could map $(a\sqrt{n})^{N}$, which is itself non-unitary, into $e^{i(\sqrt{\pi}/\lambda)q}$, which is unitary. 

It is part of our objective in this work to show that these to codes are instead related by a composing a unitary map with a projection map.

\begin{table}[]
    \centering
    \begin{tabular}{|c||c|c|}
    \hline
         & $\mathcal{S}^{*}_{R_{2\pi/N}}$ & $\mathcal{GKP}_{\lambda}$\\
    \hline
    \hline
    $\Bar{Z}$ & $e^{i\frac{\pi}{N}n}$ & $e^{-i\lambda\sqrt{\pi}p}$ \\
    \hline
    $\Bar{S}$ & $e^{i\frac{\pi}{2N^{2}}n^{2}}$ & $e^{i\frac{\lambda^{2}}{2}p^{2}}$\\
    \hline
    $\Bar{T}$ & $e^{i\frac{\pi}{4N^{4}}n^{4}}$ & $e^{i\frac{\lambda^{4}}{4\pi}\hat{p}^{4}}$\\
    \hline
    $\Bar{X}$ & $(a\sqrt{n})^{N}$ & $e^{i(\sqrt{\pi}/\lambda)q}$\\
    \hline
    \end{tabular}
    \caption{Side-by-side comparison of single-qubit logical operators between the $\mathcal{S}^{*}_{R_{2\pi/N}}$ code and the $GKP_{\lambda}$ code.}
    \label{tab:my_label}
\end{table}

\section{Results}\label{sec:resu}

Here, give a succint description of our framework. In Sec.\ref{sub:u-inv-cod}, define \textit{unitary-invariant codes}, giving translation-symmetric codes as an example. In Sec.\ref{sub:uni-bos}, we provide a recipe for mapping any unitary-invariant code into a generalized version of the translation-symmetric codes. Finally, in Sec.\ref{sub:map-trans-rot} we give the mapping between translation and rotation symmetric codes as a use case.

\subsection{\textit{U}-invariant codes}\label{sub:u-inv-cod}

Here we define unitary-invariant code families and show that they are characterized by their codeword projectors, and a set of allowed primitive states. We also defined special elements of these families by specifying a particular primitive state. These special elements are essentially a generalization of the $\mathcal{GKP}_{\lambda}$ codes. These code families are the central object of our grand unification.

A \textit{unitary-invariant code} with unitary $U_{s}$ and parameter $s$ (or a $\mathcal{S}_{U_{s}}$ code) is a two-dimensional subspace of the Fock-space $\mathcal{F}$ defined by a basis $\{\ket{i}_{\mathcal{S}_{U_{s}}}|i=0,1\}$ that satisfies
\begin{equation}
U_{s/2}\ket{j}_{\mathcal{S}_{U_{s}}}=e^{i\frac{s}{2}G}\ket{j}_{\mathcal{S}_{U_{s}}}=(-1)^{j}\ket{j}_{\mathcal{S}_{U_{s}}},
\end{equation}
where $G$ is the hermitian generator of $U_{s}$. In other words, all unitary-invariant codes are stabilized by $U_{s}$ and have $U_{s/2}$ as the logical Pauli-Z. The prototypical example of family of unitary invariant codes is the family of rotation-symmetric codes $\mathcal{S}_{R_{2\pi/N}}$. 

Another particularly important example is the family of unitary-invariant codes with unitary $T^{q}_{2\lambda\sqrt{\pi}}$, which we call \textit{translation-symmetric codes of spacing $\lambda$} (or $\mathcal{S}_{T^{q}_{2\lambda\sqrt{\pi}}}$ codes). Just as we have Eqs.(\ref{eq:rot_code_prim},\ref{eq:rot_code_proj}) in the case of $\mathcal{S}_{R_{2\pi/N}}$ codes, the $\mathcal{S}_{T^{q}_{2\lambda\sqrt{\pi}}}$ codes are parametrized by a set of primitive states $\ket{\Theta}$ such that the logical states are given by
\begin{subequations}
\begin{align}
\ket{j}^{\Theta}_{\mathcal{S}_{T^{q}_{2\lambda\sqrt{\pi}}}}&=\frac{\Pi^{j}_{T^{q}_{\lambda\sqrt{\pi}}}\ket{\Theta}}{\sqrt{\mel{\Theta}{\Pi^{j}_{T^{q}_{\lambda\sqrt{\pi}}}}{\Theta}}}\text{, with}\\
\Pi^{j}_{T^{q}_{\lambda\sqrt{\pi}}}&=\sum_{m\in\mathbb{Z}}((-1)^{j}T^{q}_{\lambda\sqrt{\pi}})^{m},
\end{align}
\end{subequations}
where $\Pi^{j}_{T^{q}_{\lambda\sqrt{\pi}}}$ is a generalized projection operator. The condition an arbitrary state must satisfy to be a primitive state for some $\mathcal{S}_{T^{q}_{2\lambda\sqrt{\pi}}}$ code is to have in its momentum-basis expansion at least two component states of the form $\ket{k(\sqrt{\pi}/\lambda)}_{p}$, one with $k$ even and one with $k$ odd.

In general, given the spectrum of $G$, denoted by $\sigma(G)$, the projector that generate the codewords of any $\mathcal{S}_{U_{s}}$ are given by
\begin{equation}
\Pi^{j}_{U_{s}}=\sum_{m_{j}\in I^{j}(s)}\ketbra{(2m_{j}+j)\pi/s}{(2m_{j}+j)\pi/s}_{G},
\end{equation}
where $I^{j}(s)$ is the subset of $\mathbb{Z}$ that contains the numbers $m_{j}$ for which $(2m_{j}+j)(\pi/s)\in\sigma(G)$ with $j=0,1$. This implies that there are values of $s$ for which $\mathcal{S}_{U_{s}}$ may not be well defined. Using these projectors, we can parametrize $\mathcal{S}_{U_{s}}$ by a set of primitive vectors $\ket{\Theta}$ that only need to satisfy
\begin{equation}
\Pi^{j}_{U_{s}}\ket{\Theta}\neq0\text{, for }j=0,1.
\end{equation}

For the $\mathcal{S}_{T^{q}_{2\lambda\sqrt{\pi}}}$ codes, we can show that the primitive state
\begin{equation}
\ket{*}=\int_{-\infty}^{\infty}\dd{z}\ket{z}_{p},
\end{equation}
defines a particular code, which we call the \textit{ideal translation-symmetric code of spacing} $\lambda$ (or $\mathcal{S}^{*}_{T^{q}_{2\lambda\sqrt{\pi}}}$ code). This code has codewords given by
\begin{equation}
\ket{j}^{*}_{\mathcal{S}_{T^{q}_{2\lambda\sqrt{\pi}}}}\propto\sum_{k\in\mathbb{Z}}\ket{(2k+i)\frac{\sqrt{\pi}}{\lambda}}_{p},
\end{equation}
Which are exactly the same as the codewords of the $\mathcal{GKP}_{\lambda}$ code.

In general, we define the \textit{ideal code} $\mathcal{S}^{*}_{U_{s}}$ of an unitary-invariant code family as the code generated by the primitive state
\begin{equation}
\ket{*}=\sum_{\mu\in\sigma_{p}(G)}\ket{\mu}_{G}+\int_{\sigma_{c}(G)}\dd{\nu}\ket{\nu}_{G},
\end{equation}
where $\sigma_{p}(G)$ is the \textit{pure-point spectrum} of $G$, and $\sigma_{c}(G)$ is the \textit{absolutely continuous spectrum} of $G$ \cite{reed_functional_1980}. Therefore, unitary-invariant code families are characterized by their codeword projectors, and their set of allowed primitive states. We also defined special elements of these families by specifying a particular primitive state. These special elements are essentially a generalization of the $\mathcal{GKP}_{\lambda}$ codes. These code families are the central object of our grand unification.

\subsection{Unification of bosonic codes}\label{sub:uni-bos}

As stated earlier, our main result is a toolbox that allows us to map unitary-invariant codes among each other. This framework is essentially contained in the proof of the following theorem.

\begin{theorem}
If $A,B$ are non-degenerate, self-adjoint operators such that their spectra, $\sigma(A)$ and $\sigma(B)$, are a countable union of connected subsets of $\mathbb{R}$ then, there exists two pairs of maps $(\Xi_{A},\Omega_{A})$ and $(\Xi_{B},\Omega_{B})$ such that: $\Omega_{B}\circ\Xi_{A}$ and $\Omega_{A}\circ\Xi_{B}$ are algebra homomorphisms and 
\begin{equation}
(\Omega_{B}\circ\Xi_{A})[A]=B,\quad(\Omega_{A}\circ\Xi_{B})[B]=A.
\end{equation}
\end{theorem}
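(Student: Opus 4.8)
The plan is to route both $A$ and $B$ through a single universal reference operator with spectrum all of $\mathbb{R}$ — the momentum operator $p$ on $L^2(\mathbb{R})$ — taking $\Xi_A$ and $\Xi_B$ to be the maps that ``embed'' the operator algebras of $A$ and $B$ into that of $p$, and $\Omega_A$, $\Omega_B$ to be the conjugate ``projections'' back. First I would invoke the spectral theorem: since $A$ is self-adjoint and non-degenerate it is unitarily equivalent to multiplication by the coordinate on $L^2(\sigma(A),\dd{\rho_A})$ for a scalar spectral measure $\rho_A$, and likewise for $B$. Passing to the Dirac picture supplies a complete family of generalized eigenvectors $\{\ket{\lambda}_A\}_{\lambda\in\sigma(A)}$, normalized by Kronecker deltas on the pure-point part $\sigma_p(A)$ and by Dirac deltas on the absolutely continuous part $\sigma_c(A)$, and similarly $\{\ket{\mu}_B\}_{\mu\in\sigma(B)}$.

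\emph{Reparametrizing the spectrum onto $\mathbb{R}$.} Write $\sigma(A)=\bigsqcup_k J^A_k$ as the disjoint union of its connected components, which by hypothesis are intervals or isolated points and inherit a linear order from $\mathbb{R}$; do the same for $\sigma(B)$. I would then construct a strictly increasing ``gluing map'' $\varphi_A$ that carries this ordered chain of components onto a prescribed subset $S\subseteq\mathbb{R}$: each non-degenerate component is sent by a smooth, strictly monotone bijection onto a designated open interval, each isolated point onto an isolated point of a designated discrete set, with bookkeeping of the cases (finitely versus countably many components, bounded versus unbounded ones) so that the images tile $S$ in the chosen pattern. This is the step that ``embeds an hermitian operator whose spectrum satisfies a few conditions into an operator whose spectrum is the whole real line'': $\varphi_A^{-1}$ promotes $A$ to an operator $\widehat{A}$ with $\sigma(\widehat{A})=\mathbb{R}$, from which the original $A$ is recovered by a projection and which is unitarily equivalent to $p$. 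The same is done for $B$, yielding $\varphi_B$ and $\widehat{B}$; the crucial choice is to use the \emph{same} target $S$ for both, so that $\varphi_B^{-1}\circ\varphi_A$ is a well-defined monotone correspondence between $\sigma(A)$ and $\sigma(B)$.

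\emph{The partial isometry and the induced maps.} Using $\varphi_A$ I would define a map $V_A$ from the rigged extension of $\mathcal{H}_A$ into that of $L^2(\mathbb{R})$ by $\ket{\lambda}_A\mapsto\sqrt{\varphi_A'(\lambda)}\,\ket{\varphi_A(\lambda)}_p$ on $\sigma_c(A)$ and $\ket{\lambda}_A\mapsto\ket{\varphi_A(\lambda)}_p$ on $\sigma_p(A)$, then check that $V_A$ is a partial isometry — $V_A^{\dagger}V_A=\mathds{1}$ on $\mathcal{H}_A$, with $V_AV_A^{\dagger}$ a spectral ``corner'' of $p$ in the rigged sense — and that it intertwines $A$ with $\varphi_A^{-1}(p)$. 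I would set $\Xi_A[X]=V_AXV_A^{\dagger}$ on the algebra generated by $A$ and $\Omega_A[Y]=V_A^{\dagger}YV_A$ on the algebra generated by $p$, and proceed identically for $B$. Then $W:=V_B^{\dagger}V_A$ should implement the correspondence $\varphi_B^{-1}\circ\varphi_A$, so that $WAW^{\dagger}=B$ and hence $(\Omega_B\circ\Xi_A)[A]=B$; and $\Omega_B\circ\Xi_A$ is an algebra homomorphism because the shared target $S$ renders the corners of $V_A$ and $V_B$ compatible, so $V_AV_A^{\dagger}$ acts as the identity on the image algebra while $W^{\dagger}W=\mathds{1}$. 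Exchanging $A\leftrightarrow B$ gives the statement for $\Omega_A\circ\Xi_B$.

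\emph{Main obstacle.} The heart of the matter is that no \emph{unitary} can do this — unitaries are isospectral, whereas $\sigma(A)$ and $\sigma(B)$ generally differ (one may be discrete, the other a continuum) — so partial isometries are unavoidable, and, worse, the naive route through ordinary Borel functional calculus (reading $\varphi_A^{-1}(p)$ as a genuine operator) silently discards the isolated eigenvalues, which land on a $\rho$-null set once glued into $\mathbb{R}$. Overcoming this is precisely what forces the construction into rigged Hilbert spaces and demands a proof that the partial-isometry calculus used above — adjoints, intertwiners, the Jacobian normalizations at the junctions between components, and above all the corner identity underpinning multiplicativity — remains valid for these generalized, distribution-valued maps; this is the extension of the formalism of \cite{plebanski_partial_1980} announced in the introduction. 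A secondary, purely combinatorial difficulty is the construction of the gluings themselves: matching spectra built from arbitrary mixtures of points and intervals, finite or countable in number and with arbitrary endpoints, onto a common target $S$ so that $\varphi_B^{-1}\circ\varphi_A$ is the desired correspondence.
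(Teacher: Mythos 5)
Your construction fails at the crucial step, and the failure is instructive because it is exactly what the paper's more elaborate direct-sum machinery is designed to avoid. The gluing map $\varphi_A$ \emph{changes eigenvalues}: since $V_A\ket{\lambda}_A$ is proportional to $\ket{\varphi_A(\lambda)}_p$, the operator $\Xi_A[A]=V_AAV_A^{\dagger}$ acts on its corner of $L^2(\mathbb{R})$ as multiplication by $\varphi_A^{-1}(\mu)$, not by $\mu$ --- i.e.\ $\Xi_A[A]$ is $\varphi_A^{-1}(p)$ cut down to the image corner, not $p$ itself. Propagating through $\Omega_B$, one gets
\begin{equation}
(\Omega_B\circ\Xi_A)[A]\ket{\mu}_B \;=\; \varphi_A^{-1}\bigl(\varphi_B(\mu)\bigr)\ket{\mu}_B,
\end{equation}
which equals $B\ket{\mu}_B=\mu\ket{\mu}_B$ only if $\varphi_A$ and $\varphi_B$ agree on $\sigma(B)$ --- generically impossible when $\sigma(A)\neq\sigma(B)$. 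A two-eigenvalue example makes this concrete: with $\sigma(A)=\{0,1\}$, $\sigma(B)=\{0,2\}$, $\varphi_A=\mathrm{id}$, $\varphi_B(0)=0,\varphi_B(2)=1$, your $W=V_B^{\dagger}V_A$ sends $A=\ketbra{1}{1}$ to $\ketbra{2}{2}$, not to $B=2\ketbra{2}{2}$. The claim $WAW^{\dagger}=B$ is simply false. (A related slip: $\widehat{A}=\varphi_A^{-1}(p)$ has spectrum $\sigma(A)$, not $\mathbb{R}$, so it is unitarily equivalent to $A$, not to $p$; the operator with spectrum $\mathbb{R}$ would be $\varphi_A(A)$, but then $A$ is recovered from it by the functional-calculus inverse $\varphi_A^{-1}$, not by a Hilbert-space projection, which undercuts the homomorphism structure you need.)

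The paper's proof (Lem.~\ref{lem:spec-ext-case1}--\ref{lem:spec-ext-case3}) avoids reparametrization entirely. Instead of a single partial isometry with a gluing map, it builds a \emph{family} of translated and/or interpolated copies $\{A_\mu\}_{\mu\in\Lambda}$ of $A$ whose spectra tile $\mathbb{R}$ without overlap, forms the direct sum $J_A=\bigoplus_\mu A_\mu$ on an enlarged Hilbert space $\mathcal{H}_A\otimes\mathcal{H}_\Lambda$, and uses the \emph{canonical} partial isometries $V^p_{A_\mu}=\sum_{\lambda}\ket{\lambda}_p\bra{\lambda}_{A_\mu}$, which preserve the eigenvalue $\lambda$. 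Assembled via Thm.~\ref{thm:unit-part} these give a genuine unitary $U$ with $UJ_AU^{\dagger}=p$, so $\Xi_A[\cdot]=U\iota_A[\cdot]U^{\dagger}$ really does satisfy $\Xi_A[A]=p$. The reverse map $\Omega_A$ then uses the \emph{canonical projection} $\kappa_A$ of the direct sum onto the component carrying $A$ itself --- a bona fide Hilbert-space projection, which is what makes $\Omega_A$ an algebra homomorphism on power series and gives $\Omega_A[p]=A$ exactly. Both directions factor through $p$ with equalities, so $(\Omega_B\circ\Xi_A)[A]=\Omega_B[p]=B$ is automatic. The ingredient your proposal lacks is precisely this auxiliary index space $\Lambda$: it is what lets the construction cover $\mathbb{R}$ with honest copies of $\sigma(A)$ rather than stretching a single copy, and thereby keeps every eigenvalue intact. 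Your instincts about rigged Hilbert spaces, the impossibility of a plain unitary, and the atoms-on-null-sets problem are all correct, but they are obstacles to be handled within the direct-sum picture, not by a change of spectral coordinates.
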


\begin{proof}
See Sec.\ref{app_sub:code-recip}.
\end{proof}

The most important consequence of this theorem is the following. Let $\mathcal{S}_{U_{s}}$ and $\mathcal{S}_{V_{s}}$ be two unitary-invariant code families with unitary stabilizers $U_{s}=e^{isA}$ and $V_{s}=e^{isB}$. Then, by Thm.\ref{thm:main}, we have that 
\begin{equation}
(\Omega_{B}\circ\Xi_{A})[U_{s}]=V_{s},\quad(\Omega_{A}\circ\Xi_{B})[V_{s}]=U_{s},
\end{equation}
for any $s\in\mathbb{R}$. This implies that the maps $\Omega_{B}\circ\Xi_{A}$ and $\Omega_{A}\circ\Xi_{B}$ send $\mathcal{S}_{U_{s}}$ codes into $\mathcal{S}_{V_{s}}$ codes and vice-versa. 

Our objective is to explicitly build the pairs of maps $(\Xi_{A},\Omega_{A})$ and $(\Xi_{B},\Omega_{B})$. We do this by finding maps taking the operator $A$ (or $B$) to the operator
\begin{equation}
\mathbf{M}_{\mathbb{R}}=\int_{\mathbb{R}}\dd{\nu}\nu\ketbra{\nu}{\nu}_{\mathbf{M}_{\mathbb{R}}},
\end{equation}
called the \textit{standard multiplication operator} over the set $\mathbb{R}$, and vice-versa. This mapping procedure is illustrated in Fig.[\ref{fig:unification}], and the mapping from $A$ to $\mathbf{M}_{\mathbb{R}}$ described by the informal algorithm Alg.[\ref{alg:recipe_code}].

\begin{algorithm}[ht]
\caption{Recipe for mapping $A$ to $\mathbf{M}_{\mathbb{R}}$}
\begin{algorithmic}
\Require Let $A$ be hermitian, non-degenerate and admissible
\State Find a family of operators $\{A_{\mu}\}_{\mu\in\Lambda}$ such that
\begin{itemize}
    \item The set $\{A_{\mu}\}_{\mu\in\Lambda}$ is the image set of a bijective function of $A$,
    \item $A\subset\{A_{\mu}\}_{\mu\in\Lambda}$,
    \item $\bigcup_{\mu\in\Lambda}\sigma(A_{\mu})=\mathbb{R}$, and
    \item $\bigcap_{\mu\in\Lambda}\sigma(A_{\mu})=\varnothing$.
\end{itemize}
\State Define $\mathcal{H}_{\Lambda}$ as the Hilbert space generated by the set of formal states $\{\ket{\mu}^{\Lambda}|\mu\in\Lambda\}$
\State Define $J_{A}$ by
\begin{equation}
J_{A}=\sum_{\mu\in\Lambda}A_{\mu}\otimes\ketbra{\mu}{\mu}^{\Lambda}
\end{equation}
\State Define the map $\iota_{A}$ by $\iota_{A}(\sum_{i=1}^{n}\alpha_{i}A^{i})=\sum_{i=1}^{n}\alpha_{i}J_{A}^{i}$ for all $n\in\mathbb{N}$.
\State Define $V_{\mathbf{M}_{\mathbb{R},A_{\mu}}}$ by
\begin{equation}
\displaystyle{V_{\mathbf{M}_{\mathbb{R}},A_{\mu}}=\sum_{\lambda\in\sigma_{p}(A_{\mu})}\ket{\lambda}_{\mathbf{M}_{\mathbb{R}}}\bra{\lambda}_{A_{\mu}}+\int_{\sigma_{c}(A_{\mu})}\dd{z}\ket{z}_{\mathbf{M}_{\mathbb{R}}}\bra{z}_{A_{\mu}}},
\end{equation}
\State Define $U_{\mathbf{M}_{\mathbb{R}},J_{A}}$ by
\begin{equation}
U_{\mathbf{M}_{\mathbb{R}},J_{A}}=\sum_{\mu\in\Lambda}V_{\mathbf{M}_{\mathbb{R}},A_{\mu}}\otimes\bra{\mu}^{\Lambda},
\end{equation}
\State Define $\Xi_{A}$ by $\Xi_{A}(O)=(U_{\mathbf{M}_{\mathbb{R}},J_{A}})\iota_{A}(O)(U^{\dagger}_{\mathbf{M}_{\mathbb{R}},J_{A}})$ for any operator $O$ . Then, we have that $\Xi_{A}(A)=\mathbb{M_{R}}$.
\end{algorithmic}
\label{alg:recipe_code}
\end{algorithm}

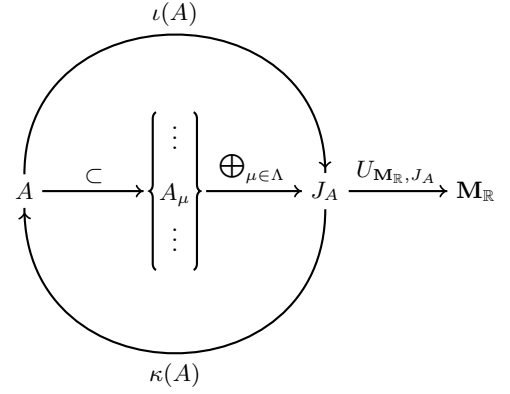
\begin{figure}[ht]
\centering
\begin{tikzpicture}
\node (c1) at (-2,2.75){};
\node (c2) at (2,2.75){};
\node (d1) at (-2,-2.75){};
\node (d2) at (2,-2.75){};
\node (a1) at (-2,0.05){$A$};
\node[anchor=south] (a2) at (0,0.5){$\vdots$};
\node (a3) at (0,0){$A_{\mu}$};
\node (a4) at (0,-0.5){$\vdots$};
\node (lbr) at (-0.3,0.05){};
\draw[thick,decorate,decoration = {brace}] (-0.25,-1) --  (-0.25,1.1);
\draw[thick,decorate,decoration = {brace}] (0.25,1.1) --  (0.25,-1);
\draw[thick,arrow] (a1) to (lbr);
\node (subset) at (-1.075,0.25){$\mathbf{\subset}$};
\node (ja) at (2,0.05){$J_{A}$};
\node (rbr) at (0.3,0.05){};
\draw[thick,arrow] (rbr) to (ja);
\node (dsum) at (1.025,0.35){$\bigoplus_{\mu\in \Lambda}$};
\draw[thick,arrow] (a1) .. controls (c1) and (c2) .. (ja);
\draw[thick,arrow] (ja) .. controls (d2) and (d1) .. (a1);
\node (ea) at (0,2.4){$\iota(A)$};
\node (pa) at (0,-2.4){$\kappa(A)$};
\node (Mr) at (4,0.05){$\mathbf{M}_{\mathbb{R}}$};
\draw[thick,arrow] (ja) to (Mr);
\node (U) at (2.95,0.3){$U_{\mathbf{M}_{\mathbb{R}},J_{A}}$};
\end{tikzpicture}
\caption{Illustration of Alg.\ref{alg:recipe_code}. First, we look for a set of operators $\{A_{\mu}\}_{\mu\in\Lambda}$ that includes $A$, then we build an operator $J_{A}$ by taking the direct sum over this family. If we choose a family such that the union of their spectra is the real line, we can build the unitary map $U_{\mathbf{M}_{\mathbb{R}},J_{A}}$, where $\mathbf{M}_{\mathbb{R}}$ is the standard multiplication operator. Then, the map $\iota_{A}$ takes $A$ to $J_{A}$, and any function $f(A)$ into $f(J_{A})$. In the end, the map we are after is $\Xi_{A}$, that by construction must satisfy $\Xi_{A}(A)=\mathbb{R}$. The map $\kappa_{A}$ is the standard projection, taking $J_{A}$ to $A$. From $\kappa_{A}$ we can define the map $\Omega_{A}$ by $\Omega_{A}(O):=\kappa((U^{\dagger}_{\mathbf{M}_{\mathbb{R}},J_{A}})O(U_{\mathbf{M}_{\mathbb{R}},J_{A}}))$, that satisfy $\Omega_{A}(\mathbf{M}_{\mathbb{R}})=A$}
\label{fig:unification}
\end{figure}

The operator $J_{A}$ is built as a direct sum over the family $\{A_{\mu}\}_{\mu\in\Lambda}$. Given this direct sum structure, the maps $\kappa_{A}$ and $\iota_{A}$ are the canonical projection and co-projection, respectively, over the $A$-th component of $J_{A}$. These maps end up being algebra homomorphisms, that is, $\iota_{A}$ sends a Taylor series of $A$ into the same Taylor series over $J_{A}$, and $\kappa_{A}$ does the reverse (see Sec.\ref{app_sub:code-recip}). The operators $V_{\mathbf{M}_{\mathbb{R}},J_{A}}$ are called \textit{partial isometries} and are discussed in Sec.\ref{sec:meth}. These are essentially a generalization of unitary maps.

Then, since $J_{A}$ is built to have the same spectrum as $\mathbf{M}_{\mathbb{R}}$, regardless of $A$, we can use the partial isometries to build the map $U^{\dagger}_{\mathbf{M}_{\mathbb{R}},J_{A}}$, which ends up being unitary. Therefore, we can send $\mathbf{M}_{\mathbb{R}}$ to $J_{A}$ via this unitary, or $J_{A}$ to $\mathbf{M}_{\mathbb{R}}$ via its adjoint. Using these ingredients, we can build an algebra homomorphism $\Xi_{A}$ that sends $A$ to $\mathbf{M}_{\mathbb{R}}$, and another algebra homomorphism $\Omega_{A}$ that sends $\mathbf{M}_{\mathbb{R}}$ to $A$. So if $B$ is another operator satisfying the conditions of Thm.\ref{thm:main}, we can apply Alg.[\ref{alg:recipe_code}] to $B$ to obtain maps $\Xi_{B}$, $\Omega_{B}$ that, by construction, must be algebra homomorphisms and, together with $\Omega_{A}$ and $\Xi_{A}$, must satisfy Eq.[\ref{eq:thm_main}].

We can also use these maps to map logical operators commuting with $U_{s}$ and $U_{s/2}$ into logical operators commuting with $V_{s}$ and $V_{s/2}$ having the same action over logical states. Essentially, if the code families $\mathcal{S}_{U_{s}}$ and $\mathcal{S}_{V_{s}}$ are well-defined, and $\Bar{O}_{U}$ is a logical operator that commutes with $U_{s/2}$ then the operator $\Bar{O}_{V}$, given by
\begin{equation}
\Bar{O}_{V}=\Omega_{B}\circ\Xi_{A}(\Bar{O}_{U}),
\end{equation}
is a logical operator for the corresponding code in the code family $\mathcal{S}_{V_{s}}$, that has the same action as $\Bar{O}_{U}$. This result can be extended to include non-diagonal single-qubit gates for codes where those are well-defined, as proven in Thm.\ref{thm:log-op}. To illustrate this algorithm and its applications, consider the following example.

\subsection{Mapping translation-symmetric codes into rotation-symmetric codes}\label{sub:map-trans-rot}

In this example, we take the standard multiplication operator to be the momentum operator $p$. Then, our main objective is to find the map $\Omega_{n}$
that takes $p$ to the number operator $n$. Following Alg.(\ref{alg:recipe_code}), we must find a family of operators $\{n_{\mu}\}_{\mu\in\Lambda}$ satisfying the recipe's conditions. 

\textit{\textbf{Applying the algorithm:}} Consider the operators
\begin{equation}
n_{\nu}:=n+\nu\mathds{1}_{\mathbb{N}}.
\end{equation}
If we take the index $\nu$ to belong to the interval $[0,1]$, the union of their spectra will be the non-negative reals $\mathbb{R}^{+}$, but their intersection will contain the set $\{1,2,...\}$. Therefore, to remove the intersection we exclude $1$ from the interval. Since we have found a set of operators with the union of their spectra given by the non-negative reals, now we need to include more operators, to obtain the negative reals. 

Consider the operators
\begin{equation}
n_{(\tau,\nu)}:=\tau(n+\nu\mathds{1}_{\mathbb{N}}).
\end{equation}
If we take $\tau$ to be either $1$ or $-1$, then the union of all $n_{(\tau,\nu)}$ is the whole real line, but its intersection is $\{0\}$, which is non-empty. This implies that we can not have neither $\tau=1$ and $\nu=1$, nor $\tau=-1$ and $\nu=0$ at the same time. With these restrictions, the family $\{n_{\mu}\}_{\mu\in\Lambda}$ has an index set of the form $\Lambda=\{-1\}\cross(0,1]\cup\{1\}\cross[0,1)$, and the union of their spectra is $\mathbb{R}$ with an empty intersection.

Now, we need to define the Hilbert space $\mathcal{H}_{\Lambda}$. In general, for any point $x\in\mathbb{R}$, we define the Hilbert space $\mathcal{H}_{x}$ as the space spanned by the basis vector $\ket{x}$, and for any interval $[a,b]$, we define the Hilbert space $\mathcal{H}_{[a,b]}$ as the space $L^{2}([a,b])$ of square-integrable functions. If $A,B$ are subsets of $\mathbb{R}$, define $\mathcal{H}_{A\cup B}=\mathcal{H}_{A}\oplus\mathcal{H}_{B}$, and $\mathcal{H}_{A\cross B}=\mathcal{H}_{A}\otimes\mathcal{H}_{B}$. 

Since $\Lambda=\{-1\}\cross(0,1]\cup\{1\}\cross[0,1)$, then
\begin{equation}
\mathcal{H}_{\Lambda}=\mathbb{C}\otimes(L^{2}((0,1])\oplus L^{2}([0,1))).
\end{equation}
However, if $X$ is any subset of $\mathbb{R}$ of Lebesgue measure $0$, then $L^{2}([a,b])$ is isomorphic to $L^{2}([a,b]-X)$. Therefore, $\mathcal{H}_{\Lambda}$ is isomorphic to $\mathbb{C}^{2}\otimes L^{2}([0,1])$ as a Hilbert space.

In order to define $J_{n}$ we need to say what summations over the set $\Lambda$ mean. We define such summations in terms of the structure of $\mathcal{H}_{\Lambda}$, which in this case leads to the substitution
\begin{equation}\label{eq:sum-rep}
\sum_{\mu\in\Lambda}\rightarrow\sum_{\tau\in\{-1,1\}}\int^{1}_{0}\dd{\nu},
\end{equation}
from which we can define 
\begin{equation}
J_{n}:=\sum_{\tau\in\{-1,1\}}\int^{1}_{0}\dd{\nu}n_{\tau,\nu}\otimes\ketbra{\tau}{\tau}\otimes\ketbra{\nu}{\nu}.
\end{equation}

Here we make an important observation. By the form of $J_{n}$, we can say that the Hilbert space it acts upon is given by $\mathcal{H}_{\mathbb{N}}\otimes\mathbb{C}^{2}\otimes L^{2}([0,1])$, where $\mathcal{H}_{\mathbb{N}}$ is the Fock space of a single mode. However, our construction ensures that $\sigma(J_{n})=\mathbb{R}$, and therefore we must have $\mathcal{H}_{\mathbb{N}}\otimes\mathbb{C}^{2}\otimes L^{2}([0,1])$ and $L^{2}(\mathbb{R})$ isomorphic as Hilbert spaces. 

To see this, consider the Hilbert spaces $H_{\sigma(n_{(\tau,\nu)})}$. Since the union of the sets $\sigma(n_{(\tau,\nu)})$ over $\Lambda$ is $\mathbb{R}$, we can build $\mathcal{H}_{\mathbb{R}}=L^{2}(\mathbb{R})$ by taking the direct sum
\begin{equation}
\bigoplus_{(\tau,\nu)\in\Lambda}\mathcal{H}_{\sigma(n_{(\tau,\nu)})}.
\end{equation}
For $\tau=1$, we have that $\mathcal{H}_{\sigma(n_{(1,\nu)})}=\mathcal{H}_{\mathbb{N}+\nu}$ for $\nu\in[0,1)$. The sets $\mathbb{N}+\nu$ have the same cardinality for every $\nu$, which implies that all $\mathcal{H}_{\mathbb{N}+\nu}$ are isomorphic to $\mathcal{H}_{\mathbb{N}}$. Therefore, we have that direct sum $\bigoplus_{\nu\in[0,1)}\mathcal{H}_{\mathbb{N}+\nu}$ is equal to $\mathcal{H}_{\mathbb{N}}\otimes L^{2}([0,1))$. Similarly, for $\tau=-1$ we have $\mathcal{H}_{n_{(-1,\nu)}}=\mathcal{H}_{\mathbb{N}-\nu}$ for $\nu\in(0,1]$. Applying the same argument as for the $\tau=1$ case, we have that $\mathcal{H}_{\mathbb{R}}$ is isomorphic to
\begin{equation}
\mathcal{H}_{\mathbb{N}}\otimes(L^{2}((0,1])\oplus L^{2}([0,1))),
\end{equation}
which is isomorphic to $\mathcal{H}_{\mathbb{N}}\otimes\mathbb{C}^{2}\otimes L^{2}([0,1])$. 

With this out of the way, and having found $J_{n}$, we are now in position to build $U_{p,J_{n}}$, the unitary that maps $p$ to $J_{n}$. To this end, we follow Alg.(\ref{alg:recipe_code}) and define the partial isometries 
\begin{equation}
V_{p,n_{(\tau,\nu)}}:=\sum_{\lambda\in(\tau(\mathbb{N}+\nu))}\ket{\lambda}_{p}\bra{\lambda}_{n_{(\tau,\nu)}}.
\end{equation}
This can also be written as
\begin{equation}
V_{p,n_{(\tau,\nu)}}=\sum_{m\in\mathbb{N}}\ket{\tau(m+\nu)}_{p}\bra{\tau(m+\nu)}_{n_{(\tau,\nu)}}.
\end{equation}
Now, since $n_{(\tau,\nu)}=\tau(n+\nu\mathds{1}_{\mathbb{N}})$, we can use the spectral theorem and write $\bra{\tau(m+\nu)}_{n_{(\tau,\nu)}}=\bra{m}_{n}$. 

Using a slight extension of the Dirac notation that is based on the theory of rigged Hilbert spaces, which we explain in Sec.\ref{sec:meth}, we can show that 
\begin{subequations}
\begin{align}
V^{\dagger}_{p,n_{(\tau,\nu)}}V_{p,n_{(\tau',\nu')}}&=\delta_{\tau,\tau'}\delta_{\nu,\nu'}\mathds{1}_{\tau(\mathbb{N}+\nu)}\text{, and}\label{eq:partial-p_n1}\\
V_{p,n_{(\tau,\nu)}}V^{\dagger}_{p,n_{(\tau',\nu')}}&=\delta_{\tau,\tau'}\delta_{\nu,\nu'}\sum_{\lambda\in\tau(\mathbb{N}+\nu)}\ketbra{\lambda}{\lambda}_{p}\label{eq:partial-p_n2}.
\end{align}
\end{subequations}

Therefore, using the definition of $U_{p,J_{n}}$ as described by Alg.(\ref{alg:recipe_code}) and the summation replacement defined in Eq.(\ref{eq:sum-rep}), we have that
\begin{equation}
U_{p,J_{n}}=\sum_{\tau\in\{-1,1\}}\int^{1}_{0}\dd{\nu}V_{p,n_{(\tau,\nu)}}\otimes(\bra{\tau}\otimes\bra{\nu}).
\end{equation}
Then, using Eqs.(\ref{eq:partial-p_n1},\ref{eq:partial-p_n2}) we can see that
\begin{equation}
U_{p,J_{n}}^{\dagger}U_{p,J_{n}}=U_{p,J_{n}}U^{\dagger}_{p,J_{n}}=\mathds{1}_{\mathbb{R}}.
\end{equation}
From the definition of $V_{p,n_{(\tau,\nu)}}$ can prove the identities
\begin{subequations}
\begin{align}
V^{\dagger}_{p,n_{(\tau,\nu)}}pV_{p,n_{(\tau,\nu)}}&=n_{(\tau,\nu)},\\
V_{p,n_{(\tau,\nu)}}n_{(\tau,\nu)}V^{\dagger}_{p,n_{(\tau,\nu)}}&=\sum_{\lambda\in(\tau\mathbb{N}+\nu)}\lambda\ketbra{\lambda}{\lambda}_{p},
\end{align}
\end{subequations}
which imply that 
\begin{equation}
(U_{p,J_{n}})J_{n}(U^{\dagger}_{p,J_{n}})=p,\quad (U_{p,J_{n}})p(U^{\dagger}_{p,J_{n}})=J_{n}.
\end{equation}

Finally, we define $\kappa_{n}$ as the map $\kappa_{n}[\cdot]=K_{n}[\cdot]K_{n}$ where we find the $K_{n}$ operator, by noticing that $n_{(1,0)}=n$, which implies that $K_{n}=\mathds{1}_{\mathbb{N}}\otimes\bra{1}\otimes\bra{0}$. Then, the $\Omega_{n}$ map, can be written as $\Omega_{n}[\cdot]=\Upsilon_{n}[\cdot]\Upsilon^{\dagger}_{n}$, where $\Upsilon_{n}=K_{n}U^{\dagger}_{p,J_{n}}$. With $\Omega_{n}$ we can now map any $\mathcal{S}_{T^{q}_{2\lambda\sqrt{\pi}}}$ code into some $\mathcal{S}_{R_{2\pi/N}}$ code. However, this mapping does not produce a valid $\mathcal{S}_{R_{2\pi/N}}$ code for all choices of $\lambda$ and $\ket{\Theta}$. 

For example, if we choose $\ket{\Theta}=\int_{\mathbb{R}}\dd{l}\ket{l}_{p}$ we see that
\begin{equation}
\Upsilon_{n}\int_{\mathbb{R}}\dd{l}\ket{l}_{p}=\sum_{m\in\mathbb{N}}\ket{m}_{n},
\end{equation}
which is the primitive state giving the ideal $\mathcal{S}_{R_{2\pi/N}}$. But we can also show that $\Upsilon_{n}T^{p}_{2\sqrt{\pi}/\lambda}{\Upsilon_{n}}^{\dagger}$ is non-zero only when $\lambda=k\sqrt{\pi}/l$ for some integers $k,l$. In particular, we have that
\begin{subequations}
\begin{align}
\Upsilon_{n}T^{q}_{2\pi/N}{\Upsilon_{n}}^{\dagger}&=R_{2\pi/N},\\
\Upsilon_{n}T^{p}_{2N}{\Upsilon_{n}}^{\dagger}&=(a\sqrt{n})^{2N},
\end{align}
\end{subequations}
which implies that $\Upsilon_{n}$ only gives a well defined mapping between $\mathcal{S}^{*}_{T^{q}_{2\lambda\sqrt{\pi}}}$ and $\mathcal{S}^{*}_{R_{2\pi/N}}$ when $\lambda=-\sqrt{\pi}/N$.

\textit{\textbf{Finding logical operators and detectable errors:}} To obtain the logical operators for $\mathcal{S}^{*}_{R_{2\pi/N}}$, we just need to apply $\Omega_{n}$ to the expressions for the logical operators for the $\mathcal{GKP}_{-\sqrt{\pi}/N}$ codes. We can see in Tab.\ref{tab:my_label_2} that for the operators $\Bar{Z}$, $\Bar{S}$, $\Bar{T}$ the calculations reproduce Tab.\ref{tab:my_label}. However, for the logical Pauli-X and the logical Hadarmard the calculation yields
\begin{subequations}
\begin{align}
\Bar{X}&=(a\sqrt{n})^{N}\text{, and}\\
\Bar{H}&=\frac{1}{\sqrt{2\pi}}\sum_{m,m'}e^{-i\pi\frac{mm'}{N^{2}}}\ketbra{m}{m'}_{n}.
\end{align}
\end{subequations}
Up to our knowledge, this is the first explicit form of the Hadamard gate for rotation-symmetric codes, which is result in itself.

To find the set of detectable errors, we just need apply $\Omega_{n}$ to the set of operators generated by $T^{q}_{\eta}$ with $\eta\in(-N,N)$ and $T^{p}_{\zeta}$ with $\zeta\in(-\pi/N,\pi/N)$. These calculations shows us that the values of $\eta$ for which the application of $\Omega_{n}$ to generators $T^{q}_{\eta}$ is non-trivial are exactly for the integers $-N+1,\dots,N-1$ excluding $0$, where the negative integers gives us the conjugate transpose of $(a\sqrt{N})^{l}$ for  $l=1,\dots,N-1$.

\begin{table}[]
    \centering
    \begin{tabular}{|c||c|c|}
    \hline
         & $\mathcal{S}^{*}_{R_{2\pi/N}}$ & $\mathcal{GKP}_{\lambda}$\\
    \hline
    \hline
    $\Bar{Z}$ & $e^{i\frac{\pi}{N}n}$ & $e^{-i\lambda\sqrt{\pi}p}$ \\
    \hline
    $\Bar{S}$ & $e^{i\frac{\pi}{2N^{2}}n^{2}}$ & $e^{i\frac{\lambda^{2}}{2}p^{2}}$\\
    \hline
    $\Bar{T}$ & $e^{i\frac{\pi}{4N^{4}}n^{4}}$ & $e^{i\frac{\lambda^{4}}{4\pi}\hat{p}^{4}}$\\
    \hline
    $\Bar{X}$ & $(a\sqrt{n})^{N}$ & $e^{i(\sqrt{\pi}/\lambda)q}$\\
    \hline
    $\Bar{H}$ & $\mathbf{\frac{1}{\sqrt{2\pi}}\sum_{m,m'}e^{-i\pi\frac{mm'}{N^{2}}}\ketbra{m}{m'}_{n}}$ & $e^{i\pi/2\left(\lambda^{2}p^{2}+\frac{q^{2}}{\lambda^{2}}\right)}$\\
    \hline
    \end{tabular}
    \caption{Side-by-side comparison of single-qubit logical operators between the $\mathcal{S}^{*}_{R_{2\pi/N}}$ code and the $GKP_{\lambda}$ code. The expression in boldface is the explicit form of the Hadamard gate, and to the best of our knowledge is not present in the literature.}
    \label{tab:my_label_2}
\end{table}

\section{Methods: Partial Isometries and Decompositions of maps}\label{sec:meth}

In this section we introduce the mathematical methods used by Alg.\ref{alg:recipe_code} to produce the correct mappings between unitary-invariant code families. We also give a brief historical overview of the its early developments, and how we have generalized them into the framework we present in this work.

The problem of mapping operators with different spectra has an interesting connection with the problem of finding the quantum version of classical canonical transformations \cite{mello_nonlinear_1975}. In general, canonical transformations between classical variables $(q,p)$ and $(q',p')$, for example
\begin{equation}
p'=\frac{p^{2}-q^{2}}{2},\quad q'=\ln{\abs{p+q}},
\end{equation}
can not be translated directly into transformations between operators for several reasons, such as the map not being bijective, as it is the case of this example.

However, since this transformation is canonical, the operator versions of both $(q,p)$ and $(q',p')$ must satisfy
\begin{equation}
[q,p]=i\mathds{1}\text{, and }[q',p']=i\mathds{1},
\end{equation}
where $\mathds{1}$ is the identity. Therefore, this transformation must preserve commutation relations, which implies that 
\begin{equation}
p'=UpU^{\dagger}\text{, and }q'=UqU^{\dagger},
\end{equation}
with $U$ unitary. Nevertheless, often we have that the operator versions of either $p'$ or $q'$ as functions of $p$ and $q$, even when well defined, have a spectrum that is different from the spectrum of $p$ and $q$. By this fact, we can see that such an unitary may not exist, since unitary maps always preserve the spectrum of any operator. 

The first approach to obtain the correct representation of these canonical transformations was given by Mello and Mochinsky in \cite{mello_nonlinear_1975,moshinsky_canonical_1978,moshinsky_canonical_1979,deenen_canonical_1980}. Their main result was to show that, although one can not find an unitary map taking the operators $p,q$ to the operators $p',q'$, one can find an unitary map that takes operators $J_{1}(p),J_{2}(q)$, into operators ${J'}_{1}(p'),{J'}_{2}(q')$, such that all operators have the same spectrum and degeneracy. This method was refined by Plebanski and Seligman in \cite{plebanski_partial_1980}, where they showed that, essentially, a unitary map $U$ representing the canonical transformation could be built using \textit{partial isometries}.

\begin{definition}
Given any Hilbert space $\mathcal{H}$, a \textit{partial isometry}, is any operator $V$ such that
\begin{equation}
VV^{\dagger}=K,\quad
V^{\dagger}V=L,
\end{equation}
where $K$ and $L$ are projectors into subspaces of $\mathcal{H}$.  If $L=\mathds{1}_{\mathcal{H}}$ we say that $V$ is a \textit{semi-unitary} operator.
\end{definition}

The definition above is equivalent to saying that a partial isometry is any map that acts as an isometry in the subspace orthogonal to its kernel (i.e. its co-image) and is not surjective, while a semi-unitary operator is a partial isometry with trivial kernel. If we have a set of partial isometries $\{V_{\lambda}\}_{\lambda\in\Lambda}$, with $\Lambda$ countable, taking a Hilbert space $\mathcal{H}$ to another Hilbert space $\mathcal{H'}$ such that the union of their images and co-images is $\mathcal{H'}$ and $\mathcal{H}$, respectively, then we can use this set to build a unitary map between $\mathcal{H}$ and $\mathcal{H}'$, as shown in Thm.\ref{thm:unit-part}.

Therefore, since a unitary transformation between hermitian operators with different spectra does not exist, the best option is to see whether a partial isometry is able to do this. According to \cite{plebanski_partial_1980}, for any pair of self-adjoint operators, $X$ and $Y$, such that either their pure-point spectra intersect, or their continuous spectra intersect, or both, there is a partial isometry $V^{X}_{Y}$ between $X$ and $Y$, as given by Thm.\ref{thm:can-par-iso}. Essentially, $V^{X}_{Y}$ is a unitary map between $X_{K}=KXK$ and $Y_{L}=LYL$, where $K$ and $L$ are the projectors into the image and co-image of $V^{X}_{Y}$, respectively.

Thms.\ref{thm:unit-part},\ref{thm:can-par-iso} are the foundations of Alg.\ref{alg:recipe_code}. The idea behind it is the following: if we know beforehand the spectra of the operators $A$ and $B$, we can look for families of operators $\{A_{\mu}\}_{\mu\in M}$ and $\{B_{\nu}\}_{\nu\in N}$ such that the union of the spectra of each family are equal, and the spectra of each operator in the same family has no intersection. Then, we use the canonical partial isometries, which are guaranteed to exist, to find a unitary map between the direct sum of the operators in one family to the direct sum of operators in the other.

This stands to operators with both a pure-point spectrum and a continuous spectrum. If $X$ only has a pure-point spectrum and $Y$ only has a continuous spectrum, there is no canonical partial isometry between $X$ and $Y$, even when their spectra intersect. In this case, we need to define the operator $V^{X}_{Y}$ by
\begin{equation}\label{eq:gen_can_par_iso}
V^{X}_{Y}=\sum_{\mu\in\sigma(X)\cap\sigma(Y)}\ket{\mu}_{X}\bra{\mu}_{Y},
\end{equation}
where $\ket{\mu}_{X}$ is a proper eigenvector of $X$ and $\ket{\mu}_{Y}$ is a generalized eigenvector of $Y$, i.e., it is Dirac normalized. 

The operator $V^{X}_{Y}$ is a well defined map over the Hilbert space spanned by the generalized eigenvectors $\ket{\mu}_{Y}$, which we call $\mathcal{H}_{Y}$. However, ${V^{X}_{Y}}^{\dagger}$ is not well-defined, since the $\ket{\mu}_{Y}$ vectors themselves are not normalizable and do not belong to $\mathcal{H}_{Y}$. Nevertheless, ${V^{X}_{Y}}^{\dagger}$ can be defined as an operator between rigged Hilbert spaces, which are objects that generalize the usual Hilbert space framework to include the Dirac normalized vectors in a rigorous way (cf. Sec.\ref{app_sec:rigged}). In this general framework, we can define $V^{X}_{Y}$ as the \textit{generalized partial isometry} from $Y$ to $X$. Or, in other words:

\begin{definition}{\textit{Generalized partial isometries}}
If $X$ and $Y$ are such that $\sigma(X)=\sigma_{p}(X)$, $\sigma(Y)=\sigma_{c}(Y)$, and $\sigma(X)\cap\sigma(Y)\neq\varnothing$, then we call $V^{X}_{Y}$, given by Eq.(\ref{eq:gen_can_par_iso}), a generalized partial isometry.

A generalized partial isometry must always satisfy
\begin{subequations}
\begin{align}
\tensor[_X]{\mel{\nu}{VV^{\dagger}}{\nu'}}{_X}&=
\begin{cases}
&\delta(\nu-\nu')\text{, if }\nu,\nu'\in\sigma(X)\cap\sigma(Y),\\
&0\text{, otherwise}
\end{cases}
\\
\tensor[_Y]{\mel{\mu}{V^{\dagger}V}{\mu'}}{_Y}&=
\begin{cases}
&\delta_{\mu,\mu'}\text{, if }\mu,\mu'\in\sigma(X)\cap\sigma(Y),\\
&0\text{, otherwise}
\end{cases}
\end{align}
\end{subequations}
\end{definition}

We proved in Sec.\ref{app_sub:code-recip} that families of generalized partial isometries allow us to build unitary representations of transformations between operators with a pure-point spectrum and a continuous spectrum. An informal description of generalized partial isometries already appeared in \cite{plebanski_partial_1980}, but its formalization as maps between rigged Hilbert spaces is another result of our work.

\section{Conclusion}\label{sec:conc}

In conclusion, we provided a grand unification of bosonic codes by which we can obtain the set of logical operators and the set of detectable errors of all known bosonic codes, like the $\mathcal{GKP}$, cat-code and binomial, as well as all codes that are invariant under an arbitrary unitary transformation. In order to do this, first we gave a rigorous definition of unitary-invariant code families, and introduced the family of translation-symmetric codes as an example. We also gave a rigorous definition of what is an ideal code of an unitary-invariant family, and showed that the $\mathcal{GKP}_{\lambda}$ is the unique, ideal translation-symmetric code.

Next, we have provided an algorithm to produce maps between the generators of the unitary stabilizers of two arbitrary unitary-invariant code families. This algorithm is based on our generalization of a method developed to represent classical canonical transformations in quantum mechanics into the rigged Hilbert space formalism. This algorithm also allows us to map the set of logical operators and the set of detectable errors between code families. As an example of application, we used it to show how to map rotation-symmetric codes into translation-symmetric codes, obtaining the first closed expression of the logical Hadamard gate over ideal rotation-symmetric codes that we know of.

We believe that our framework could be used to create codes that are invariant under the dynamics of more general Hamiltonians, and the design of codes tailored to resist specific error models. For example, it would be interesting to study codes that are invariant under the dynamics generated by Gaussian Hamiltonians, since they are quadratic. It would also be interesting to show how the set of logical and detectable error operators of unitary-symmetric codes behave under small, but otherwise arbitrary perturbations of their generators. For example, this might be useful to describe how to change a particular  rotation-symmetric code, like a cat-code, to tailor it to an error model modified by small perturbations. In summary, we believe that our framework will be a useful tool for research on bosonic codes. 

\section*{Acknowledgements}
We acknowledge financial support from the Serrapilheira Institute (grant number Serra-1709-17173), and the Brazilian agencies CNPq (PQ grant No. 305420/2018-6) and FAPERJ (JCN E-26/202.701/2018).

\appendix

\section{Rigged Hilbert Spaces}\label{app_sec:rigged}

In this section we deal with objects called rigged Hilbert spaces. They allow us to rigorously work with the non-normalizable vectors used indiscriminately in quantum mechanics. The primary examples of such vectors are the eigenstates of position and momentum operators, which do not belong to any Hilbert space and yet are still used to write physical, normalizable states. In Sec.\ref{app_sub:nuclear}, we define what are rigged Hilbert spaces, in Sec.\ref{app_sub:spectral} we use it to state the rigged space version of the spectral theorem, and introduce the rigged spaces that we use in our work.

\subsection{Nuclear spaces and Linear Functionals}\label{app_sub:nuclear}

Before defining rigged Hilbert spaces we need to define what is called a nuclear space, and in order to do that we need some preliminary definitions. The definitions here are taken from \cite{de_la_madrid_modino_quantum_2001}.

\begin{definition}{\textit{Comparable norms:}}
Let $\mathcal{D}$ be a vector space, and let $\norm{\ }_{1}$ and $\norm{\ }_{2}$ be two norms in $\mathcal{D}$. We say that these two norms are \textit{comparable} if for all $\psi\in\mathcal{D}$, there is a constant $C>0$ such that
\begin{equation}
\norm{\psi}_{1}\leq C\norm{\psi}_{2}.
\end{equation}
In this case, $\norm{\ }_{1}$ is said to be \textit{weaker} than $\norm{\ }_{2}$, and $\norm{\ }_{2}$ is said to be \textit{stronger} than $\norm{\ }_{2}$.
\end{definition}

\begin{definition}{\textit{Compatible norms:}}
Let $\mathcal{D}$ be a vector space, and let $\norm{\ }_{1}$ and $\norm{\ }_{2}$ be two norms in $\mathcal{D}$. We say that these two norms are \textit{compatible} if any sequence $\{\psi_{n}\}_{n\in\mathbb{N}}$ that converges to $0$ in $\norm{\ }_{1}$, also converges to $0$ in $\norm{\ }_{2}$ and vice-versa.
\end{definition}

\begin{definition}{\textit{Countably Hilbert Space:}}
Let $\mathcal{D}$ be a vector space. We say that $\mathcal{D}$ is a countable Hilbert space, if for any $\psi,\phi\in\mathcal{D}$ there is a sequence of inner products $\{(\phi;\psi)_{p}\}_{p\in\mathcal{N}}$ with
\begin{equation}
(\phi;\psi)_{1}\leq\cdots\leq(\phi;\psi)_{p}\leq\cdots,
\end{equation}
such that the norms
\begin{equation}
\norm{\psi}_{p}:=\sqrt{(\psi,\psi)_{p}},
\end{equation}
are comparable and compatible.
\end{definition}

Informally, a countably Hilbert space is a vector space with an additional structure that makes it a Hilbert space with several notions of "orthogonality" between its elements. This allows us to introduce a special kind of topology on this vector space with interesting properties.

\begin{definition}{\textit{Fréchet Space:}}
Let $\mathcal{D}$ be a countably Hilbert space. The sequence of norms $\{\norm{\ }_{p}\}_{p\in\mathbb{N}}$ generates a topology on $\mathcal{D}$ that we call $\tau_{\mathcal{D}}$. A sequence $\{\psi_{n}\}_{n\in\mathbb{N}}$ converges to $0$ in $\tau_{\mathcal{D}}$ if and only if they converge to $0$ in all norms $\norm{\ }_{p}$ for all $p\in\mathbb{N}$. We call $\mathcal{D}$ a \textit{Fréchet space}, if $\mathcal{D}$ is complete with respect to $\tau_{\mathcal{D}}$.
\end{definition}

 Usually, the subspace $\mathcal{D}$ is defined with respect to an algebra of observables $\mathcal{A}$, as the maximal invariant subspace of this algebra. The topology of $\mathcal{D}$ is generated by a sequence of inner products defined by 
\begin{equation}
(\phi;\psi)_{n}=\mel{\phi}{A^{n}}{\psi}
\end{equation}
for some hermitian operator $A$, which is usually the Hamiltonian of the of problem at hand. When we demand $\mathcal{D}$ built in this to be a Fréchet space, any linear combination of powers of $A$ is becomes a continuous, well defined operator on $\mathcal{D}$. This fact is seen more easily in the following characterization.

\begin{proposition}{\textit{Condition for closure under $\tau_{\mathcal{D}}$}}
Let $\mathcal{D}$ be a countably Hilbert space. Let $\mathcal{D}_{p}$ be the completion of $\mathcal{D}$ under the norm $\norm{\ }_{p}$. Since for all $\psi\in\mathcal{D}$ we have that 
\begin{equation}
\norm{\psi}_{1}\leq\norm{\psi}_{2}\leq\cdots\leq\norm{\psi}_{p}\leq\cdots,
\end{equation}
we have that
\begin{equation}
\mathcal{D}\subset\cdots\subset\mathcal{D}_{p}\subset\cdots\subset\mathcal{D}_{2}\subset\mathcal{D}_{1}.
\end{equation}
Then, $\mathcal{D}$ will be a Fréchet space, if and only if 
\begin{equation}
\mathcal{D}=\bigcap_{p\in\mathbb{N}}\mathcal{D}_{p}.
\end{equation}
\end{proposition}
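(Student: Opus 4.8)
The plan is to unpack the two topological notions in play. By definition of $\tau_{\mathcal{D}}$, a sequence $\{\psi_{n}\}\subset\mathcal{D}$ is $\tau_{\mathcal{D}}$-Cauchy precisely when it is $\norm{\ }_{p}$-Cauchy for every $p\in\mathbb{N}$, and $\mathcal{D}$ being a Fréchet space means every $\tau_{\mathcal{D}}$-Cauchy sequence has a limit in $\mathcal{D}$. Before the main argument I would record the structural fact that makes the chain $\mathcal{D}\subset\cdots\subset\mathcal{D}_{p}\subset\cdots\subset\mathcal{D}_{1}$ meaningful: comparability of the norms means the inequalities $\norm{\ }_{p}\leq\norm{\ }_{q}$ (for $p\leq q$) extend by continuity from $\mathcal{D}$ to the completion $\mathcal{D}_{q}$, so each canonical map $\mathcal{D}_{q}\hookrightarrow\mathcal{D}_{p}$ is continuous and norm non-increasing, while compatibility of the norms makes it injective. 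Hence all the $\mathcal{D}_{p}$ may be regarded as linear subspaces of $\mathcal{D}_{1}$, and $\bigcap_{p\in\mathbb{N}}\mathcal{D}_{p}$ is a well-defined subspace that contains $\mathcal{D}$; the content to be proved is therefore only the reverse inclusion, conditioned on Fréchet-ness, and its converse.

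For the \emph{if} part — assuming $\mathcal{D}=\bigcap_{p\in\mathbb{N}}\mathcal{D}_{p}$ and deriving Fréchet-ness — I would take an arbitrary $\tau_{\mathcal{D}}$-Cauchy sequence $\{\psi_{n}\}\subset\mathcal{D}$. For each $p$ it is $\norm{\ }_{p}$-Cauchy, hence converges in the completion $\mathcal{D}_{p}$ to some $\psi^{(p)}$. Using the continuous embedding $\mathcal{D}_{q}\hookrightarrow\mathcal{D}_{p}$ for $q\geq p$, the convergence $\psi_{n}\to\psi^{(q)}$ in $\norm{\ }_{q}$ forces $\psi_{n}\to\psi^{(q)}$ in $\norm{\ }_{p}$ as well, so by uniqueness of limits in $\mathcal{D}_{p}$ one gets $\psi^{(p)}=\psi^{(q)}=:\psi$, independent of $p$. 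Then $\psi\in\bigcap_{p\in\mathbb{N}}\mathcal{D}_{p}=\mathcal{D}$, and since $\psi_{n}\to\psi$ in every $\norm{\ }_{p}$ we have $\psi_{n}\to\psi$ in $\tau_{\mathcal{D}}$; thus $\mathcal{D}$ is $\tau_{\mathcal{D}}$-complete.

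For the \emph{only if} part, suppose $\mathcal{D}$ is Fréchet and take $\psi\in\bigcap_{p\in\mathbb{N}}\mathcal{D}_{p}$. I would build an approximating sequence by a diagonal choice: since $\mathcal{D}$ is dense in each $\mathcal{D}_{k}$, pick $\chi_{k}\in\mathcal{D}$ with $\norm{\psi-\chi_{k}}_{k}<1/k$. By monotonicity of the norms, for any fixed $p$ and all $k\geq p$ one has $\norm{\psi-\chi_{k}}_{p}\leq\norm{\psi-\chi_{k}}_{k}<1/k$, so $\chi_{k}\to\psi$ in $\norm{\ }_{p}$ for every $p$; in particular $\{\chi_{k}\}$ is $\tau_{\mathcal{D}}$-Cauchy. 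By Fréchet-ness it has a limit $\psi'\in\mathcal{D}$, and uniqueness of limits in each $\mathcal{D}_{p}$ (guaranteed by compatibility of the norms) forces $\psi'=\psi$, so $\psi\in\mathcal{D}$. Together with the trivial inclusion $\mathcal{D}\subseteq\bigcap_{p\in\mathbb{N}}\mathcal{D}_{p}$ this yields the claimed equality.

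The step I expect to be the main obstacle is the bookkeeping in the \emph{only if} direction: one must make sure that $\norm{\ }_{p}$ is genuinely defined on all of $\mathcal{D}_{k}$ (not merely on $\mathcal{D}$) and that the domination $\norm{\ }_{p}\leq\norm{\ }_{k}$ survives passage to the completion — this is exactly where comparability of the whole sequence of norms is used — and then to verify that the $\tau_{\mathcal{D}}$-limit of the diagonal sequence, which a priori lives in $\mathcal{D}$, is identified with the prescribed $\psi\in\mathcal{D}_{1}$, which relies on the uniqueness (Hausdorff) property secured by compatibility. The rest is a routine combination of density of $\mathcal{D}$ in each $\mathcal{D}_{p}$ and continuity of the linking maps.
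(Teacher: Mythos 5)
Your proof is correct, but there is nothing in the paper to compare it with: the paper's ``proof'' of this proposition is a single sentence deferring to de~la~Madrid's thesis (``The proof follows directly from the definition of $\tau_{\mathcal{D}}$. See [25].''). What you have written out is the standard argument that such a reference supplies. The \emph{if} direction (completeness of $\mathcal{D}$ from $\mathcal{D}=\bigcap_{p}\mathcal{D}_{p}$) is carried out correctly by promoting a $\tau_{\mathcal{D}}$-Cauchy sequence to a $\norm{\ }_{p}$-Cauchy sequence for every $p$, identifying the per-$p$ limits through the continuous embeddings $\mathcal{D}_{q}\hookrightarrow\mathcal{D}_{p}$, and then landing in the intersection. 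The \emph{only if} direction via the diagonal choice $\chi_{k}\in\mathcal{D}$ with $\norm{\psi-\chi_{k}}_{k}<1/k$, together with the monotone extension of the norms to the completions, is the right move and works.

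One small imprecision worth flagging: in the last paragraph you attribute ``uniqueness of limits in each $\mathcal{D}_{p}$'' to compatibility of the norms. Uniqueness of limits in each $\mathcal{D}_{p}$ is automatic, since each $\mathcal{D}_{p}$ is a Hilbert space and therefore Hausdorff. Where compatibility genuinely enters is earlier, and you do point to it in your preliminary paragraph: it is what makes each canonical map $\mathcal{D}_{q}\hookrightarrow\mathcal{D}_{p}$ injective, so that the chain of inclusions and the intersection $\bigcap_{p}\mathcal{D}_{p}\subset\mathcal{D}_{1}$ are well defined in the first place. (Strictly, the Gelfand--Shilov form of compatibility is needed: a sequence Cauchy in both norms that tends to $0$ in one must tend to $0$ in the other; the paper's phrasing of ``compatible'' is looser than this, but the proposition and your argument both implicitly rely on the sharper version.) With that caveat, the argument is sound.
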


\begin{proof}
The proof follows directly from the definition of $\tau_{\mathcal{D}}$. See \cite{de_la_madrid_modino_quantum_2001}.
\end{proof}

In our last example, since $(\ ;\ )_{p}$ was defined as the expected value of a power of $A$, each $\mathcal{D}_{n}$ is the domain over which $A^{n}$ is a bounded operator. Therefore $\mathcal{D}$ is the space over which any converging Taylor series over $A$ is well-defined, since the action of all powers of $A$ are well defined in $\mathcal{D}$. There is yet another equivalent definition of $\tau_{\mathcal{D}}$ that allows us to forgo with requiring the norms to be comparable, which is very useful for applications.

\begin{definition}{\textit{Alternate definition of $\tau_{\mathcal{D}}$}}
Let $\mathcal{D}$ be a linear space, and let $P:=\{\norm{\ }_{\mu}\}_{\mu\in M}$ be and indexed collection of compatible norms over $\mathcal{D}$. We say that $P$ is \textit{separating}, if for any $\psi\in\mathcal{D}$ with $\psi\neq0$, there exists a norm $\norm{\ }_{\mu'}\in P$ such that $\norm{\psi}_{\mu'}\neq0$. Let $p\in P$ be a norm such that for every $\psi\in\mathcal{D}$ we have
\begin{equation}
p(\psi)=\underset{M}{\max}\{\norm{\psi}_{\mu}\}_{\mu\in M}.
\end{equation}
Then, the topology $\tau_{\mathcal{D}}$ is the topology generated by taking arbitrary unions and countable intersections of the sets
\begin{equation}
B^{p}_{\epsilon}(\psi):=\{\phi\in\mathcal{D}|p(\phi-\psi)<\epsilon\}.
\end{equation}
\end{definition}

We are now in position to introduce the first step in the construction of a rigged Hilbert space.

\begin{definition}{\textit{Nuclear space:}}
Let $\mathcal{D}$ be a Fréchet Space. Let $\mathcal{D}_{p}$ be the closure of $\mathcal{D}$ with respect to the norm $\norm{\ }_{p}$. Given any $\psi\in\mathcal{D}$ we write is using the symbol $\psi^{[p]}$, when we consider $\psi$ as an element of $\mathcal{D}_{p}$. We say that $\mathcal{D}$ is a \textit{nuclear space} if all of the identity operators
\begin{equation}
{Id}^{m}_{n}:\underset{\psi^{[n]}}{\mathcal{D}\subset\mathcal{D}_{n}}\rightarrow\underset{{Id}^{m}_{n}(\psi^{[n]})=\psi^{[m]}}{\mathcal{D}\subset\mathcal{D}_{m}},
\end{equation}
have a finite trace.
\end{definition}

Essentially, the condition of nuclearity is a technical condition that guarantees that other natural topologies we could choose for $\mathcal{D}$ coincide with the one we have defined. This condition also implies that non-trivial nuclear spaces cannot be Hilbert spaces since, as explained in \cite{de_la_madrid_modino_quantum_2001}, all nuclear Hilbert spaces are finite dimensional. Therefore, nuclear spaces are very "constrained" in a way, such that we can be more loose with the definition of operators over them.

\begin{definition}{\textit{Anti-dual Space:}}
Let $\mathcal{D}$ be a nuclear space. An \textit{anti-linear} functional over $\mathcal{D}$ is map $F:\mathcal{D}\rightarrow\mathbb{C}$ such that
\begin{equation}
F[\alpha\psi_{1}+\beta\psi_{2}]=\alpha^{*}F[\psi_{1}]+\beta^{*}F[\psi_{2}].
\end{equation}
An anti-linear functional $F$ is continuous with respect to $\tau_{\mathcal{D}}$ if and only if there exists a constant $K>0$, and a norm $\norm{\ }_{p}$ such that
\begin{equation}
\abs{F[\psi]}\leq K\norm{\psi}_{p},
\end{equation}
for all $\psi\in\mathcal{D}$. A sequence of anti-linear functionals $\{F_{n}\}_{n\in\mathbb{N}}$ is said to converge to $F$, if and only if the number sequence $\{F_{n}[\psi]\}_{n\in\mathbb{N}}$ converges to $F[\psi]$, for any $\psi\in\mathcal{D}$. This notion of convergence induces a topology over the set of all anti-linear functionals. The closure of set of all continuous anti-linear functionals over $\mathcal{D}$ with respect to this topology is called the \textit{anti-dual space} over $\mathcal{D}$, and is denoted by $\mathcal{D}^{\cross}$. We call the topology over which $\mathcal{D}^{\cross}$ is complete by $\tau_{\mathcal{D}^{\cross}}$ 
\end{definition}

Here, we work with anti-linear functionals instead of linear functionals because, by the Riesz representation theorem, an anti-linear functional $F$ over any Hilbert space $\mathcal{H}$ can be represented as
\begin{equation}\label{eq:dirac-ket-def}
F[\phi]:=\braket{\psi}{F},
\end{equation}
where $\ket{F}$ is seen as an element of $\mathcal{H}$. Therefore, the space of anti-linear functionals is the space where the "kets" of Dirac's notation live. Similarly, the "bra" can be defined by taking the conjugate of Eq.(\ref{eq:dirac-ket-def}) and identifying $\bra{F}$ with the conjugate of $F$, namely $F*$, which is a linear functional.

\begin{definition}{\textit{Constructive definition of $\mathcal{D}^{\cross}$:}}
Let $\mathcal{D}$ be a nuclear space. Let $\mathcal{D}_{p}$ be the Hilbert space formed taking any $\phi,\psi\in\mathcal{D}$ and defining the inner product $(\phi,\psi)_{p}$. The anti-dual space over $\mathcal{D}$ is the space $\mathcal{D}^{\cross}$ given by
\begin{equation}
\mathcal{D}^{\cross}=\bigcup_{p\in\mathbb{N}}\mathcal{D}^{\cross}_{p}, 
\end{equation}
where each $\mathcal{D}^{\cross}_{p}$ is the anti-dual space of $\mathcal{D}_{p}$, which is isomorphic to $\mathcal{D}_{p}$.
\end{definition}

This decomposition of $\mathcal{D}^{\cross}$ as a union of Hilbert spaces $\mathcal{D}_{p}$ makes explicit the fact that $\mathcal{D}^{\cross}$ is not, in general, a Hilbert space itself. There is no general prescription 

\begin{definition}{\textit{Rigged Hilbert Space}}
A \textit{rigged Hilbert space} (abbreviated \textit{RHS}) or \textit{Gelfand-triple} is a triplet of spaces $\mathcal{D}$, $\mathcal{H}$ and $\mathcal{D}^{\cross}$ such that there is a sequence of inner products satisfying
\begin{equation}
(\phi,\psi)_{0}\leq(\phi,\psi)_{1}\leq\cdots\leq(\phi,\psi)_{p}\leq\cdots,
\end{equation}
for which (i:) $\mathcal{D}$ is a nuclear space with respect to $\{(\ ;\ )\}_{p}$ for $p\geq1$; (ii:) The space $\mathcal{H}$ is the Hilbert space $\mathcal{D}_{0}$, and (iii:) $\mathcal{D}^{\cross}$ is the anti-dual of $\mathcal{D}$ such that the map $\iota:\mathcal{D}\rightarrow\mathcal{H}$, given by $\iota(\psi)=\psi$, and its adjoint $\iota^{\cross}:\mathcal{H}^{\cross}\rightarrow\mathcal{D}^{\cross}$ are continuous.
\end{definition}

The maps $\iota$ and $\iota^{\cross}$ work as immersions of $\mathcal{D}$ into $\mathcal{H}$ and of $\mathcal{H}^{\cross}\simeq\mathcal{H}$ into $\mathcal{D}^{\cross}$, respectively. We represent this fact by writing an RHS as a system of inclusions $\mathcal{D}\subset\mathcal{H}\subset\mathcal{D}^{\cross}$. Before proceeding with other constructions, let us summarize what we have seen so far with an example.

\begin{example}{\textit{Schwartz space - Part 1:}}
Consider the algebra of operators generated by the position and momentum operators $q$ and $p$. This algebra has a representation over the Hilbert space of square-integrable functions over $\mathbb{R}$, called $L^{2}(\mathbb{R})$, with inner product given by
\begin{equation}
(f;g)=\int^{\infty}_{-\infty}\dd{x}f^{*}(x)g(x).
\end{equation}
Namely, $q$ and $p$ act as the multiplication and derivative operators over $L^{2}(\mathbb{R})$. However, neither of those are continuous over $L^{2}(\mathbb{R})$ with the topology derived from this inner product, since neither $q$ nor $p$ are bounded in $L^{2}(\mathbb{R})$. 

In order to find a space for which any Taylor series in $q$ and $p$ are continuous, we need to a subspace in which any powers of both $q$ and $p$ are bounded. Therefore, define $\mathcal{D}_{Sch}$ as
\begin{equation}
\mathcal{D}_{Sch}=\bigcap_{n\in\mathbb{N}\backslash{\{0\}}}\mathcal{D}(q^{n})\cap\bigcap_{m\in\mathbb{N}\backslash{\{0\}}}\mathcal{D}(p^{m}),
\end{equation}
where $\mathcal{D}(A^{n})$ is the subspace of $L^{2}(\mathbb{R})$ for which $A^{n}$ is bounded, for any operator $A$. Any element $\omega\in\mathcal{D}_{Sch}$ must then satisfy the conditions
\begin{equation}
\norm{q^{n}p^{m}\omega}=\int^{\infty}_{-\infty}\dd{x}\abs{x^{m}\dv[n]{\omega(x)}{x}(x)}^{2}<\infty,
\end{equation}
for all $m,n\in\mathbb{N}\backslash{\{0\}}$. The space $\mathcal{D}_{Sch}$ is called the \textit{Schwartz space}, and plays a major role in the theory of the Fourier transform and of tempered distributions.

The norms $P=\norm{\psi}_{(n,m)\in\mathbb{N}\cross\mathbb{N}\backslash{\{(0,0)\}}}:=\norm{q^{n}p^{m}\psi}$ are compatible and separating \cite{de_la_madrid_modino_quantum_2001,laan_rigged_2019}. Therefore, $\mathcal{D}_{Sch}$ is in fact complete with respect to the topology $\tau_{\mathcal{D}_{Sch}}$ derived from this set of norms. Now, since an operator $A$ is continuous in $\tau_{\mathcal{D}_{Sch}}$ if and only if it is bounded with respect to one of the norms in $P$, we have that any convergent (normally ordered) Taylor series in classical variables $q,p$ is well defined in $\mathcal{D}_{Sch}$ when we exchange the classical variables by their operator counterparts. It is also easy to see that, since the norm over $L^{2}(\mathbb{R})$ is given by $\norm{q^{0}p^{0}\omega}$ for all $\omega\in\mathcal{D}$, the closure of $\mathcal{D}_{Sch}$ with respect to this norm is exactly $L^{2}(\mathbb{R})$. 
\end{example}

This illustrates our claims about $\mathcal{D}$ being the subset of a Hilbert space over which all observables are well-defined. However, to illustrate the usefulness of an RHS we still need some more machinery, which is the subject of the next subsection.

\subsection{The Nuclear Spectral Theorem and ambiguities in Dirac's notation}\label{app_sub:spectral}

Here we give the most important application of RHS to quantum mechanics in general. What we have is a way to use an RHS as the space containing the "generalized eigenvectors" of an operator with a continuous spectra. This fact is best expressed by the Nuclear spectral theorem, which is our subject for now. But first, a few preliminary definitions

\begin{definition}{\textit{Anti-dual extensions:}}
Let $A$ be a continuous operator over the nuclear space $\mathcal{D}$, of an RHS. We call an operator $A^{\cross}$ the \textit{anti-dual extension of }$A$ if for any anti-linear functional $F\in\mathcal{D}^{\cross}$ we have
\begin{equation}
(A^{\cross}F)[\psi]=F[A\psi],
\end{equation}
for all $\psi\in\mathcal{D}$.
\end{definition}

\begin{definition}{\textit{Generalized eigenvectors:}}
Let $\mathcal{D}\subset\mathcal{H}\subset\mathcal{D}^{\cross}$ be a rigged Hilbert space and $A$ be a self-adjoint, $\tau_{\mathcal{D}}$ continuous operator. Then, a \textit{generalized eigenvector} of $A$, corresponding to a \textit{generalized eigenvalue} $\lambda$ in the spectrum of $A$, is an anti-linear functional $F_{\lambda}\in\mathcal{D}^{\cross}$ such that
\begin{equation}
A^{\cross}F_{\lambda}=\lambda F_{\lambda}.
\end{equation}
\end{definition}

\begin{definition}{\textit{Cyclic operators:}}
An operator $A$ defined over a domain $\mathcal{D}(A)$ of a Hilbert space $\mathcal{H}$ is \textit{cyclic} if there is at least one $\psi\in\mathcal{D}(A)$ such that $\{A^{k}\psi\}_{k\in\mathbb{N}}$ spans $\mathcal{H}$.
\end{definition}

\begin{theorem}{\textit{The Nuclear Spectral Theorem:}}
Let $\mathcal{D}\subset\mathcal{H}\subset\mathcal{D}^{\cross}$ be a rigged Hilbert space and $A$ be a cyclic, self-adjoint, $\tau_{\mathcal{D}}$ continuous operator. Then, for each $\lambda\in\sigma(A)$, there exists;
\begin{enumerate}
    \item An anti-linear functional $F_{\lambda}:=\ket{\lambda}_{A}$ such that $A^{\cross}\ket{\lambda}_{A}=\lambda\ket{\lambda}_{A}$, or in other words,
    \begin{equation}
    \braket{A\psi}{\lambda}_{A}=\mel{\psi}{A^{\cross}}{\lambda}_{A}=\lambda\braket{\psi}{\lambda}_{A},
    \end{equation}
    for all $\psi\in\mathcal{D}$.
    \item A uniquely defined positive measure
    \begin{equation}
    \dd{\mu}(\lambda)=\sum_{\nu\in\sigma_{p}(A)}d_{\nu}\delta(\lambda-\nu)+\rho(\lambda)\dd{\lambda},
    \end{equation}
    such that
    \begin{equation}
    \braket{\phi}{\psi}:=\sum_{\nu\in\sigma_{p}(A)}d_{\nu}\phi^{*}_{\nu}\psi_{\nu}+\int_{\sigma_{c}(A)}\dd{\nu'}\rho(\nu')\phi^{*}(\nu')\psi(\nu'),
    \end{equation}
    where $\psi,\phi\in\mathcal{D}$, with $\braket{\ }{\ }$ being the inner product in $\mathcal{H}$, $\tensor[_A]{\braket{\nu}{\omega}}{}=\omega_{\nu}$ for $\nu\in\sigma_{p}(A)$ and $\tensor[_A]{\braket{\nu}{\omega}}{}=\omega(\nu)$ for $\nu\in\sigma_{c}(A)$, for any $\omega\in\mathcal{D}$.
\end{enumerate}
We also have that, if $f(\lambda)$ is an analytic function over $\sigma(A)$, then \small
\begin{equation}
\mel{\phi}{f(A)}{\psi}=\sum_{\nu\in\sigma_{p}(A)}f(\nu)\phi^{*}_{\nu}\psi_{\nu}+\int_{\sigma_{c}(A)}\dd{\nu'}f(\nu')\phi^{*}(\nu')\phi(\nu').
\end{equation}
\normalsize
\end{theorem}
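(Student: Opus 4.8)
The plan is to reduce the statement to the ordinary spectral theorem for self-adjoint operators on $\mathcal{H}$ and then to exploit the nuclearity of $\mathcal{D}$ to manufacture the generalized eigenvectors together with the accompanying measure.

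First I would fix a cyclic vector $\psi_{0}$ for $A$, which by density of $\mathcal{D}$ in $\mathcal{H}$ we may take to lie in $\mathcal{D}$. The spectral theorem \cite{reed_functional_1980} then furnishes a unitary $W:\mathcal{H}\to L^{2}(\sigma(A),\mu)$, with $\mu(\Delta)=\braket{\psi_{0}}{E(\Delta)\psi_{0}}$ the spectral measure of $\psi_{0}$ ($E$ being the projection-valued measure of $A$), such that $WAW^{\dagger}$ is multiplication by the coordinate $\lambda$ and, more generally, $Wf(A)W^{\dagger}$ is multiplication by $f$ for every bounded Borel $f$. Taking the Lebesgue decomposition of $\mu$ — the statement tacitly assumes there is no singular-continuous part, otherwise one simply carries an extra term — yields $\dd{\mu}(\lambda)=\sum_{\nu\in\sigma_{p}(A)}d_{\nu}\delta(\lambda-\nu)+\rho(\lambda)\dd{\lambda}$, and $\mu$ is unique up to equivalence of measures because it is determined by the cyclic vector.

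Next I would construct the functionals $\ket{\lambda}_{A}$. The composition $\mathcal{D}\hookrightarrow\mathcal{H}\xrightarrow{\,W\,}L^{2}(\mu)$ is $\tau_{\mathcal{D}}$-continuous, hence factors through some Hilbert completion $\mathcal{D}_{p}$. By nuclearity one of the canonical links $\mathcal{D}_{p'}\hookrightarrow\mathcal{D}_{p}$ in the defining chain is Hilbert--Schmidt, so the induced map $T:\mathcal{D}_{p'}\to L^{2}(\mu)$ is Hilbert--Schmidt and is therefore represented by a measurable kernel: there is a family $\{K_{\lambda}\}\subset\mathcal{D}_{p'}$, defined for $\mu$-a.e.\ $\lambda$, with $(W\psi)(\lambda)=(K_{\lambda},\psi)_{p'}$. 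For those $\lambda$ the assignment $\psi\mapsto\overline{(W\psi)(\lambda)}$ is a $\tau_{\mathcal{D}}$-continuous anti-linear functional on $\mathcal{D}$, and this is $\ket{\lambda}_{A}$; equivalently we set $\tensor[_A]{\braket{\nu}{\omega}}{}:=(W\omega)(\nu)$, which reads $\omega_{\nu}$ on the atoms of $\mu$ and $\omega(\nu)$ on the continuous part. The eigen-relation is then immediate: since $W(A\psi)=\lambda\,(W\psi)$ in $L^{2}(\mu)$ and $\sigma(A)\subset\mathbb{R}$, evaluating at a.e.\ $\lambda$ gives $\braket{A\psi}{\lambda}_{A}=\lambda\braket{\psi}{\lambda}_{A}$ for all $\psi\in\mathcal{D}$, i.e.\ $A^{\cross}\ket{\lambda}_{A}=\lambda\ket{\lambda}_{A}$.

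Finally, completeness and the functional-calculus formula are just Parseval transported back through $W$: unitarity gives $\braket{\phi}{\psi}=\int_{\sigma(A)}\overline{(W\phi)(\lambda)}\,(W\psi)(\lambda)\,\dd{\mu}(\lambda)$, and substituting the decomposition of $\mu$ reproduces the claimed sum-plus-integral; likewise $\mel{\phi}{f(A)}{\psi}=\int_{\sigma(A)} f(\lambda)\,\overline{(W\phi)(\lambda)}\,(W\psi)(\lambda)\,\dd{\mu}(\lambda)$, where for an analytic $f$ one additionally checks that $f(A)$ maps $\mathcal{D}$ into $\mathcal{D}$ and that its Taylor series converges in $\tau_{\mathcal{D}}$, which holds because $\mathcal{D}$ is precisely the Fréchet domain on which every power of $A$ is bounded. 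I expect the main obstacle to be the middle step — rigorously producing the pointwise-evaluation functionals $\ket{\lambda}_{A}$, which are defined only $\mu$-almost everywhere, and showing they genuinely lie in $\mathcal{D}^{\cross}$; this is exactly the point at which a mere Gelfand triple of Hilbert spaces does not suffice and the Hilbert--Schmidt factorization coming from nuclearity is indispensable, and care is needed because $K_{\lambda}$ is defined off a $\mu$-null set, so the eigenvector property and the completeness relation must be verified against test vectors in $\mathcal{D}$ rather than in the auxiliary completion $\mathcal{D}_{p'}$. For this step one can also follow the classical Gelfand--Maurin argument \cite{gelfand_generalized_2016}.
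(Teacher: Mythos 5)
The paper does not prove this theorem; it simply cites the appendix to Chapter~1 of Gelfand--Vilenkin \cite{gelfand_generalized_2016}. Your sketch is the classical Gelfand--Maurin argument that appears there (diagonalize $A$ via the cyclic vector's spectral measure, then use nuclearity to factor the inclusion $\mathcal{D}\hookrightarrow L^{2}(\mu)$ through a Hilbert--Schmidt link and read off the generalized eigenvectors from its kernel), so you are taking essentially the same route as the paper's cited source, and the steps you list are the right ones.

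One point worth making explicit, since you flag it only in passing: the Hilbert--Schmidt kernel $K_{\lambda}$ and hence the functional $\ket{\lambda}_{A}$ are produced only for $\mu$-almost every $\lambda$, while the theorem as stated asserts existence ``for each $\lambda\in\sigma(A)$.'' Upgrading from a.e.\ to everywhere requires an additional argument (e.g.\ a continuity/regularization argument for the kernel on $\sigma_{c}(A)$, or a separate direct construction on the point spectrum, where $\ket{\nu}_{A}$ is just an honest eigenvector in $\mathcal{H}\subset\mathcal{D}^{\cross}$). This is a genuine refinement, not a cosmetic one, and it is precisely where some textbook statements of the Gelfand--Maurin theorem are weaker than what is asserted here; a careful write-up should either supply that step or weaken the statement to a.e.\ $\lambda$. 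The remaining items --- the Lebesgue decomposition of $\mu$, Parseval under $W$ giving the completeness relation, and the functional-calculus formula, with the observation that analyticity of $f$ together with the spectral-RHS topology keeps $f(A)$ continuous on $\mathcal{D}$ --- are correctly identified and routine.
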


\begin{proof}
See the appendix to the first chapter of \cite{gelfand_generalized_2016}.
\end{proof}

With this theorem at hand, we are now in position to finish our discussion on the example of the last subsection. 

\begin{example}{\textit{Schwartz space - Part 2:}}

As we have saw in Part 1, the space $\mathcal{D}_{Sch}$ is the nuclear space over which the algebra of position and momentum operators is continuous, and whose closure under usual inner product of quantum mechanics is $L^{2}(\mathbb{R})$. By the Nuclear spectral theorem, the RHS $\mathcal{D}_{Sch}\subset L^{2}(\mathbb{R})\subset\mathcal{D}^{\cross}_{Sch}$ is such that the operators $q$ and $p$ have extensions $q^{\cross}$ and $p^{\cross}$ for which the generalized eigenvalues are well-defined  anti-linear functionals of $\mathcal{D}^{\cross}_{Sch}$.

As an example, the generalized eigenvalue equation for $q$ is 
\begin{equation}
F_{x}[q(\omega)]=xF_{x}[\omega].
\end{equation}
For any particular point $y\in\mathbb{R}$ we have that $q(\omega(y))=y\omega(y)$, which implies that
\begin{equation}
F_{x}[y\omega(y)]=xF_{x}[\omega(y)],
\end{equation}
for any $y\in\mathbb{R}$. 

Using this equation we can look for an integral representation of $F_{x}$, from which we obtain the condition
\begin{equation}
\int^{\infty}_{-\infty}\dd{y}yf_{x}(y)\omega^{*}(y)=x\int^{\infty}_{-\infty}\dd{y}f_{x}(y)\omega^{*}(y),
\end{equation}
for some $f_{x}(y)$. We call $f_{x}(y)$ a \textit{weak solution} of the eigenvalue equation for $q$, and the only $f_{x}(y)$ that satisfies this condition is the Dirac delta function $\delta(x-y)$, from which we find that $F_{x}$ is the anti-linear Dirac functional $\delta^{*}_{x}$, defined  by
$\delta^{*}_{x}[\omega]=\omega^{*}(x)$ for all $\omega\in\mathcal{D}_{Sch}$. 

Therefore, we have that the position eigenkets $\ket{x}_{q}$ are not, in fact, elements of $L^{2}(\mathbb{R})$, and instead are defined by $\ket{x}_{q}:=\delta^{*}_{x}$. By the nuclear spectral theorem we also have that $\braket{\psi}{x}_{q}=\psi^{*}(x)$ for all $\psi\in\mathcal{D}_{Sch}$. Therefore, we can never interpreted this "bra-ket" as an inner product between the "bra" $\bra{\psi}$, and the "ket" $\ket{x}_{q}$, but as the position-representation of $\psi\in\mathcal{D}_{Sch}$.

In fact, by a direct consequence of the nuclear spectral theorem is that any $\psi\in\mathcal{D}_{Sch}$ has the "ket" representation
\begin{equation}
\ket{\psi}=\int^{\infty}_{-\infty}\dd{y}\psi(y)\ket{y}_{q},
\end{equation}
where $\psi(x)=\tensor[_q]{\braket{x}{\psi}}{}$ is the traditional \textit{wave-function} of the state $\psi$. From this representation, we have that the Dirac delta function $\delta(x-x')$ can be rigorously defined as the position-representation of the position eigenstate $\ket{x}_{q}$, or in other words
\begin{equation}
\tensor[_q]{\braket{x'}{x}}{_q}=\delta(x-x').
\end{equation}
\end{example}

In summary. Dirac's notation for generalized eigenvectors is ambiguous because it uses the same symbol, the "bra-ket", to represent both an inner product, the action of a linear functional over a test function, and weak solutions of a eigenvalue problem (i.e such as $\delta(x-x')$). These first two concepts are only equivalent for Hilbert spaces, since they are self-dual, while the third concept is completely unrelated to the last two. Therefore, this conflation leads to weird interpretations about some mathematical objects which may be detrimental to our understanding. To see this, let us study the case of a particular kind of operator.

\begin{definition}{\textit{Generalized projector:}}
Let $\mathcal{D}\subset\mathcal{H}\subset\mathcal{D}^{\cross}$ be a rigged Hilbert space and $A$ be a cyclic, self-adjoint, $\tau_{\mathcal{D}}$ continuous operator with a non-empty continuous spectra $\sigma_{c}(A)$. Let $\ket{\lambda}_{A}\in\mathcal{D}^{\cross}$ be a generalized eigenvector of $A$. The operator $\Pi^{\lambda}_{A}:\mathcal{D}\rightarrow\mathcal{D}^{\cross}$ defined by
\begin{equation}
\Pi^{\lambda}_{A}\ket{\psi}:=\psi(\lambda)\ket{\lambda}_{A},
\end{equation}
is called the \textit{generalized projection} into the \textit{generalized invariant subspace}
\begin{equation}
\mathcal{V}_{\lambda}:=Span(\ket{\lambda}_{A})\subset\mathcal{D}^{\cross}.
\end{equation}
The dimension of the subspace $\mathcal{V}_{\lambda}$ is given by the multiplicity of $\lambda$ in the spectrum of $A$.
\end{definition}

In Dirac's notation, we represent such generalized projections by
\begin{equation}
\Pi^{\lambda}_{A}:=\ketbra{\lambda}{\lambda}_{A}.
\end{equation}
This notation suggests that the domain of $\Pi^{\lambda}_{A}$ can be extended to $\mathcal{D}^{\cross}$. By applying Dirac's notation we would have to define the action of $\Pi^{\lambda}_{A}$ over the generalized eigenvectors $\ket{\lambda'}_{A}$ by
\begin{equation}
\Pi^{\lambda}_{A}\ket{\lambda'}_{A}=\delta(\lambda-\lambda')\ket{\lambda'}_{A}.
\end{equation}
However, the expression $\delta(\lambda-\lambda')\ket{\lambda'}_{A}$ is ill-defined, since we defined $\mathcal{D}^{\cross}$ as a vector space over $\mathcal{C}$, and $\delta(0)$ is not a number.

Therefore, $\Pi^{\lambda}_{A}$ has no extension to $\mathcal{D}^{\cross}$. However, the operator $E_{A}[X]:\mathcal{D}\rightarrow\mathcal{D}^{\cross}$ given by
\begin{equation}
E_{A}[X]:=\int_{X}\dd{\mu}\Pi^{\mu}_{A},
\end{equation}
for any Lebesgue measurable set $X\subset\sigma_{c}(A)$, does have a well-defined extension to $\mathcal{D}^{\cross}$. In fact, these are the projections of the projection-valued measures that are usually employed in quantum mechanics to prove the spectral theorem. Now let us consider some properties of $\Pi^{\lambda}_{A}$.

\begin{proposition}
Let $\Pi^{\lambda}_{A}$ be a generalized projector for the subspace spanned by the generalized eigenvector $\ket{\lambda}_{A}$ of the self-adjoint operator $A$. Then, it is true for all $\phi,\psi\in\mathcal{D}$ that
\begin{enumerate}
    \item $\bra{\psi}\Pi^{\lambda}_{A}=\psi^{*}(\lambda)\tensor[_A]{\bra{\lambda}}{}$, and
    \item $\mel{\phi}{\Pi^{\lambda}_{A}\Pi^{\lambda'}_{A}}{\psi}=\delta(\lambda-\lambda')\mel{\phi}{\Pi^{\lambda}_{A}}{\psi}$.
\end{enumerate}
We call the second property in this list \textit{weak-idempotency}, which is a generalized version of idempotency for operators whose product is not well-defined.
\end{proposition}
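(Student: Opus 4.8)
The plan is to reduce both claims to facts already in hand: the defining relation $\Pi^{\lambda}_{A}\ket{\psi}:=\psi(\lambda)\ket{\lambda}_{A}$, the nuclear spectral theorem identities $\tensor[_A]{\braket{\lambda}{\psi}}{}=\psi(\lambda)$ and $\braket{\psi}{\lambda}_{A}=\psi^{*}(\lambda)$ for $\psi\in\mathcal{D}$, and the Dirac orthonormality $\tensor[_A]{\braket{\lambda}{\lambda'}}{_A}=\delta(\lambda-\lambda')$ on the continuous spectrum. Throughout we take $\lambda,\lambda'\in\sigma_{c}(A)$ (on the pure-point part $\Pi^{\lambda}_{A}$ is an ordinary projector and $\delta$ is replaced by a Kronecker symbol), and we read every identity weakly: claim (1) as an equality of continuous functionals on $\mathcal{D}$, and claim (2) as an equality of distributions in $(\lambda,\lambda')$, i.e.\ after pairing with arbitrary $\phi,\psi\in\mathcal{D}$, whose wave functions $\phi(\cdot),\psi(\cdot)$ are genuine test functions on $\sigma_{c}(A)$ by the nuclear spectral theorem.

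For (1) I would evaluate both sides on an arbitrary $\phi\in\mathcal{D}$. On the left, $\Pi^{\lambda}_{A}\ket{\phi}=\phi(\lambda)\ket{\lambda}_{A}\in\mathcal{D}^{\cross}$, and pairing with the bra $\bra{\psi}$ — using linearity of the pairing between $\mathcal{D}$ and $\mathcal{D}^{\cross}$ in the $\mathcal{D}^{\cross}$ slot together with $\braket{\psi}{\lambda}_{A}=\psi^{*}(\lambda)$ — gives $\bra{\psi}\Pi^{\lambda}_{A}\ket{\phi}=\phi(\lambda)\,\psi^{*}(\lambda)$. On the right, $\psi^{*}(\lambda)\,\tensor[_A]{\braket{\lambda}{\phi}}{}=\psi^{*}(\lambda)\,\phi(\lambda)$. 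These agree for every $\phi\in\mathcal{D}$, so $\bra{\psi}\Pi^{\lambda}_{A}=\psi^{*}(\lambda)\,\tensor[_A]{\bra{\lambda}}{}$ as functionals, which is (1).

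For (2) the key observation is that $\Pi^{\lambda}_{A}\Pi^{\lambda'}_{A}$ is \emph{not} an operator: $\Pi^{\lambda'}_{A}\ket{\psi}=\psi(\lambda')\ket{\lambda'}_{A}$ lies in $\mathcal{D}^{\cross}$, outside the domain $\mathcal{D}$ of $\Pi^{\lambda}_{A}$, so the bracket must be computed by first moving $\Pi^{\lambda}_{A}$ onto the bra using (1), $\bra{\phi}\Pi^{\lambda}_{A}=\phi^{*}(\lambda)\,\tensor[_A]{\bra{\lambda}}{}$. Pairing this with $\Pi^{\lambda'}_{A}\ket{\psi}=\psi(\lambda')\ket{\lambda'}_{A}$ and using Dirac orthonormality gives
\begin{equation}
\mel{\phi}{\Pi^{\lambda}_{A}\Pi^{\lambda'}_{A}}{\psi}=\phi^{*}(\lambda)\,\psi(\lambda')\,\tensor[_A]{\braket{\lambda}{\lambda'}}{_A}=\phi^{*}(\lambda)\,\psi(\lambda')\,\delta(\lambda-\lambda').
\end{equation}
On the other hand $\mel{\phi}{\Pi^{\lambda}_{A}}{\psi}=\psi(\lambda)\,\braket{\phi}{\lambda}_{A}=\psi(\lambda)\,\phi^{*}(\lambda)$, whence $\delta(\lambda-\lambda')\mel{\phi}{\Pi^{\lambda}_{A}}{\psi}=\delta(\lambda-\lambda')\,\phi^{*}(\lambda)\,\psi(\lambda)$. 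The two are equal because $\psi(\cdot)$ is a test function on $\sigma_{c}(A)$, so the distributional identity $g(\lambda')\,\delta(\lambda-\lambda')=g(\lambda)\,\delta(\lambda-\lambda')$ applies with $g=\psi$; this establishes (2), i.e.\ weak-idempotency.

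The hard part here is not analysis but discipline of notation: one must never treat $\Pi^{\lambda}_{A}\Pi^{\lambda'}_{A}$ as a genuine composition, and must recognize (2) as an identity of distributions in the spectral parameters that makes sense only after the pairing with $\phi,\psi\in\mathcal{D}$. Part (1) is exactly what licenses the rearrangement $\mel{\phi}{\Pi^{\lambda}_{A}\Pi^{\lambda'}_{A}}{\psi}=(\bra{\phi}\Pi^{\lambda}_{A})(\Pi^{\lambda'}_{A}\ket{\psi})$, by first turning $\bra{\phi}\Pi^{\lambda}_{A}$ into a scalar multiple of the honest eigenbra $\tensor[_A]{\bra{\lambda}}{}$, whose pairing with the generalized eigenvector $\ket{\lambda'}_{A}$ is then fixed by the nuclear spectral theorem. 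It is also worth stating explicitly the normalization convention $\tensor[_A]{\braket{\lambda}{\lambda'}}{_A}=\delta(\lambda-\lambda')$ (equivalently, the generalized eigenvectors are Dirac-normalized, $\rho\equiv1$ in the spectral measure) under which the stated formulas hold verbatim.
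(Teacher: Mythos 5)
Your proof is correct and follows essentially the same approach the paper sketches (``assume $\Pi^{\lambda}_{A}=\ketbra{\lambda}{\lambda}_{A}$ and compute in Dirac notation''), just carried out in full with the distributional interpretation made explicit. The care you take in identifying (2) as a distributional identity in $(\lambda,\lambda')$ after pairing with test functions, and in justifying the rearrangement via part (1), is exactly the content the paper's one-line proof leaves implicit.
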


\begin{proof}
Since $\phi,\psi\in\mathcal{D}$, just assume that $\Pi^{\lambda}_{A}=\ketbra{\lambda}{\lambda}_{A}$, and compute using Dirac's notation.
\end{proof}

Now we see that for $\mel{\phi}{\Pi^{\lambda}_{A}}{\psi}$ the "bra-ket" as an inner product and the "bra-ket" as a weak solution can both appear as a generalized "mean-value". This is why we must be careful with the definitions of our spaces, and it is also the reason for why we choose to introduce RHS into our framework. These generalized projections play a major role in the decomposition of Hilbert spaces by partial isometries, as we will see in App.\ref{app_sub:part_iso}. Before proceeding to discuss partial isometries, we need one last definition.

\begin{definition}{\textit{Spectral RHS:}}
Let $A$ be self-adjoint operator over a separable Hilbert space $\mathcal{H}$, and let $\sigma(A)\subset\mathbb{R}$ be its spectrum. Its \textit{spectral RHS}, is the RHS $\mathcal{D}_{A}\subset\mathcal{H}_{A}\subset\mathcal{D}^{\cross}_{A}$ such that
\begin{enumerate}
    \item $\mathcal{D}_{A}$ is a nuclear space defined as
    \begin{equation}
    \mathcal{D}_{A}:=\bigcap_{p\in\mathbb{N}\backslash{\{0\}}}\mathcal{D}(A^{p})
    \end{equation}
    with a topology $\tau_{\mathcal{D}_{A}}$ generated by the set of norms $\{\norm{\ }_{p}\}_{p\in\mathbb{N}\backslash{\{0\}}}$ defined by $\norm{\psi}_{p}:=\mel{\psi}{A^{p}}{\psi}$ for all $\psi\in\mathcal{D}_{A}$.
    \item $\mathcal{H}\simeq\mathcal{H}_{A}$, such that the closure of $\mathcal{D}_{A}$ by the norm in $\mathcal{H}_{A}$ is $\mathcal{H}$.
    \item And $\mathcal{D}^{\cross}_{A}$ is the space of all $\tau_{\mathcal{D}_{A}}$-continuous anti-linear functionals over $\mathcal{D}_{A}$.
\end{enumerate}
\end{definition}

Unless state otherwise, we hereby take for granted that all RHS are a spectral RHS. Since we will be always dealing with operators that belong to the algebra of observables generated by $q$ and $p$, we also assume that their powers are also part of the definition of the RHS.

\section{Partial Isometries and Direct-Sum Decomposition}\label{app_sec:part_direct}

In this appendix, we introduce the main tool of our formalism, namely partial isometries. In Sec.\ref{app_sub:part_iso} we define what are partial isometries in general, briefly discuss their properties, and introduce \textit{canonical partial isometries}, which first appeared in \cite{plebanski_partial_1980}. We also propose a generalization of these canonical partial isometries to map subspaces of the anti-linear dual of a spectral nuclear space of one operator to the nuclear space of another, when those operators have spectra of different natures. In Sec.\ref{app_sub:direct_quot} we show how partial isometries can be used to furnish direct-sum decompositions of Hilbert spaces, and show under what conditions these decompositions reveal a quotient structure associated to a particular symmetry.

\subsection{Partial Isometries and generalizations}\label{app_sub:part_iso}

In this subsection we lay out the the definitions of the main objects used in our constructions, the partial isometries. We prove a theorem that guarantees the existence of a partial isometry between the spectral spaces of two self-adjoint operators. These partial isometries, which we call \textit{canonical partial isometries}, have properties that are crucial to building the unitary map used in Alg.\ref{alg:recipe_code}. We also provide an extension of this theorem for the case where the two operators have spectra of different natures (i.e. pure-point vs continuous).

\begin{definition}{\textit{Isometries:}}\label{def:isometries}
Let $\mathcal{H}$ and $\mathcal{H}'$ be separable Hilbert spaces. An isometry $V$ is a map $V:\mathcal{H}\rightarrow\mathcal{H}'$ such that
\begin{equation}
VV^{\dagger}=\mathds{1}_{\mathcal{H}}\quad V^{\dagger}V=\mathds{1}_{\mathcal{H}'}.
\end{equation}
\end{definition}

\begin{definition}{\textit{Partial-isometries:}}\label{def:partial-isometries}
Let $\mathcal{H}$ and $\mathcal{H}'$ be separable Hilbert spaces. A partial isometry of type $(K,L)$ is a map $V:\mathcal{H}\rightarrow\mathcal{H}'$ such that
\begin{equation}
VV^{\dagger}=K\quad V^{\dagger}V=L,
\end{equation}
where $L$ and $K$ are projector over into $\mathcal{H}$ and $\mathcal{H}'$, respectively. A partial isometry $S:\mathcal{H}\rightarrow\mathcal{H}'$ of type $(K,\mathds{1}_{\mathcal{H}})$ is called a \textit{semi-unitary} operator.
\end{definition}

\begin{definition}{\textit{Image and co-image of a partial isometry:}}
Let $V:\mathcal{H}\rightarrow\mathcal{H}'$ be a partial isometry of type $(K,L)$. The subspace stabilized by $K$ is the called the \textit{image} of $V$, or $\Im(V)$, while the subspace stabilized by $L$ is the co-image, or Co-Im$(V)$.
\end{definition}

As stated in Sec.\ref{sec:meth}, a partial isometry maps its co-image isometrically to its image, while a partial isometry with trivial kernel is called a semi-unitary operator. Essentially, a partial isometry becomes a full isometry in the limit where the co-image is equal to the domain, and the image is equal to the range. Now, we define a particular class of partial isometries.

\begin{theorem}{\textit{Canonical partial isometry}}\label{thm:can-par-iso}
Let $X$ and $Y$ be non-degenerate, self-adjoint operators over a Hilbert space $\mathcal{H}$, such that
\begin{enumerate}
    \item We have that $\sigma(X)$ and $\sigma(Y)$ are a countable union of connected, disjoint subsets of $\mathbb{R}$.
    \item For $\sigma(Z)$ with $Z=X,Y$ we have $\sigma(Z)=\sigma_{p}(Z)\cap\sigma_{c}(Z)$ and $\sigma_{p}(Z)\cup\sigma_{c}(Z)=\varnothing$.
    \item We also have that $\sigma_{p}(X,Y):=\sigma_{p}(X)\cup\sigma_{p}(Y)\neq\varnothing$ and $\sigma_{c}(X,Y):=\sigma_{c}(X)\cup\sigma_{c}(Y)\neq\varnothing$.
\end{enumerate} 
Then, the operator
\begin{equation}
V^{X}_{Y}=\sum_{\mu\in \sigma_{p}(X,Y)}\tensor[_X]{\ketbra{\mu}{\mu}}{_Y}+\int_{\sigma_{c}(X,Y)}\dd{\nu}\tensor[_X]{\ketbra{\nu}{\nu}}{_Y},
\end{equation}
is a partial isometry with the following properties; (i:) $V^{X}_{Y}{V^{X}_{Y}}^{\dagger}=K$ and ${V^{X}_{Y}}^{\dagger}V^{X}_{Y}=L$, where
\begin{subequations}
\begin{align}
&K=\sum_{\mu\in\sigma_{p}(X,Y)}\ketbra{\mu}{\mu}_{X}+\int_{\sigma_{c}(X,Y)}\dd{\nu}\ketbra{\nu}{\nu}_{X},\\
&L=\sum_{\mu\in\sigma_{p}(X,Y)}\ketbra{\mu}{\mu}_{Y}+\int_{\sigma_{c}(X,Y)}\dd{\nu}\ketbra{\nu}{\nu}_{Y},
\end{align}
\end{subequations}
and (ii:)
\begin{subequations}
\begin{align}
V^{X}_{Y}Y{V^{X}_{Y}}^{\dagger}&=KXK=X_{K},\label{eq:Y_to_X}\\
{V^{X}_{Y}}^{\dagger}XV^{X}_{Y}&=LYL=Y_{L}.\label{eq:X_to_Y}
\end{align}
\end{subequations}
\end{theorem}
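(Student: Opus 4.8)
The plan is to reduce the entire statement to the Nuclear Spectral Theorem applied separately to $X$ and to $Y$, and then to carry out the usual Dirac bookkeeping rigorously inside the two spectral rigged Hilbert spaces $\mathcal{D}_{X}\subset\mathcal{H}\subset\mathcal{D}^{\cross}_{X}$ and $\mathcal{D}_{Y}\subset\mathcal{H}\subset\mathcal{D}^{\cross}_{Y}$. Since $X$ and $Y$ are non-degenerate they are cyclic, so the Nuclear Spectral Theorem supplies, for each $Z=X,Y$, an orthonormal family of proper eigenvectors $\{\ket{\mu}_{Z}\}_{\mu\in\sigma_{p}(Z)}$, a Dirac-normalized family of generalized eigenvectors $\{\ket{\nu}_{Z}\}_{\nu\in\sigma_{c}(Z)}\subset\mathcal{D}^{\cross}_{Z}$, and the associated resolutions $\mathds{1}=\sum_{\mu}\ketbra{\mu}{\mu}_{Z}+\int\dd{\nu}\,\ketbra{\nu}{\nu}_{Z}$ and $Z=\sum_{\mu}\mu\,\ketbra{\mu}{\mu}_{Z}+\int\dd{\nu}\,\nu\,\ketbra{\nu}{\nu}_{Z}$ as maps $\mathcal{D}_{Z}\to\mathcal{D}^{\cross}_{Z}$. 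Non-degeneracy makes the rank-one objects $\ketbra{\mu}{\mu}_{Z}$ and $\ketbra{\nu}{\nu}_{Z}$ independent of the phase choices, so $V^{X}_{Y}$ is well defined; hypothesis (3) is precisely what guarantees that $V^{X}_{Y}$ does not collapse to the zero operator.

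First I would show $V^{X}_{Y}$ is a partial isometry. Transposing the defining sum and integral gives ${V^{X}_{Y}}^{\dagger}=\sum_{\mu\in\sigma_{p}(X,Y)}\tensor[_Y]{\ketbra{\mu}{\mu}}{_X}+\int_{\sigma_{c}(X,Y)}\dd{\nu}\,\tensor[_Y]{\ketbra{\nu}{\nu}}{_X}$. Evaluating the quadratic form $\mel{\phi}{{V^{X}_{Y}}^{\dagger}V^{X}_{Y}}{\psi}$ on $\phi,\psi\in\mathcal{D}_{Y}$, each inner matrix element factors through $\tensor[_X]{\braket{\mu}{\mu'}}{_X}=\delta_{\mu\mu'}$ or $\tensor[_X]{\braket{\nu}{\nu'}}{_X}=\delta(\nu-\nu')$, while the mixed pure-point--continuous cross terms drop out because the measure decomposition of the Nuclear Spectral Theorem keeps those sectors orthogonal; collapsing the Kronecker and Dirac deltas leaves exactly the quadratic form of $L$, and the mirror computation on $\mathcal{D}_{X}$ gives $V^{X}_{Y}{V^{X}_{Y}}^{\dagger}=K$. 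I would then identify $K$ and $L$ with the spectral projections $E_{X}[\Sigma]$ and $E_{Y}[\Sigma]$, where $\Sigma:=\sigma_{p}(X,Y)\cup\sigma_{c}(X,Y)$; these are genuine orthogonal projectors by the spectral theorem, and writing them out reproduces the closed forms claimed in the statement.

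For part (ii) I would insert the spectral resolution of $Y$ between $V^{X}_{Y}$ and ${V^{X}_{Y}}^{\dagger}$ and collapse the deltas exactly as before: the orthogonality relations restrict the sum and integral to $\Sigma$ and transport the eigenbasis from $Y$ to $X$, yielding $\sum_{\mu\in\sigma_{p}(X,Y)}\mu\,\ketbra{\mu}{\mu}_{X}+\int_{\sigma_{c}(X,Y)}\dd{\nu}\,\nu\,\ketbra{\nu}{\nu}_{X}$. Since $K=E_{X}[\Sigma]$, spectral calculus identifies this with $E_{X}[\Sigma]\,X\,E_{X}[\Sigma]=KXK$, which is Eq.~(\ref{eq:Y_to_X}); Eq.~(\ref{eq:X_to_Y}) follows by the same argument with the roles of $X$ and $Y$ exchanged, using ${V^{X}_{Y}}^{\dagger}=V^{Y}_{X}$.

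Every computation above is one-line Dirac algebra, so the real content of the proof is its legitimacy. The generalized eigenvectors of $X$ live in $\mathcal{D}^{\cross}_{X}$ and those of $Y$ in $\mathcal{D}^{\cross}_{Y}$, and objects like $\ket{\nu}_{X}\,\tensor[_X]{\braket{\nu}{\nu'}}{_X}\,\bra{\nu'}_{Y}$ carry a factor $\delta(\nu-\nu')$, which is not a scalar; so the main work is to read $V^{X}_{Y}$, ${V^{X}_{Y}}^{\dagger}$, their composites, and $KXK$ as continuous maps between the appropriate nuclear and anti-dual spaces --- equivalently, as the quadratic forms they induce on $\mathcal{D}_{X}$ or $\mathcal{D}_{Y}$ --- and to justify each delta-collapse via $\int\!\!\int\dd{\nu}\dd{\nu'}f(\nu)\delta(\nu-\nu')g(\nu')=\int\dd{\nu}f(\nu)g(\nu)$, which is precisely the content of the Nuclear Spectral Theorem for wavefunctions $f,g$ of test vectors. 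I expect the delicate points to be (a) confirming that the pure-point--continuous cross terms vanish rather than producing an ill-defined $\delta$ at an isolated point, and (b) handling atoms of the spectral measure near a continuous band; hypothesis (1), that $\sigma(X)$ and $\sigma(Y)$ are countable unions of disjoint connected sets, is what rules such pathologies out.
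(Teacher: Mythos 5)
Your proposal is correct and takes essentially the same approach as the paper's own (very terse) proof, namely direct computation of $V^{X}_{Y}{V^{X}_{Y}}^{\dagger}$, ${V^{X}_{Y}}^{\dagger}V^{X}_{Y}$, and the two conjugations from the spectral resolutions of $X$ and $Y$ together with the (generalized) orthogonality relations among proper and generalized eigenvectors. You have merely supplied the details the paper omits --- in particular the rigged-Hilbert-space bookkeeping that legitimizes the delta-collapse steps --- and you correctly read past the evident typos in hypotheses (2) and (3) of the statement, where $\cap$ and $\cup$ have been swapped (the intended conditions are $\sigma(Z)=\sigma_{p}(Z)\cup\sigma_{c}(Z)$, $\sigma_{p}(Z)\cap\sigma_{c}(Z)=\varnothing$, and $\sigma_{p}(X,Y):=\sigma_{p}(X)\cap\sigma_{p}(Y)$, $\sigma_{c}(X,Y):=\sigma_{c}(X)\cap\sigma_{c}(Y)$).
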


\begin{proof}
To prove (i) just calculate $V^{X}_{Y}{V^{X}_{Y}}^{\dagger}$ and ${V^{X}_{Y}}^{\dagger}V^{X}_{Y}$. To prove (ii), use the assumptions assumptions over $X$ and $Y$ to get the spectral representations
\begin{subequations}
\begin{align}
X&=\sum_{\mu\in\sigma_{p}(X)}\mu\ketbra{\mu}{\mu}_{X}+\int_{\sigma_{c}(X)}\nu\dd{\nu}\ketbra{\nu}{\nu}_{X}\\
Y&=\sum_{\mu\in\sigma_{p}(Y)}\mu\ketbra{\mu}{\mu}_{Y}+\int_{\sigma_{c}(Y)}\nu\dd{\nu}\ketbra{\nu}{\nu}_{Y},
\end{align}
\end{subequations} 
from which we just calculate the right-hand side of Eqs.(\ref{eq:Y_to_X},\ref{eq:X_to_Y}).
\end{proof}

This definition covers the case for operators $X$ and $Y$ whose pure-point spectra intersect, and the case for operators whose continuous spectra intersect. However, for operators $X$ and $Y$, which have only a continuous and a pure-point spectrum respectively, such that those spectra intersect, we can define the following generalization of a partial isometry.

\begin{definition}{\textit{Generalized partial isometries}}
Let $X$ and $Y$ be self-adjoint operators such that $\sigma(X)=\sigma_{p}(X)$, $\sigma(Y)=\sigma_{c}(Y)$, and $\sigma(X)\cap\sigma(Y)\neq\varnothing$. A \textit{generalized canonical partial isometry} is an operator $V^{X}_{Y}:\mathcal{D}^{\cross}_{Y}\rightarrow\mathcal{D}_{X}$, where $\mathcal{D}_{Y}$ is the spectral nuclear space of $Y$ and $\mathcal{D}^{\cross}_{X}$ is the spectral anti-dual space of $X$, such that
\begin{equation}
V^{X}_{Y}=\sum_{\nu\in\sigma(X)\cap\sigma(X)}\tensor[_X]{\ketbra{\nu}{\nu}}{_Y}.
\end{equation}
\end{definition}

\begin{proposition}
Let $X$ and $Y$ be self-adjoint operators such that $\sigma(X)=\sigma_{p}(X)$, $\sigma(Y)=\sigma_{c}(Y)$, and $\sigma(X)\cap\sigma(Y)\neq\varnothing$. If $V^{X}_{Y}$ is a canonical generalized partial isometry between $X$ and $Y$, then
\begin{subequations}
\begin{align}
\mel{\alpha}{VV^{\dagger}}{\beta}&=\delta(0)\mel{\alpha}{K}{\beta}\text{, and }\\
\mel{\phi}{V^{\dagger}V}{\psi}&=\mel{\phi}{L}{\psi},
\end{align}
\end{subequations}
where $\ket{\alpha},\ket{\beta}\in\mathcal{D}_{X}$ and $\ket{\phi},\ket{\psi}\in\mathcal{D}_{Y}$ with ${V^{X}_{Y}}^{\dagger}:\mathcal{D}_{X}\rightarrow\mathcal{D}^{\cross}_{Y}$ given by
\begin{equation}
{V^{X}_{Y}}^{\dagger}=\sum_{\mu\in\sigma(X)\cap\sigma(Y)}\tensor[_Y]{\ketbra{\mu}{\mu}}{_X},
\end{equation}
and
\begin{subequations}
\begin{align}
K&=\sum_{\lambda\in\sigma(X)\cap\sigma(Y)}\Pi^{\lambda}_{X},\\
L&=\sum_{\lambda\in\sigma(X)\cap\sigma(Y)}\Pi^{\lambda}_{Y},
\end{align}
\end{subequations}
such that $\Pi^{\lambda}_{X}$ is a projector and $\Pi^{\lambda}_{Y}$ is a generalized projector.
\end{proposition}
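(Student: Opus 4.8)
The plan is to compute the two composite maps ${V^{X}_{Y}}^{\dagger}V^{X}_{Y}$ and $V^{X}_{Y}{V^{X}_{Y}}^{\dagger}$ directly through their matrix elements on test vectors, exploiting the asymmetry between the two sides: since $\sigma(X)=\sigma_{p}(X)$ the eigenvectors $\ket{\mu}_{X}$ are genuine orthonormal vectors of $\mathcal{H}_{X}$, whereas since $\sigma(Y)=\sigma_{c}(Y)$ the eigenvectors $\ket{\mu}_{Y}$ are Dirac-normalized generalized eigenvectors in $\mathcal{D}^{\cross}_{Y}$. First I would fix the meaning of ${V^{X}_{Y}}^{\dagger}$ as the anti-dual extension of $V^{X}_{Y}$, i.e. the map $\mathcal{D}_{X}\rightarrow\mathcal{D}^{\cross}_{Y}$ defined by $(\,{V^{X}_{Y}}^{\dagger}\alpha)[\psi]=\overline{(V^{X}_{Y}\psi)[\alpha]}$; taking the formal adjoint of each summand $\tensor[_X]{\ketbra{\mu}{\mu}}{_Y}$ gives $\tensor[_Y]{\ketbra{\mu}{\mu}}{_X}$, and I would verify term by term that this recovers the stated expression, each term being finite by the nuclear spectral theorem.

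For the second identity, let $\phi,\psi\in\mathcal{D}_{Y}$. Inserting the two series into $\mel{\phi}{{V^{X}_{Y}}^{\dagger}V^{X}_{Y}}{\psi}$ produces $\sum_{\mu,\nu\in\sigma(X)\cap\sigma(Y)}\braket{\phi}{\mu}_{Y}\,\tensor[_X]{\braket{\mu}{\nu}}{_X}\,\tensor[_Y]{\braket{\nu}{\psi}}{}$. The middle factor is $\tensor[_X]{\braket{\mu}{\nu}}{_X}=\delta_{\mu,\nu}$ because the $\ket{\mu}_{X}$ are orthonormal proper eigenvectors, which collapses the double sum to $\sum_{\mu}\phi^{*}(\mu)\psi(\mu)$, with $\phi(\mu)=\tensor[_Y]{\braket{\mu}{\phi}}{}$ the weak wave-function furnished by the nuclear spectral theorem. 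Since the generalized projector satisfies $\mel{\phi}{\Pi^{\lambda}_{Y}}{\psi}=\phi^{*}(\lambda)\psi(\lambda)$, this is exactly $\sum_{\lambda\in\sigma(X)\cap\sigma(Y)}\mel{\phi}{\Pi^{\lambda}_{Y}}{\psi}=\mel{\phi}{L}{\psi}$. The remaining point is convergence of the sum, which I would deduce from $\phi,\psi$ lying in the nuclear space $\mathcal{D}_{Y}$ (rapid decay in the $Y$-representation) together with the countability of $\sigma(X)\cap\sigma(Y)\subseteq\sigma_{p}(X)$.

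For the first identity, let $\alpha,\beta\in\mathcal{D}_{X}$. The same substitution gives $\mel{\alpha}{V^{X}_{Y}{V^{X}_{Y}}^{\dagger}}{\beta}=\sum_{\mu,\nu\in\sigma(X)\cap\sigma(Y)}\alpha^{*}_{\mu}\,\tensor[_Y]{\braket{\mu}{\nu}}{_Y}\,\beta_{\nu}$, where now the middle factor is $\tensor[_Y]{\braket{\mu}{\nu}}{_Y}=\delta(\mu-\nu)$ because the $\ket{\mu}_{Y}$ are Dirac-normalized. Since $\mu,\nu$ range over the \emph{discrete} set $\sigma(X)\cap\sigma(Y)$, this factor vanishes for $\mu\neq\nu$ and equals the formal symbol $\delta(0)$ for $\mu=\nu$, so the sum reduces to $\delta(0)\sum_{\mu}\alpha^{*}_{\mu}\beta_{\mu}=\delta(0)\sum_{\lambda\in\sigma(X)\cap\sigma(Y)}\mel{\alpha}{\Pi^{\lambda}_{X}}{\beta}=\delta(0)\mel{\alpha}{K}{\beta}$, using that $\Pi^{\lambda}_{X}=\ketbra{\lambda}{\lambda}_{X}$ is a genuine projector. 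I would close with the remark that this makes precise the sense in which $V^{X}_{Y}$ fails to be an honest partial isometry: ${V^{X}_{Y}}^{\dagger}V^{X}_{Y}$ is the genuine generalized projector $L$, while $V^{X}_{Y}{V^{X}_{Y}}^{\dagger}$ equals the projector $K$ only up to the divergent factor $\delta(0)$, the discrepancy being exactly the mismatch between Kronecker and Dirac normalization on the two sides — which is why the domain had to be the rigged space $\mathcal{D}^{\cross}_{Y}$ and why the statement can only be given weakly, between test vectors.

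I expect the main obstacle to be making the step $\delta(\mu-\nu)=\delta_{\mu,\nu}\,\delta(0)$ on a discrete subset of the continuum genuinely rigorous: it must be read through the generalized-projector and weak-idempotency formalism of App.\ref{app_sec:rigged} rather than as an equality of numbers, and one must additionally check that the interchange of the countable sums with the continuous anti-dual pairings is legitimate in the topology $\tau_{\mathcal{D}_{Y}}$.
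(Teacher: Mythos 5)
Your proof is correct and takes exactly the route the paper intends: the paper's own proof is the one-liner ``expand the vectors in their respective bases and compute,'' and you carry that out explicitly, correctly using $\tensor[_X]{\braket{\mu}{\nu}}{_X}=\delta_{\mu,\nu}$ on the pure-point side to collapse $V^{\dagger}V$ to $L$, and $\tensor[_Y]{\braket{\mu}{\nu}}{_Y}=\delta(\mu-\nu)$ on the continuous side to extract the $\delta(0)$ factor in front of $K$. Your closing caveat about the formal step $\delta(\mu-\nu)=\delta_{\mu,\nu}\,\delta(0)$ on a discrete index set is apt and is precisely why the paper phrases the first identity weakly with the divergent prefactor rather than as an operator identity.
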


\begin{proof}
Just use the definitions of the operators and expand the the vectors in their respective basis to calculate the left and right-hand sides of each equation.
\end{proof}

Therefore, the properties of a generalized partial isometry are essentially weak versions of the properties of a partial isometry. In summary, partial isometries are generalizations of unitary operators that can be used to map operators with different spectra isometrically in the subspace spanned by their common eigenvectors. In the next subsection, we will see how sets of partial isometries can be used to define unitary maps between Hilbert spaces.

\subsection{Direct-Sum decompositions and Quotient Structures}\label{app_sub:direct_quot}

In this subsection we first show how a set of  partial isometries that covers a space provide a decomposition of this space into a direct-sum of orthogonal subspaces. Next, we show that if the partial isometries in this set are also semi-unitary operators, this direct-sum structure is equivalent to a tensor product decomposition in terms of quotient spaces. These two theorems are later used to proof particular decompositions of the spectral spaces in the operators that appear in Alg.\ref{alg:recipe_code}

\begin{theorem}{\textit{Unitary from partial isometries}}\label{thm:unit-part}
Let $\{V_{\lambda}\}_{\lambda\in\Lambda}$ be an indexed set of partial isometries $V_{\lambda}:\mathcal{H}\rightarrow\mathcal{H}'$ that satisfy
\begin{subequations}
\begin{align}
\sum_{\lambda\in\Lambda}V_{\lambda}V^{\dagger}_{\lambda}&=\sum_{\lambda\in\Lambda}K_{\lambda}=\mathds{1}_{\mathcal{H}'}\text{, and }\\
\sum_{\lambda\in\Lambda}V^{\dagger}_{\lambda}V_{\lambda}&=\sum_{\lambda\in\Lambda}L_{\lambda}=\mathds{1}_{\mathcal{H}},
\end{align}
\end{subequations}
with $K_{\lambda}$ and $L_{\lambda}$ being projectors satisfying $K_{\lambda}K_{\lambda'}=\delta_{\lambda,\lambda'}K_{\lambda'}$ and $L_{\lambda}L_{\lambda'}=\delta_{\lambda,\lambda'}L_{\lambda'}$. Then, (i:) we have that
\begin{equation}
\mathcal{H}'=\bigoplus_{\lambda\in\Lambda}(K_{\lambda}\mathcal{H}')\text{, and }\mathcal{H}=\bigoplus_{\lambda\in\Lambda}(L_{\lambda}\mathcal{H}),
\end{equation}
and we have that (ii:) $U:\mathcal{H}=\bigoplus_{\lambda\in\Lambda}(L_{\lambda}\mathcal{H})\rightarrow\mathcal{H}'$ given by
\begin{equation}
U:=
\begin{bmatrix}
\cdots & V_{\lambda} & \cdots
\end{bmatrix}
,
\end{equation}
is a unitary map.
\end{theorem}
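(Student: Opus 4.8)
The plan is to reduce everything to the two elementary facts that a partial isometry of type $(K_\lambda,L_\lambda)$ behaves like a unitary between its co-image $L_\lambda\mathcal H$ and its image $K_\lambda\mathcal H'$, and that the families $\{K_\lambda\}$ and $\{L_\lambda\}$ are orthogonal resolutions of the identity. First I would record, from Def.\ref{def:partial-isometries}, the auxiliary identities $V_\lambda=V_\lambda L_\lambda$, $V_\lambda^\dagger=V_\lambda^\dagger K_\lambda$, and $V_\lambda V_\lambda^\dagger V_\lambda=V_\lambda$: indeed, if $\psi\in(\mathds 1_{\mathcal H}-L_\lambda)\mathcal H$ then $\lVert V_\lambda\psi\rVert^2=\langle\psi,V_\lambda^\dagger V_\lambda\psi\rangle=\langle\psi,L_\lambda\psi\rangle=0$, so $V_\lambda$ kills the orthocomplement of its co-image, giving $V_\lambda=V_\lambda L_\lambda$; taking adjoints gives $V_\lambda^\dagger=V_\lambda^\dagger K_\lambda$, and combining $V_\lambda L_\lambda=V_\lambda V_\lambda^\dagger V_\lambda=V_\lambda$. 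In particular $K_\lambda V_\lambda=V_\lambda$ and $V_\lambda^\dagger$ maps $\mathcal H'$ into $L_\lambda\mathcal H$, so the block column $\phi\mapsto(V_\lambda^\dagger\phi)_\lambda$ is a legitimate element of the direct sum.

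For part (i), since $K_\lambda K_{\lambda'}=\delta_{\lambda\lambda'}K_{\lambda'}$ the subspaces $K_\lambda\mathcal H'$ are pairwise orthogonal: for $v\in K_\lambda\mathcal H'$, $w\in K_{\lambda'}\mathcal H'$ with $\lambda\neq\lambda'$ one has $\langle v,w\rangle=\langle v,K_\lambda K_{\lambda'}w\rangle=0$; and because $\sum_\lambda K_\lambda=\mathds 1_{\mathcal H'}$, each $\phi\in\mathcal H'$ is the norm‑convergent sum $\sum_\lambda K_\lambda\phi$ of its components, which is exactly $\mathcal H'=\bigoplus_\lambda K_\lambda\mathcal H'$. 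The identical argument applied to the $L_\lambda$ gives $\mathcal H=\bigoplus_\lambda L_\lambda\mathcal H$.

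For part (ii) I would simply compute the two products. Writing a generic element of $\bigoplus_\lambda L_\lambda\mathcal H$ as $(\psi_\lambda)_\lambda$ with $\psi_\lambda\in L_\lambda\mathcal H$, one has $U(\psi_\lambda)_\lambda=\sum_\lambda V_\lambda\psi_\lambda$ and $U^\dagger\phi=(V_\lambda^\dagger\phi)_\lambda$. Using $V_\lambda^\dagger=V_\lambda^\dagger K_\lambda$, $V_\mu=K_\mu V_\mu$ and $K_\lambda K_\mu=\delta_{\lambda\mu}K_\mu$, the mixed terms vanish and $V_\lambda^\dagger V_\mu\psi_\mu=\delta_{\lambda\mu}V_\lambda^\dagger V_\lambda\psi_\lambda=\delta_{\lambda\mu}L_\lambda\psi_\lambda=\delta_{\lambda\mu}\psi_\lambda$, hence $U^\dagger U=\mathds 1$ on $\bigoplus_\lambda L_\lambda\mathcal H$; conversely $UU^\dagger\phi=\sum_\lambda V_\lambda V_\lambda^\dagger\phi=\sum_\lambda K_\lambda\phi=\mathds 1_{\mathcal H'}\phi$. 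Therefore $U$ is unitary.

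The main obstacle is not any one algebraic identity — each is a one‑line computation — but the functional‑analytic bookkeeping around the countable sums: one must justify interchanging $\sum_\lambda$ with inner products and with the action of $U$ (the standard argument for a resolution of the identity on a separable space), and, more delicately, one must make sense of the construction when some $V_\lambda$ are only \emph{generalized} partial isometries, so that $V_\lambda^\dagger$ lives between $\mathcal D_X$ and $\mathcal D^{\cross}_Y$ rather than on the Hilbert space. In that case the relations $V_\lambda V_\lambda^\dagger=K_\lambda$ and $V_\lambda^\dagger V_\lambda=L_\lambda$ are read in the weak (distributional) sense of the preceding proposition, and the identities $U^\dagger U=\mathds 1$, $UU^\dagger=\mathds 1_{\mathcal H'}$ must be verified on matrix elements between vectors of the nuclear spaces; otherwise the proof is a direct block‑matrix calculation.
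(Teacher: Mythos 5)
Your proof is correct and follows the same route as the paper: establish the orthogonal direct-sum decompositions from the resolutions of identity, then verify unitarity by computing $UU^{\dagger}$ and $U^{\dagger}U$ in block form. You are in fact a bit more careful than the paper's own proof, which simply asserts that $U^{\dagger}U$ is block-diagonal with entries $L_{\lambda}$ without justifying why the off-diagonal blocks $V^{\dagger}_{\lambda}V_{\mu}$ vanish; your derivation of this from the identities $V_{\lambda}=K_{\lambda}V_{\lambda}=V_{\lambda}L_{\lambda}$ together with $K_{\lambda}K_{\mu}=\delta_{\lambda\mu}K_{\mu}$ fills that small gap, and your remarks about convergence of the countable sums and the weak reading for generalized partial isometries correctly flag the functional-analytic points the paper leaves implicit.
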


\begin{proof}
Since $\{L_{\lambda}\}_{\lambda\in\Lambda}$ and $\{K_{\lambda}\}_{\lambda\in\Lambda}$ are a set of mutually orthogonal projectors summing to the identity matrix, we have that they factor $\mathcal{H}$ and $\mathcal{H}$' into a direct sum of mutually orthogonal subspaces, respectively, which proves (i). Now, given the form of $U$ in (ii) we can write $U^{\dagger}:\bigoplus_{\lambda\in\Lambda}(K_{\lambda}\mathcal{H}')\rightarrow\mathcal{H}$ as 
\begin{equation}
U^{\dagger}=
\begin{bmatrix}
\vdots\\
V^{\dagger}_{\lambda}\\
\vdots
\end{bmatrix}
.
\end{equation}
Using this form we can calculate $UU^{\dagger}$ and $U^{\dagger}U$ obtaining
\begin{subequations}
\begin{align}
UU^{\dagger}=&\sum_{\lambda\in\Lambda}V_{\lambda}V^{\dagger}_{\lambda}=\sum_{\lambda\in\Lambda}K_{\lambda}=\mathds{1}_{\mathcal{H}'},\\
U^{\dagger}U&=
\begin{bmatrix}
\ddots & 0 & 0\\
0 & L_{\lambda} & 0\\
0 & 0 & \ddots
\end{bmatrix}
=
\mathds{1}_{\mathcal{H}}.
\end{align}
\end{subequations}
\end{proof}

\begin{lemma}\label{lem:semi-cyclic}
Let $\{S_{i}\}_{i=0,...,k-1}$ be a family of semi-unitary operators over some Hilbert space $\mathcal{H}$ such that $S^{\dagger}_{i}S_{i}=\mathbf{1}_{\mathcal{H}}$, and $P_{i}=S_{i}S^{\dagger}_{i}$ satisfy
\begin{subequations}
\begin{align}
P_{i}P_{j}=&\delta_{i,j}P_{j},\label{eq:dis-proj}\\
\sum_{i=0}^{k-1}P_{i}&=\mathbf{1}_{\mathcal{H}}.
\end{align}
\end{subequations}
Then the operator
\begin{equation}
c=\sum_{i=0}^{k-1}S_{i+1}S^{\dagger}_{i},
\end{equation}
where $S_{k}=S_{0}$, is a generator of the cyclic group $C_{k}$ of order $k$, and each $S_{i}$ is an isometry from $\mathcal{H}$ to the quotient space $\mathcal{H}/C_{k}$, i,e $\mathcal{H}$ and $\mathcal{H}/C_{k}$ are isomorphic as Hilbert spaces.
\end{lemma}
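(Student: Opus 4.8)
The plan is to run through three short algebraic stages and then read off the geometric conclusion. \emph{First}, I would pin down the orthogonality of the family $\{S_i\}$: for $i\neq j$ the relation $P_iP_j=S_iS_i^\dagger S_jS_j^\dagger=0$, multiplied on the left by $S_i^\dagger$ and on the right by $S_j$ and simplified with $S_i^\dagger S_i=S_j^\dagger S_j=\mathbf{1}_{\mathcal{H}}$, collapses to $S_i^\dagger S_j=0$; combined with the hypothesis this gives $S_i^\dagger S_j=\delta_{i,j}\mathbf{1}_{\mathcal{H}}$. Since $\{P_i\}$ is then a family of mutually orthogonal projectors with $\sum_i P_i=\mathbf{1}_{\mathcal{H}}$, one gets the orthogonal decomposition $\mathcal{H}=\bigoplus_{i=0}^{k-1}P_i\mathcal{H}$, with each $S_i$ a surjective isometry from $\mathcal{H}$ onto $P_i\mathcal{H}$ whose inverse is $S_i^\dagger$ restricted to $P_i\mathcal{H}$. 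For $k\ge 2$ each $P_i\ne\mathbf{1}_{\mathcal{H}}$, so this already forces $\mathcal{H}$ to be infinite dimensional, consistently with $\mathcal{H}\cong\mathcal{H}^{\oplus k}$.

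\emph{Second}, I would analyze $c=\sum_i S_{i+1}S_i^\dagger$. Using $S_i^\dagger S_j=\delta_{i,j}\mathbf{1}_{\mathcal{H}}$, a direct computation gives $cc^\dagger=c^\dagger c=\sum_i P_i=\mathbf{1}_{\mathcal{H}}$, so $c$ is unitary, and an easy induction gives $c^m=\sum_i S_{i+m}S_i^\dagger$ with indices read modulo $k$, hence $c^k=\sum_i S_iS_i^\dagger=\mathbf{1}_{\mathcal{H}}$. To see the order is exactly $k$, note $c^m S_0=S_m$; if $c^m=\mathbf{1}_{\mathcal{H}}$ with $0<m<k$ then $S_m=S_0$, so $P_m=P_0$, contradicting $P_mP_0=0$ (the latter would force $P_0=0$, hence $\mathbf{1}_{\mathcal{H}}=S_0^\dagger S_0=0$, excluded since $\mathcal{H}\neq\{0\}$). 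Therefore $\{\mathbf{1}_{\mathcal{H}},c,\dots,c^{k-1}\}$ is a group isomorphic to $C_k$ with generator $c$.

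\emph{Third}, for the quotient statement I would use $cS_i=S_{i+1}$ (immediate from the orthogonality relation), which shows that $c$ maps $P_i\mathcal{H}$ isometrically onto $P_{i+1}\mathcal{H}$, so $\langle c\rangle\cong C_k$ permutes the $k$ summands of $\mathcal{H}=\bigoplus_i P_i\mathcal{H}$ freely and transitively. Identifying each $P_i\mathcal{H}$ with one model space via $S_i^{-1}$ — legitimate precisely because the $P_i\mathcal{H}$ form a single $C_k$-orbit — and invoking the direct-sum/tensor-product correspondence of Thm.\ref{thm:unit-part}, one obtains $\mathcal{H}\cong\mathcal{H}_0\otimes\mathbb{C}^k$ with $c$ acting as $\mathbf{1}_{\mathcal{H}_0}\otimes X$, where $X$ is the cyclic shift on $\mathbb{C}^k$, i.e. the regular representation of $C_k$ on the $\mathbb{C}^k$ factor. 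The regular representation of $C_k$ has a one-dimensional space of (co)invariants, so $\mathcal{H}/C_k\cong\mathcal{H}_0$; and $S_i^\dagger S_i=\mathbf{1}_{\mathcal{H}}$ identifies $\mathcal{H}_0\cong\mathcal{H}$, each $S_i$ being precisely the isometry $\mathcal{H}\to P_i\mathcal{H}\cong\mathcal{H}/C_k$ asserted in the statement.

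\emph{Main obstacle.} The algebra in the first two stages is entirely routine — every operator in sight is bounded and every index sum is finite, so there are no domain or convergence subtleties. The only genuinely delicate point is making the third stage precise: one must fix the meaning of $\mathcal{H}/C_k$ as a Hilbert space (invariants versus coinvariants, together with the normalization of its inner product under which $S_i$ becomes an honest isometry rather than one rescaled by $1/\sqrt{k}$) and cleanly justify the identification $\mathcal{H}\cong\mathcal{H}_0\otimes\mathbb{C}^k$ through the semi-unitary direct-sum correspondence of Thm.\ref{thm:unit-part}; everything else is bookkeeping.
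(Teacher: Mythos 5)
Your proof is correct and follows essentially the same route as the paper's, which is quite terse: the paper simply states that \eqref{eq:dis-proj} implies $S_i^\dagger S_j = \delta_{i,j}\mathbf{1}_{\mathcal{H}}$, that $c$ generates $C_k$ with $cS_i = S_{i+1}$, and then asserts the quotient identification. You supply the missing justifications, all of which check out: the derivation of $S_i^\dagger S_j = \delta_{i,j}\mathbf{1}_{\mathcal{H}}$ from $P_iP_j = 0$ and semi-unitarity, the verification that $c$ is unitary with $c^m = \sum_i S_{i+m}S_i^\dagger$ and $c^k = \mathbf{1}_{\mathcal{H}}$, and (notably absent from the paper) the argument that the order is exactly $k$ via $c^m S_0 = S_m$ and the contradiction with mutual orthogonality of the $P_i$. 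Your reading of $\mathcal{H}/C_k$ via the tensor decomposition $\mathcal{H}\cong\mathcal{H}_0\otimes\mathbb{C}^k$ and the one-dimensional (co)invariant subspace of the regular representation is a sensible, precise rendering of what the paper leaves informal, and you are right that the chain $\mathcal{H}\xrightarrow{S_i} P_i\mathcal{H}\cong\mathcal{H}_0\cong\mathcal{H}/C_k$ is the one that makes $S_i$ an honest isometry, rather than the averaging projection $\tfrac1k\sum_m c^m$, which would only be a contraction.

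One small bookkeeping issue: you invoke Thm.\ref{thm:unit-part} for the direct-sum/tensor-product correspondence, but that theorem only produces the direct-sum decomposition $\mathcal{H}=\bigoplus_i P_i\mathcal{H}$; the tensor-product structure $\mathcal{H}\otimes\mathcal{K}$ appears first in Lem.\ref{lem:semi-quo}, which comes \emph{after} the present lemma and depends on it, so citing it here would be circular. Fortunately no citation is needed: once you have $k$ mutually orthogonal subspaces $P_i\mathcal{H}$ summing to $\mathcal{H}$, each identified with a fixed model space via the surjective isometries $S_i$, the map $m\otimes e_i \mapsto S_i m$ gives $\mathcal{H}\cong\mathcal{H}\otimes\mathbb{C}^k$ directly. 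With that self-contained replacement, your stage three is clean.
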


\begin{proof}
The condition on Eq.(\ref{eq:dis-proj}) implies that $S^{\dagger}_{l}S_{k}=\delta_{l,k}\mathbf{1}_{\mathcal{H}}$. Using this fact it is easy to show that $c$ generates the cyclic group $C_{k}$, and that $cS_{i}=S_{i+1}$. Let $\mathcal{H}_{i}=P_{i}\mathcal{H}$, than we have that $c\mathcal{H}_{i}=\mathcal{H}_{i+1}$, meaning that all $\mathcal{H}_{i}$ are isomorphic to each other, and also isomorphic to the quotient space $\mathcal{H}/C_{k}$, therefore, each map $S_{i}$ must be an isometry from $\mathcal{H}$ to $\mathcal{H}_{i}\simeq\mathcal{H}/C_{k}$, or in other words
\begin{equation}
S^{\dagger}_{i}S_{i}=\mathbf{1}_{\mathcal{H}},\quad S_{i}S^{\dagger}_{i}=\mathbf{1}_{\mathcal{H}/C_{k}}.
\end{equation}

\end{proof}

\begin{lemma}\label{lem:semi-quo}
Let $A_{k}$ be the group algebra of $C_{k}$, endowed with the inner product $(c^{s},c^{t})=\delta_{s,t\mod{k}}$, and let $\mathcal{K}$ be a finite-dimensional Hilbert space isomorphic to $A_{k}$. Then, the map $\mathbf{S}^{\dagger}:\mathcal{H}\mapsto\mathcal{H}\otimes\mathcal{K}$ defined by
\begin{equation}
\mathbf{S}^{\dagger}=
\begin{bmatrix}
S^{\dagger}_{0}\\
\vdots\\
S^{\dagger}_{k-1}
\end{bmatrix}
,\text{with  }
\mathbf{S}^{\dagger}\ket{\psi}=
\begin{bmatrix}
S^{\dagger}_{0}\ket{\psi}\\
\vdots\\
S^{\dagger}_{k-1}\ket{\psi}
\end{bmatrix}
\end{equation}
is an isometry.\\
\\
\end{lemma}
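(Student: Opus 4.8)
The plan is to set $T:=\mathbf{S}^{\dagger}$, read off its adjoint from the block form, and verify the two defining relations of an isometry (Def.~\ref{def:isometries}) by direct computation from the structural properties of the family $\{S_{i}\}$ inherited from Lemma~\ref{lem:semi-cyclic}: $S_{i}^{\dagger}S_{i}=\mathbf{1}_{\mathcal{H}}$, and the range projectors $P_{i}:=S_{i}S_{i}^{\dagger}$ satisfy $P_{i}P_{j}=\delta_{i,j}P_{j}$ and $\sum_{i}P_{i}=\mathbf{1}_{\mathcal{H}}$. I would fix on $\mathcal{K}\simeq A_{k}$ the basis $\{c^{s}\}_{s=0}^{k-1}$ given by the group elements of $C_{k}$; by the hypothesis $(c^{s},c^{t})=\delta_{s,t\bmod k}$ this basis is orthonormal, and in it the stacked map acts as $T\ket{\psi}=\sum_{s=0}^{k-1}(S_{s}^{\dagger}\ket{\psi})\otimes\ket{c^{s}}$, with adjoint the row block $T^{\dagger}=\mathbf{S}=\begin{bmatrix}S_{0}&\cdots&S_{k-1}\end{bmatrix}$.

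First I would check norm preservation, $T^{\dagger}T=\mathbf{1}_{\mathcal{H}}$. This is immediate: $T^{\dagger}T=\mathbf{S}\,\mathbf{S}^{\dagger}=\sum_{i}S_{i}S_{i}^{\dagger}=\sum_{i}P_{i}=\mathbf{1}_{\mathcal{H}}$, which can also be seen at the level of inner products as $\langle T\psi,T\phi\rangle=\sum_{s,t}\langle S_{s}^{\dagger}\psi,S_{t}^{\dagger}\phi\rangle\,(c^{s},c^{t})=\sum_{s}\langle\psi,P_{s}\phi\rangle=\langle\psi,\phi\rangle$, using orthonormality of $\mathcal{K}$ together with the resolution of the identity by the $P_{s}$.

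Next I would establish the complementary relation $TT^{\dagger}=\mathbf{1}_{\mathcal{H}\otimes\mathcal{K}}$, which is what promotes $T$ from a norm-preserving injection to an isometry in the paper's sense. Here $TT^{\dagger}=\mathbf{S}^{\dagger}\mathbf{S}$ is a $k\times k$ block operator on $\mathcal{H}\otimes\mathcal{K}$ whose $(i,j)$ block equals $S_{i}^{\dagger}S_{j}$. The diagonal blocks are $S_{i}^{\dagger}S_{i}=\mathbf{1}_{\mathcal{H}}$ by hypothesis. For the off-diagonal blocks I would note that $S_{i}^{\dagger}S_{i}=\mathbf{1}_{\mathcal{H}}$ already forces $S_{i}=P_{i}S_{i}$ and $S_{i}^{\dagger}=S_{i}^{\dagger}P_{i}$, hence $S_{i}^{\dagger}S_{j}=S_{i}^{\dagger}P_{i}P_{j}S_{j}=\delta_{i,j}S_{i}^{\dagger}P_{j}S_{j}$, which vanishes for $i\neq j$; this is exactly the identity $S_{l}^{\dagger}S_{k}=\delta_{l,k}\mathbf{1}_{\mathcal{H}}$ already recorded inside the proof of Lemma~\ref{lem:semi-cyclic} and may simply be cited. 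Therefore $TT^{\dagger}$ is block-diagonal with every diagonal entry $\mathbf{1}_{\mathcal{H}}$, i.e. $TT^{\dagger}=\mathbf{1}_{\mathcal{H}\otimes\mathcal{K}}$, and combined with the first step this shows $\mathbf{S}^{\dagger}$ is an isometry (in fact a unitary identification $\mathcal{H}\simeq\mathcal{H}\otimes\mathcal{K}$, consistent only because $\mathcal{H}$ is infinite-dimensional in this setting).

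The computations involved are routine; the only thing to watch is the block bookkeeping on $\mathcal{H}\otimes\mathcal{K}$ and confirming that the off-diagonal blocks vanish identically. This is where the \emph{mutual orthogonality} $P_{i}P_{j}=\delta_{i,j}P_{j}$ of the range projectors, and not merely the completeness $\sum_{i}P_{i}=\mathbf{1}_{\mathcal{H}}$, does the work; I expect no deeper obstacle.
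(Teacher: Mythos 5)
Your proof is correct and follows the same route the paper takes: the paper's own proof is just the one-line instruction to compute $\mathbf{S}\mathbf{S}^{\dagger}$ and $\mathbf{S}^{\dagger}\mathbf{S}$ using the properties of the $S_{i}$, which is exactly what you carry out, and your derivation of $S_{i}^{\dagger}S_{j}=\delta_{i,j}\mathbf{1}_{\mathcal{H}}$ from $P_{i}P_{j}=\delta_{i,j}P_{j}$ correctly fills in the step the paper defers to the proof of Lemma~\ref{lem:semi-cyclic}. Your closing observation that this makes $\mathbf{S}^{\dagger}$ a unitary $\mathcal{H}\simeq\mathcal{H}\otimes\mathcal{K}$, forcing $\mathcal{H}$ to be infinite-dimensional, is an accurate reading of the paper's (nonstandard) definition of isometry, which in fact demands both $V^{\dagger}V$ and $VV^{\dagger}$ be identities.
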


\begin{proof}
Calculate $\mathbf{S}^{\dagger}\mathbf{S}$ and $\mathbf{S}\mathbf{S}^{\dagger}$ using the properties of products of $S_{i}$ operators.
\end{proof}

These two lemmas shows us under which circumstances the existence of a family of semi-unitary operators reveals the existence of a quotient structure in a Hilbert space.  

\begin{theorem}{\textit{Partial isometries and quotient structures}}
Let $\{V_{i,\alpha}\}_{i=0,...,k;\alpha=0,...,\chi}$ be a set of partial isometries in a Hilbert space $\mathcal{H}$, with $P_{i,\alpha}=V_{i,\alpha}V_{i,\alpha}^{\dagger}$ and $Q_{i,\alpha}=V_{i,\alpha}^{\dagger}V_{i,\alpha}$ such that there are families of semi-unitary operators $\{S_{i}\}_{i=0,...,k}$, and $\{{S'}_{\alpha}\}_{\alpha=0,...,\chi}$ with
\begin{equation}
V_{i,\alpha}=S_{i}{S'}^{\dagger}_{\alpha},
\end{equation}
and
\begin{subequations}
\begin{align}
P_{i,\alpha}=P_{i,\alpha'}\equiv P_{i},&\quad Q_{i,\alpha}=Q_{i',\alpha}\equiv Q_{\alpha},\\
P_{i}P_{i'}=\delta_{i,i'}P_{i},&\quad Q_{\alpha}Q_{\alpha'}=\delta_{\alpha,\alpha'}Q_{\alpha},\\
\sum_{i=0}^{k-1}P_{i}=\mathbf{1}_{\mathcal{H}},&\quad\sum_{\alpha=0}^{\chi-1}Q_{\alpha}=\mathbf{1}_{\mathcal{H}},\\
P_{i}Q_{\alpha}=Q_{\alpha}P_{i}.
\end{align}
\end{subequations}

Then, it is true that there are Hilbert spaces $\mathcal{K}_{1}$ of dimension $\chi$ and $\mathcal{K}_{2}$ of dimension $k$ such that $\mathbf{V}^{\dagger}:\mathcal{H}\otimes\mathcal{K}_{1}\mapsto\mathcal{H}\otimes\mathcal{K}_{2}$ defined by
\begin{equation}
\mathbf{V}^{\dagger}=\mathbf{S}^{\dagger}\cdot\mathbf{S'}=
\begin{bmatrix}
S^{\dagger}_{0}\\
\vdots\\
S^{\dagger}_{k-1}
\end{bmatrix}
\cdot
\begin{bmatrix}
{S'}_{0} & \cdots & {S'}_{\chi-1}
\end{bmatrix}
,
\end{equation}
is an isometry.
\end{theorem}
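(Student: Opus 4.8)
The plan is to read this theorem as a corollary assembling Lemmas~\ref{lem:semi-cyclic} and~\ref{lem:semi-quo}, applied once to the family $\{S_i\}$ and once to $\{S'_{\alpha}\}$, and then to verify the isometry property of the composition by a short block-matrix computation; the substantive work---turning a covering family of semi-unitaries into a cyclic-group action and then into a tensor-product (quotient) decomposition---has already been done in those lemmas. First I would check that each family meets the hypotheses of Lemma~\ref{lem:semi-cyclic} on its own: for $\{S_i\}$ we have $S_{i}^{\dagger}S_{i}=\mathbf{1}_{\mathcal{H}}$ by semi-unitarity, the $P_i=S_{i}S_{i}^{\dagger}$ are mutually orthogonal projectors summing to $\mathbf{1}_{\mathcal{H}}$, and the same holds verbatim for $\{S'_{\alpha}\}$ with the $Q_{\alpha}$. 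Lemma~\ref{lem:semi-cyclic} then supplies the orthonormality relations $S_{i}^{\dagger}S_{j}=\delta_{i,j}\mathbf{1}_{\mathcal{H}}$ and ${S'}_{\alpha}^{\dagger}{S'}_{\beta}=\delta_{\alpha,\beta}\mathbf{1}_{\mathcal{H}}$, together with the cyclic groups $C_k$, $C_{\chi}$ acting on $\mathcal{H}$. Feeding each family into Lemma~\ref{lem:semi-quo} (mutatis mutandis, with $\chi$ in place of $k$ for the primed family) produces finite-dimensional Hilbert spaces $\mathcal{K}_2\simeq A_k$ of dimension $k$ and $\mathcal{K}_1\simeq A_{\chi}$ of dimension $\chi$, together with isometries $\mathbf{S}^{\dagger}:\mathcal{H}\to\mathcal{H}\otimes\mathcal{K}_2$ and $\mathbf{S'}^{\dagger}:\mathcal{H}\to\mathcal{H}\otimes\mathcal{K}_1$; dualizing the latter gives $\mathbf{S'}:\mathcal{H}\otimes\mathcal{K}_1\to\mathcal{H}$, so $\mathbf{V}^{\dagger}=\mathbf{S}^{\dagger}\cdot\mathbf{S'}$ is a well-defined map $\mathcal{H}\otimes\mathcal{K}_1\to\mathcal{H}\otimes\mathcal{K}_2$.

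Next I would verify directly that $\mathbf{V}^{\dagger}$ is an isometry in the sense of Def.~\ref{def:isometries}, so both one-sided products must be computed. Writing $\mathbf{V}=\mathbf{S'}^{\dagger}\cdot\mathbf{S}$, and using $\mathbf{S}\mathbf{S}^{\dagger}=\sum_{i}S_{i}S_{i}^{\dagger}=\sum_{i}P_i=\mathbf{1}_{\mathcal{H}}$ and $\mathbf{S'}\mathbf{S'}^{\dagger}=\sum_{\alpha}{S'}_{\alpha}{S'}_{\alpha}^{\dagger}=\sum_{\alpha}Q_{\alpha}=\mathbf{1}_{\mathcal{H}}$, one obtains
\begin{align}
\mathbf{V}\mathbf{V}^{\dagger}&=\mathbf{S'}^{\dagger}\mathbf{S}\mathbf{S}^{\dagger}\mathbf{S'}=\mathbf{S'}^{\dagger}\mathbf{S'}=\mathbf{1}_{\mathcal{H}\otimes\mathcal{K}_1},\\
\mathbf{V}^{\dagger}\mathbf{V}&=\mathbf{S}^{\dagger}\mathbf{S'}\mathbf{S'}^{\dagger}\mathbf{S}=\mathbf{S}^{\dagger}\mathbf{S}=\mathbf{1}_{\mathcal{H}\otimes\mathcal{K}_2},
\end{align}
where $\mathbf{S'}^{\dagger}\mathbf{S'}=\mathbf{1}_{\mathcal{H}\otimes\mathcal{K}_1}$ and $\mathbf{S}^{\dagger}\mathbf{S}=\mathbf{1}_{\mathcal{H}\otimes\mathcal{K}_2}$ are just the block-diagonal restatements of the orthonormality relations from the first step. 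Hence $\mathbf{V}^{\dagger}$ is in fact unitary, and in particular an isometry, which is the claim.

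The main obstacle I anticipate is not the algebra but making the two tensor-factor decompositions cohere: the copy of $\mathcal{H}$ appearing in $\mathcal{H}\otimes\mathcal{K}_1$ and the one in $\mathcal{H}\otimes\mathcal{K}_2$ must refer to the same underlying block structure of $\mathcal{H}$ for the composition $\mathbf{S}^{\dagger}\cdot\mathbf{S'}$ to be legitimate, and this is exactly where the hypotheses $P_{i,\alpha}=P_i$, $Q_{i,\alpha}=Q_{\alpha}$ and the commutation $P_iQ_{\alpha}=Q_{\alpha}P_i$ must be used in earnest rather than carried along cosmetically---they force the rectangular factorization $V_{i,\alpha}=S_{i}{S'}_{\alpha}^{\dagger}$ to be compatible with a single simultaneous block decomposition of $\mathcal{H}$ indexed by pairs $(i,\alpha)$, so that the two quotient structures of Lemma~\ref{lem:semi-quo} can be split off independently. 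A secondary point worth recording is that none of the ingredients in Lemmas~\ref{lem:semi-cyclic} and~\ref{lem:semi-quo} used finite-dimensionality of $\mathcal{H}$, so in the infinite-dimensional case relevant to bosonic codes the maps $\mathbf{S}^{\dagger}$, $\mathbf{S'}^{\dagger}$ are genuine unitaries $\mathcal{H}\simeq\mathcal{H}\otimes\mathcal{K}_j$ and no dimension count is violated.
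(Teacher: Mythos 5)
Your proof follows essentially the same route as the paper's: apply Lemmas~\ref{lem:semi-cyclic} and~\ref{lem:semi-quo} to each semi-unitary family to get the unitaries $\mathbf{S}^{\dagger}$ and $\mathbf{S'}^{\dagger}$, then observe that $\mathbf{V}^{\dagger}=\mathbf{S}^{\dagger}\mathbf{S'}$ is a composition of two isometries; the block-matrix verification you add merely fills in a step the paper leaves implicit. The ``coherence'' concern you flag at the end is not a real obstacle for this theorem---the composition is legitimate as soon as $\mathbf{S'}$ lands in the same $\mathcal{H}$ on which $\mathbf{S}^{\dagger}$ is defined, which it does by construction---though your observation that the hypotheses $P_{i,\alpha}=P_i$, $Q_{i,\alpha}=Q_\alpha$, and $P_iQ_\alpha=Q_\alpha P_i$ are not actually invoked in the argument is correct and matches the paper, where they appear to be carried along to guarantee the factorization $V_{i,\alpha}=S_i{S'}_\alpha^{\dagger}$ rather than used in the isometry check.
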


\begin{proof}
Each family of semi-unitary operators satisfies the conditions of lemma \ref{lem:semi-cyclic} independently, which by lemma \ref{lem:semi-quo} implies the existence of the Hilbert spaces $\mathcal{K}_{1}$ and $\mathcal{K}_{2}$ of the required dimensions, and the existence of isometries $\mathbf{S'}^{\dagger}:\mathcal{H}\mapsto\mathcal{H}\otimes\mathcal{K}_{1}$ and $\mathbf{S}^{\dagger}:\mathcal{H}\mapsto\mathcal{H}\otimes\mathcal{K}_{2}$. Therefore, since $\mathbf{V}^{\dagger}$ is a composition of two isometries it must also be an isometry, proving item 2.
\end{proof}

\section{Algebra Homomorphism and Unitary-Invariant codes}

In this appendix we provide the mathematical justification for the informal algorithm given in Alg.\ref{alg:recipe_code}. In Sec.\ref{app_sub:code-recip}, we prove the main theorem Thm.\ref{thm:main} by employing the results about partial isometries given in Sec.\ref{app_sec:part_direct}. In Sec.\ref{app_sub:trans-log-err}, we apply the main theorem to show that the logical operators and detectable errors that commute with the logical Pauli-$Z$ of an arbitrary unitary-invariant code have a normal form, as well as all logical operators and detectable errors of the associated ideal codes, when those exist.

\subsection{The code transformation algorithm}\label{app_sub:code-recip}

Here we first state the main theorem and its proof in terms of a few lemmas, and in particular Lem.\ref{lem:spec-ext-case3}. The statement and proof of these lemmas compose the majority this section. Essentially, the proof strategy is that we first prove Lem.\ref{lem:spec-ext-case1}, and use it to prove Lem.\ref{lem:spec-ext-case2}, which is then used to prove Lem.\ref{lem:spec-ext-case3}.

\begin{theorem}\label{thm:main}
If $A,B$ are non-degenerate, self-adjoint operators such that their spectra, $\sigma(A)$ and $\sigma(B)$, are a countable union of connected subsets of $\mathbb{R}$ then, there exists two pairs of maps $(\Xi_{A},\Omega_{A})$ and $(\Xi_{B},\Omega_{B})$ such that: $\Omega_{B}\circ\Xi_{A}$ and $\Omega_{A}\circ\Xi_{B}$ are algebra homomorphisms and 
\begin{equation}\label{eq:thm_main}
(\Omega_{B}\circ\Xi_{A})[A]=B,\quad(\Omega_{A}\circ\Xi_{B})[B]=A.
\end{equation}
\end{theorem}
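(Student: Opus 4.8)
The plan is to realize both $\Omega_{B}\circ\Xi_{A}$ and $\Omega_{A}\circ\Xi_{B}$ as compositions of three algebra homomorphisms, by factoring each of $A$ and $B$ through the standard multiplication operator $\mathbf{M}_{\mathbb{R}}$ along the lines of Alg.~\ref{alg:recipe_code} and Fig.~\ref{fig:unification}. For $A$ I would: (i) build a family $\{A_{\mu}\}_{\mu\in\Lambda_{A}}$ of non-degenerate self-adjoint operators, each of the form $A_{\mu}=g_{\mu}(A)$ with $g_{\mu}$ injective on $\sigma(A)$, with $A$ itself in the family, and with $\bigcup_{\mu}\sigma(A_{\mu})=\mathbb{R}$ and the $\sigma(A_{\mu})$ pairwise essentially disjoint; (ii) form the (generalized) direct sum $J_{A}=\sum_{\mu}A_{\mu}\otimes\ketbra{\mu}{\mu}^{\Lambda_{A}}$ on $\mathcal{H}\otimes\mathcal{H}_{\Lambda_{A}}$, letting $\iota_{A}$ be the co-projection onto the $A$-summand and $\kappa_{A}$ the projection onto it; (iii) since $\sigma(J_{A})=\mathbb{R}=\sigma(\mathbf{M}_{\mathbb{R}})$ with all multiplicities equal to one, use Thm.~\ref{thm:can-par-iso} to obtain, for each $\mu$, the canonical partial isometry $V_{\mathbf{M}_{\mathbb{R}},A_{\mu}}$, and then Thm.~\ref{thm:unit-part} to assemble the unitary $U_{\mathbf{M}_{\mathbb{R}},J_{A}}$ intertwining $J_{A}$ and $\mathbf{M}_{\mathbb{R}}$; (iv) set $\Xi_{A}(O)=U_{\mathbf{M}_{\mathbb{R}},J_{A}}\,\iota_{A}(O)\,U_{\mathbf{M}_{\mathbb{R}},J_{A}}^{\dagger}$ and $\Omega_{A}(O)=\kappa_{A}(U_{\mathbf{M}_{\mathbb{R}},J_{A}}^{\dagger}\,O\,U_{\mathbf{M}_{\mathbb{R}},J_{A}})$, so that $\Xi_{A}(A)=\mathbf{M}_{\mathbb{R}}$ and $\Omega_{A}(\mathbf{M}_{\mathbb{R}})=A$. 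Running the same construction for $B$ and composing, $\Omega_{B}\circ\Xi_{A}$ sends $A\mapsto\mathbf{M}_{\mathbb{R}}\mapsto B$ and $\Omega_{A}\circ\Xi_{B}$ sends $B\mapsto\mathbf{M}_{\mathbb{R}}\mapsto A$, which gives Eq.~(\ref{eq:thm_main}) once the homomorphism property is checked.

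The homomorphism claim then splits into three elementary facts. First, $\iota_{A}$ is a homomorphism essentially by definition, sending $\sum_{i}\alpha_{i}A^{i}$ to $\sum_{i}\alpha_{i}J_{A}^{i}$; since $J_{A}^{k}=\sum_{\mu}A_{\mu}^{k}\otimes\ketbra{\mu}{\mu}^{\Lambda_{A}}$ is block-diagonal, this extends to the functional-calculus closure $\mathrm{Alg}(A)$, and $\kappa_{A}\circ\iota_{A}=\mathrm{id}$ on it. Second, conjugation by the unitary $U_{\mathbf{M}_{\mathbb{R}},J_{A}}$ is a $*$-homomorphism, and because it carries $J_{A}$ to $\mathbf{M}_{\mathbb{R}}$ it carries $f(J_{A})$ to $f(\mathbf{M}_{\mathbb{R}})$. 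Third, $\kappa_{B}$, the canonical projection onto the $B$-summand of a $\Lambda_{B}$-block-diagonal operator, is a homomorphism on block-diagonal operators because $(O\,O')_{\mu_{0}}=O_{\mu_{0}}O'_{\mu_{0}}$. Chasing an arbitrary $f(A)$ through the diagram, $f(A)\mapsto f(J_{A})\mapsto f(\mathbf{M}_{\mathbb{R}})\mapsto f(J_{B})\mapsto f(B)$, shows $\Omega_{B}\circ\Xi_{A}\colon\mathrm{Alg}(A)\to\mathrm{Alg}(B)$ is well defined, equals $f(A)\mapsto f(B)$, and is an algebra homomorphism; the argument for $\Omega_{A}\circ\Xi_{B}$ is identical.

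What remains is steps (i)--(iii), and here the hypothesis on $\sigma(A)$ enters through a case analysis, which I would organize as the lemmas Lem.~\ref{lem:spec-ext-case1}--\ref{lem:spec-ext-case3} announced in Sec.~\ref{app_sub:code-recip}, proved in that order. When $\sigma(A)$ is purely absolutely continuous it is an at most countable union of intervals, and a single injective $\tan$-like map sends it onto a subset of $\mathbb{R}$ of full measure, with one extra translated layer filling the leftover gaps, so $\Lambda_{A}$ is countable and Thm.~\ref{thm:can-par-iso} applies directly. When $\sigma(A)$ is pure point one instead tiles $\mathbb{R}$ by the translated copies $\sigma(A)+\nu$ (together with a reflection $\tau\in\{-1,1\}$ when $\sigma(A)$ is semibounded), exactly as in the $\{n_{(\tau,\nu)}\}$ example of Sec.~\ref{sub:map-trans-rot}; now $\Lambda_{A}$ carries a genuinely continuous parameter, $\mathcal{H}_{\Lambda_{A}}$ is an $L^{2}$-space, $\sum_{\mu\in\Lambda_{A}}$ is a direct integral, and the operators $V_{\mathbf{M}_{\mathbb{R}},A_{\mu}}^{\dagger}$ are generalized partial isometries that only make sense between rigged Hilbert spaces (App.~\ref{app_sec:rigged}). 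The mixed case is reduced to the previous two by decomposing $\mathcal{H}$ into the reducing subspaces of the point and continuous parts of $A$, handling each separately, and taking the product of the resulting index sets.

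I expect the real obstacle to be this last, continuous-index situation: verifying that the operator $U_{\mathbf{M}_{\mathbb{R}},J_{A}}$ assembled from generalized partial isometries is genuinely unitary (the $U^{\dagger}U=UU^{\dagger}=\mathds{1}_{\mathbb{R}}$ bookkeeping of the type in Eqs.~(\ref{eq:partial-p_n1},\ref{eq:partial-p_n2})), and that conjugation by it followed by $\iota$ and $\kappa$ really preserves products and not merely polynomial identities — the delta-function factors in the weak-idempotency relations of the generalized partial isometries must be controlled, which is precisely why the rigged-Hilbert-space formalism of App.~\ref{app_sec:rigged} is needed rather than bare Hilbert-space arguments. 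By comparison, the existence of the tiling family $\{A_{\mu}\}$ in step (i) is routine once $\Lambda_{A}$ is allowed to carry both a discrete and a continuous factor. Finally, extending $\Omega_{B}\circ\Xi_{A}$ beyond $\mathrm{Alg}(A)$ to the operators relevant for logical gates and detectable errors — which need not be functions of $A$ — is the content of Thm.~\ref{thm:log-op} and is not needed for Thm.~\ref{thm:main} itself.
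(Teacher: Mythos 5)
Your proposal takes essentially the same route as the paper: factor both $A$ and $B$ through the momentum/multiplication operator by (i) choosing a family $\{A_{\mu}\}$ of injective functions of $A$ whose spectra tile $\mathbb{R}$, (ii) forming the direct-sum/direct-integral $J_{A}$ with co-projection $\iota_{A}$ and projection $\kappa_{A}$, (iii) assembling a unitary from canonical (and, for mixed spectral types, generalized) partial isometries via Thm.~\ref{thm:can-par-iso} and Thm.~\ref{thm:unit-part}, and (iv) conjugating; the composition $\Omega_{B}\circ\Xi_{A}$ is then a homomorphism because each factor is. Your three-way case split (continuous, pure-point, mixed) also matches Lem.~\ref{lem:spec-ext-case1}--\ref{lem:spec-ext-case3}, and you correctly locate the real difficulty in the continuous-index, generalized-partial-isometry bookkeeping that forces the move to rigged Hilbert spaces.

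The one place your sketch is too quick is the pure-point case. You propose to tile $\mathbb{R}$ by the translated copies $\sigma(A)+\nu$, $\nu\in[0,1)$, ``exactly as in the $\{n_{(\tau,\nu)}\}$ example,'' but that only covers $\mathbb{R}$ when the eigenvalues are unit-equispaced. The theorem's hypothesis allows pure-point spectra such as $\{k^{2}:k\in\mathbb{N}\}$ (each singleton is a connected subset of $\mathbb{R}$), and translating by $\nu\in[0,1)$ then leaves gaps like $(2,4)$ uncovered. The paper's Lem.~\ref{lem:spec-ext-case2} fixes this by first choosing an order-preserving enumeration $f_{A}$ of $\sigma(A)$ by $\mathbb{N}$ or $\mathbb{Z}$ and then building $A_{\nu}$ from the \emph{piecewise-linear interpolation} $f'_{A}$, so that as $\nu$ ranges over $[0,1)$ the $\sigma(A_{\nu})$ sweep out the full interval between consecutive eigenvalues; only after this ``continuization'' does one apply the translation/reflection tiling of Lem.~\ref{lem:spec-ext-case1}. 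Your step (i) does allow arbitrary injective $g_{\mu}$, so the framework can absorb this, but the interpolation step is precisely where the substance of Lem.~\ref{lem:spec-ext-case2} lives and it should be made explicit rather than subsumed into ``translated copies.''
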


\begin{proof}
By Lem.\ref{lem:spec-ext-case3}, there must exist pairs $(\Xi_{A},\Omega_{A})$ and $(\Xi_{B},\Omega_{B})$ of algebra homomorphisms such that
\begin{equation}
\Xi_{X}[X]=p,\quad\Omega_{X}[p]=X,
\end{equation}
where $X$ is wither $A$ or $B$. Then, we just have to calculate the expressions $(\Omega_{B}\circ\Xi_{A})[A]$ and $(\Omega_{A}\circ\Xi_{B})[B]$ to prove the theorem.
\end{proof}

\begin{lemma}\label{lem:spec-ext-case1}
Let $\mathcal{D}_{Sch}\subset L^{2}(\mathbb{R})\subset\mathcal{D}^{\cross}_{Sch}$ be the RHS where the algebra of observables generated by $q$ and $p$ are $\tau_{\mathcal{D}_{Sch}}$-continuous. Let $A$ be a self-adjoint operator that is an element of this algebra, whose spectrum is continuous and given either by
\begin{enumerate}
    \item a closed, semi-infinite interval $[a,\infty)$ ( alternatively, $(-\infty,a^{'}]$), or
    \item a closed, bounded interval $[b_{1},b_{2}]$.
\end{enumerate}
Then, there is a pair of algebra homomorphisms $(\Xi_{A},\Omega_A)$ such that
\begin{equation}
\Xi_{A}[A]=p,\quad\Omega_{A}[p]=A.
\end{equation}
\end{lemma}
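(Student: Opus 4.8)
The plan is to run the recipe of Alg.~\ref{alg:recipe_code} explicitly for the two admissible shapes of $\sigma(A)$, using that here the ambient rigged Hilbert space is the Schwartz triple $\mathcal{D}_{Sch}\subset L^{2}(\mathbb{R})\subset\mathcal{D}^{\cross}_{Sch}$ and that the standard multiplication operator $\mathbf{M}_{\mathbb{R}}$ may be identified with the momentum operator $p$ (equivalently, one composes with the Fourier transform at the end). Since $\sigma(A)$ is a single connected interval strictly smaller than $\mathbb{R}$, the first step is to exhibit a countable family of affine bijections of $A$ whose spectra tile $\mathbb{R}$.

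For case~1, $\sigma(A)=[a,\infty)$, I take $\Lambda=\{+1,-1\}$ with $A_{+1}:=A$ and $A_{-1}:=2a\mathds{1}-A$, so that $\sigma(A_{+1})=[a,\infty)$ and $\sigma(A_{-1})=(-\infty,a]$. For case~2, $\sigma(A)=[b_{1},b_{2}]$ with $\ell:=b_{2}-b_{1}$, I take $\Lambda=\mathbb{Z}$ with $A_{n}:=A+n\ell\mathds{1}$, so $\sigma(A_{n})=[b_{1}+n\ell,\,b_{2}+n\ell]$. In both cases each $A_{\mu}$ is an affine, hence bijective, function of $A$; $A$ itself occurs in the family (at $\mu_{0}=+1$, resp. $n=0$); $\bigcup_{\mu}\sigma(A_{\mu})=\mathbb{R}$; and $\bigcap_{\mu}\sigma(A_{\mu})$ is a measure-zero set of endpoints, which I discard using $L^{2}([c,d])\simeq L^{2}([c,d]\setminus X)$ for null $X$ (as in Sec.~\ref{sub:map-trans-rot}), so the spectra become genuinely disjoint. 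Then $J_{A}:=\bigoplus_{\mu\in\Lambda}A_{\mu}$, defined on the corresponding $\mathcal{H}_{\Lambda}$-labelled direct sum, is self-adjoint, non-degenerate, and has purely continuous spectrum $\sigma(J_{A})=\mathbb{R}$; hence by the spectral theorem it is unitarily equivalent to $\mathbf{M}_{\mathbb{R}}\simeq p$, and we may take its spectral RHS to be the Schwartz triple.

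All spectra in play being purely continuous, the only partial isometries I need are the canonical ones of Thm.~\ref{thm:can-par-iso}, specialized to the continuous case: for each $\mu$ set $V_{p,A_{\mu}}:=\int_{\sigma(A_{\mu})}\dd{\nu}\,\tensor[_p]{\ketbra{\nu}{\nu}}{_{A_{\mu}}}$. Each $V_{p,A_{\mu}}$ is a partial isometry whose co-image is the whole spectral space of $A_{\mu}$ and whose image is $\int_{\sigma(A_{\mu})}\dd{\nu}\,\ketbra{\nu}{\nu}_{p}$; disjointness of the $\sigma(A_{\mu})$ makes these images mutually orthogonal, and the tiling property makes them sum to $\mathds{1}_{L^{2}(\mathbb{R})}$, while the co-images sum to the identity on $\bigoplus_{\mu}\mathcal{H}_{A_{\mu}}$. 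Thm.~\ref{thm:unit-part} then assembles the $V_{p,A_{\mu}}$ into a single unitary $U_{p,J_{A}}$ with $U_{p,J_{A}}J_{A}U_{p,J_{A}}^{\dagger}=p$ and $U_{p,J_{A}}^{\dagger}p\,U_{p,J_{A}}=J_{A}$.

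Finally I define $\Xi_{A}[O]:=U_{p,J_{A}}\,\iota_{A}(O)\,U_{p,J_{A}}^{\dagger}$ and $\Omega_{A}[O]:=\kappa_{A}\bigl(U_{p,J_{A}}^{\dagger}\,O\,U_{p,J_{A}}\bigr)$, where $\iota_{A}$ is the co-projection $f(A)\mapsto f(J_{A})$ into the $\mu_{0}$-summand and $\kappa_{A}$ is the compression $K_{A}(\cdot)K_{A}^{\dagger}$ with $K_{A}=\mathds{1}\otimes\bra{\mu_{0}}^{\Lambda}$. Conjugation by a unitary is an algebra homomorphism; $\iota_{A}$ is an algebra homomorphism on the algebra of analytic functions of $A$, well defined there because a nonzero analytic function cannot vanish on an interval, so $f(A)=0$ forces $f\equiv0$ on every $\sigma(A_{\mu})$ and hence $f(J_{A})=0$; and $\kappa_{A}$ restricts to an algebra homomorphism on block-diagonal operators, which is all it is ever applied to. Composing and evaluating gives $\Xi_{A}[A]=U_{p,J_{A}}J_{A}U_{p,J_{A}}^{\dagger}=p$ and $\Omega_{A}[p]=\kappa_{A}(J_{A})=A$, as required. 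The part I expect to require real care is not this partial-isometry bookkeeping, which Thms.~\ref{thm:can-par-iso} and \ref{thm:unit-part} hand us, but the rigged-Hilbert-space hygiene around it: justifying rigorously that discarding the null endpoint overlaps leaves $J_{A}$ unitarily equivalent to $p$ with multiplicity one, and that $\iota_{A}$ and $\kappa_{A}$ are genuine $\tau_{\mathcal{D}}$-continuous maps between the spectral nuclear spaces of $A$, $J_{A}$ and $p$ — that is, that ``sending the Taylor series of $A$ to that of $J_{A}$'' and back are legitimate statements about domains of unbounded operators rather than formal symbol manipulations.
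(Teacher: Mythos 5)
Your proposal reconstructs the paper's own argument almost exactly: the reflection $A_{-1}=2a\mathds{1}-A$ for the half-line, the translates $A_{n}=A+n(b_{2}-b_{1})\mathds{1}$ for the bounded interval, assembly of the unitary $U_{p,J_{A}}$ from the canonical partial isometries of Thm.~\ref{thm:can-par-iso} via Thm.~\ref{thm:unit-part}, and the pair $(\Xi_{A},\Omega_{A})$ built from conjugation together with the co-projection $\iota_{A}$ and compression $\kappa_{A}$ are all precisely the steps taken in the paper. The only addition is that you explicitly acknowledge the measure-zero endpoint overlaps $\sigma(A_{\mu})\cap\sigma(A_{\mu'})$ and dispose of them via $L^{2}([c,d])\simeq L^{2}([c,d]\setminus X)$, a point the paper's proof leaves implicit.
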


\begin{proof}
First, suppose that $\sigma(A)=[a,\infty)$ (or $(-\infty,a],$). Then, the operator
\begin{equation}
A^{'}:=2a\mathds{1}_{A}-A
\end{equation}
where
\begin{equation}
\mathds{1}_{A}:=\int^{\infty}_{a}\dd{\nu}\ketbra{\nu}{\nu}_{A}
\end{equation}
has $\sigma(A^{'})=(-\infty,a]$ (or $[a,\infty)$). By definition, there are canonical partial isometries $V^{p}_{A}$ and $V^{p}_{A^{'}}$ such that,
\begin{subequations}
\begin{align}
V^{p}_{A}V^{p\dagger}_{A}&=\mathds{1}_{A},\quad V^{p\dagger}_{A}V^{p}_{A}=\int_{\sigma(A)}\dd{\nu}\ketbra{\nu}{\nu}_{p}\text{, and }\\
V^{p}_{A^{'}}V^{p\dagger}_{A^{'}}&=\mathds{1}_{A^{'}},\quad V^{p\dagger}_{A^{'}}V^{p}_{A^{'}}=\int_{\sigma(A^{'})}\dd{\nu}\ketbra{\nu}{\nu}_{p}.
\end{align}
\end{subequations}

These equations imply that $V^{p\dagger}_{A}$ and $V^{p\dagger}_{A^{'}}$ are, in fact, semi-unitary operators satisfying the conditions of Lem.\ref{lem:semi-cyclic} and Lem.\ref{lem:semi-quo}, therefore the map
\begin{equation}
\mathbf{S}^{\dagger}=
\begin{bmatrix}
V^{p}_{A}\\
V^{p}_{A^{'}}
\end{bmatrix}
,
\end{equation}
satisfies
\begin{subequations}
\begin{align}
\mathbf{S}\mathbf{S}^{\dagger}=\int^{\infty}_{a}\dd{\nu}\ketbra{\nu}{\nu}_{p}&+\int^{a}_{-\infty}\dd{\nu}\ketbra{\nu}{\nu}_{p}=\mathds{1}_{p},\\
\mathbf{S}^{\dagger}\mathbf{S}=&\mathds{1}_{A}\oplus\mathds{1}_{A^{'}}=\mathds{1}_{A\oplus A^{'}},
\end{align}
\end{subequations}
and
\begin{equation}
\mathbf{S}p\mathbf{S}^{\dagger}=A\oplus A^{'},\quad\mathbf{S}^{\dagger}(A\oplus A^{'})\mathbf{S}.
\end{equation}

Now, there are always a pair of unique algebra homomorphisms $\iota_{A}:\mathbb{C}[[A]]\rightarrow\mathbb{C}[[A\oplus A^{'}]]$ and $\kappa_{A}:\mathbb{C}[[A]]\rightarrow\mathbb{C}[[A\oplus A^{'}]]$, where $\mathbb{C}[[A]]$ and $\mathcal{C}[[A\oplus A']]$ are the rings of formal power series in the variables "$A$" and "$A\oplus A^{'}$". Then, for any formal power series $f[A]$, we have that 
\begin{equation}
\iota_{A}(f[A])=f[A\oplus A^{'}],\quad\kappa_{A}[f[A\oplus A']]=f[A].
\end{equation}
Therefore, we have that
\begin{subequations}
\begin{align}
\Xi_{A}[A]&:=\mathbf{S}^{\dagger}\iota_{A}[A]\mathbf{S}=V^{p\dagger}_{A}AV^{p}_{A}+V^{p\dagger}_{A^{'}}A^{'}V^{p}_{A^{'}}=p\text{, and }\\
\Omega_{A}[p]&:=\kappa_{A}[\mathbf{S}p\mathbf{S}^{\dagger}]=\kappa_{A}\left(
\begin{bmatrix}
A & 0\\
0 & A^{'}
\end{bmatrix}
\right)
=A
\end{align}
\end{subequations} 
Since isometries are algebra homomorphisms, then $(\Xi_{A},\Omega_{A})$ defined above are also homomorphisms.

Now, assuming that $\sigma(A)=[b_{1},b_{2}]$, define a set of operators $\{A_{n}\}_{n\in\mathbb{Z}}$ where
\begin{equation}
A_{n}=A+n(b_{2}-b_{1})\mathbf{1}_{A}=\int_{\sigma(A)}\dd{\nu}\nu+n(b_{2}-b_{1})\ketbra{\nu}{\nu}_{A}.
\end{equation}
We have that $\sigma(A_{n})=[b_{1}+n(b_{2}-b_{1}),b_{2}+n(b_{2}-b_{1})]$, which implies that the canonical partial isometries $V^{p}_{A_{n}}$ satisfy
\begin{subequations}
\begin{align}
V^{p}_{A_{n}}V^{p\dagger}_{A_{n'}}&=\delta_{n,n^{'}}\mathds{1}_{A^{'}}\text{, and }\\ 
V^{p\dagger}_{A_{n}}V^{p}_{A_{n'}}&=\delta_{n,n^{'}}\int_{\sigma(A_{n^{'}})}\dd{\nu}\ketbra{\nu}{\nu}_{p}=\delta_{n,n^{'}}L_{n}.
\end{align}
\end{subequations}
Therefore, the family $\{V^{p}_{A_{n}}\}_{n\in\mathbb{Z}}$ are a family of semi-unitary operators that satisfy Lem.\ref{lem:semi-cyclic} and Lem.\ref{lem:semi-quo}. 

This fact, allows us to define the Hilbert space $\mathcal{K}$, generated by states $\ket{n}$ with $n\in\mathbb{Z}$, such that we can make the identification $\ket{\nu+n(b_{2}-b_{1})}_{A_{n}}:=\ket{\nu}_{A}\otimes\ket{n}$. This allows to define the operators
\begin{subequations}
\begin{equation}
\mathbf{S}^{\dagger}:=\sum_{n\in\mathbb{Z}}V^{p}_{A_{n}}\bra{n},
\end{equation}
which by Lem.\ref{lem:semi-quo} must be unitary, and the operator
\begin{equation}
J_{A}:=\sum_{n\in\mathbb{Z}}\int_{\sigma(A)}\dd{\nu}(\nu+n(b_{2}-b_{1}))\ketbra{\nu}{\nu}_{A}\otimes\ketbra{n}{n},
\end{equation}
\end{subequations}
which must satisfy
\begin{equation}
\mathbf{S}p\mathbf{S}^{\dagger}=J_{A},\quad\mathbf{S}^{\dagger}J_{A}\mathbf{S}=p.
\end{equation}

As it was done previously, we can also define $\iota_{A}$ and $\kappa_{A}$ as mappings between formal power series over $A$ and $J_{A}$. Then, we have that the maps $(\Xi_{A},\Omega_{A})$ defined by
\begin{subequations}
\begin{align}
\Xi_{A}[A]:=&\mathbf{S}\iota_{A}[A]\mathbf{S}^{\dagger}=\sum_{n\in\mathbb{Z}}V^{p\dagger}_{A_{n}}A_{n}V^{p}_{A_{n}}=p,\\
\Omega_{A}[p]:=&\kappa_{A}[\mathbf{S}^{\dagger}p\mathbf{S}]=\kappa_{A}[J_{A}]=A,
\end{align}
\end{subequations}
are algebra homomorphisms.
\end{proof}

\begin{lemma}\label{lem:spec-ext-case2}
Let $\mathcal{D}_{Sch}\subset L^{2}(\mathbb{R})\subset\mathcal{D}^{\cross}_{Sch}$ be the RHS where the algebra of observables generated by $q$ and $p$ are $\tau_{\mathcal{D}_{Sch}}$-continuous. Let $A$ be a self-adjoint operator that is an element of this algebra, whose spectrum is known to be a pure-point spectrum such that either
\begin{enumerate}
    \item $\sigma_{p}(A)$ finite, or
    \item $\sigma_{p}(A)$ is not finite, but countable and such that    \begin{enumerate}
        \item If $x\in\sigma(A)$ then $x\in[\inf(\sigma(A)),\sup(\sigma(A))]$ or,
        \item $\sigma_{p}(A)$ is not contained in any bounded interval but contained in a semi-infinite interval $[a,\infty)$ (or $(-\infty,a]$) or,
        \item $\sigma_{p}(A)$ is not contained in any bounded nor semi-infinite interval.
\end{enumerate}
Then, there is a pair of algebra homomorphisms $(\Xi_{A},\Omega_A)$ such that
\begin{equation}
\Xi_{A}[A]=p,\quad\Omega_{A}[p]=A.
\end{equation}
\end{enumerate}
\end{lemma}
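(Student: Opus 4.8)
The plan is to mirror the proof of Lemma~\ref{lem:spec-ext-case1}, replacing its ``slice an interval into translated copies'' idea by ``relabel the point spectrum to a standard discrete set, then tile $\mathbb{R}$ by affine copies of it''. Concretely, for a self-adjoint $A$ with pure-point spectrum I would proceed in three stages, exactly as prescribed by Alg.~\ref{alg:recipe_code}. First, apply a bijective function $f$ of $A$ to obtain $\tilde A=f(A)$ whose spectrum is a \emph{standard} discrete set $D$: $D=\{0,1,\dots,k-1\}$ when $\sigma_p(A)$ is finite with $k$ elements, $D=\mathbb{N}$ when $\sigma_p(A)$ is countably infinite and one-sided (after a sign reflection if it is bounded above), and $D=\mathbb{Z}$ when it is unbounded on both sides. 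Since $f$ is a bijection of $\mathbb{R}$ it induces an isomorphism of the operator algebras generated by $A$ and $\tilde A$, so a homomorphism carrying $A$ to $\tilde A$ and back is immediate. Second, build around $\tilde A$ a family of \emph{affine} copies whose spectra tile $\mathbb{R}$ exactly and with multiplicity one: for $D=\{0,\dots,k-1\}$ take $\tilde A_{(m,\nu)}=\tilde A+(km+\nu)\mathds{1}$ with $(m,\nu)\in\mathbb{Z}\times[0,1)$, exactness following from the unique decomposition $x=km+j+\nu$ of a real with $m\in\mathbb{Z}$, $j\in\{0,\dots,k-1\}$, $\nu\in[0,1)$; for $D=\mathbb{N}$ take the rotation-code family $\tilde A_{(\tau,\nu)}=\tau(\tilde A+\nu\mathds{1})$ with $\tau\in\{-1,1\}$, $\nu\in[0,1)$ for $\tau=+1$ and $\nu\in(0,1]$ for $\tau=-1$, which is literally the $\{n_{(\tau,\nu)}\}$ of Sec.~\ref{sub:map-trans-rot}; for $D=\mathbb{Z}$ take $\tilde A_\nu=\tilde A+\nu\mathds{1}$, $\nu\in[0,1)$. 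Form $J_{\tilde A}$ as the direct sum/integral of these copies over the index set; since their spectra partition $\mathbb{R}$ with multiplicity one, $\sigma(J_{\tilde A})=\mathbb{R}$ and $\mathcal{H}_{J_{\tilde A}}\cong L^2(\mathbb{R})$, exactly as in Sec.~\ref{sub:map-trans-rot}. Third, assemble the generalized partial isometries $V^{p}_{\tilde A_\mu}=\sum_{\lambda\in\sigma(\tilde A_\mu)}\ket{\lambda}_p\bra{\lambda}_{\tilde A_\mu}$ (well-defined between the relevant spectral rigged Hilbert spaces, Eq.~(\ref{eq:gen_can_par_iso})) into $U_{p,J_{\tilde A}}=\sum_\mu V^{p}_{\tilde A_\mu}\otimes\bra{\mu}$; by Thm.~\ref{thm:unit-part} and the weak adjoint relations, computed just as in Eqs.~(\ref{eq:partial-p_n1},\ref{eq:partial-p_n2}), this $U_{p,J_{\tilde A}}$ is unitary and intertwines $J_{\tilde A}$ with $p$.

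With $J_{\tilde A}$ and $U_{p,J_{\tilde A}}$ in hand the rest is identical to Lemma~\ref{lem:spec-ext-case1}: let $\iota_{\tilde A},\kappa_{\tilde A}$ be the canonical co-projection and projection on formal power series between $\mathbb{C}[[\tilde A]]$ and $\mathbb{C}[[J_{\tilde A}]]$, set $\Xi_{\tilde A}[\,\cdot\,]=U_{p,J_{\tilde A}}\,\iota_{\tilde A}[\,\cdot\,]\,U^{\dagger}_{p,J_{\tilde A}}$ and $\Omega_{\tilde A}[\,\cdot\,]=\kappa_{\tilde A}[\,U^{\dagger}_{p,J_{\tilde A}}\,\cdot\,U_{p,J_{\tilde A}}\,]$, and pre- and post-compose with the relabeling homomorphism of the first stage to obtain $(\Xi_A,\Omega_A)$. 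Then $\Xi_A[A]=p$ and $\Omega_A[p]=A$ follow from $\iota_{\tilde A}(\tilde A)=J_{\tilde A}$, $\kappa_{\tilde A}(J_{\tilde A})=\tilde A$, and the intertwining property, while all four maps are compositions of algebra homomorphisms, hence algebra homomorphisms. I would present the four cases (the finite case and 2(a)--(c)) in parallel, since after the normalization step they differ only in which standard set $D$ is used.

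\textbf{The main obstacle} is the normalization step in case 2(a). A bounded infinite point spectrum necessarily has accumulation points lying in $\sigma(A)$ --- the prototype being $\sigma(A)=\{0\}\cup\{1/i:i\ge1\}$ --- and near such points no translation-based, or more generally order-preserving, family of copies of $\sigma(A)$ can tile $\mathbb{R}$ with finite multiplicity: countably many copies can never cover $\mathbb{R}$, and a continuum of monotone copies of an accumulating set piles up to infinite multiplicity near the accumulation points, so the resulting $J_A$ would be unitarily equivalent to $p$ with infinite multiplicity rather than to $p$ itself. One is therefore forced to use a genuinely non-monotone bijection $f\colon\sigma_p(A)\to\mathbb{N}$ (which exists purely by cardinality and extends to a bijection of $\mathbb{R}$), and the delicate point is that $\tilde A=f(A)$ no longer lies in the algebra generated by $q$ and $p$; the statement ``$\iota_{\tilde A}$ sends a Taylor series of $\tilde A$ to the same series in $J_{\tilde A}$'' must then be made meaningful by passing to the spectral RHS $\mathcal{D}_{\tilde A}\subset\mathcal{H}_{\tilde A}\subset\mathcal{D}^{\cross}_{\tilde A}$ of $\tilde A$ built in App.~\ref{app_sec:rigged}, and by checking that the bijection $f$ of the spectra induces a bona fide isomorphism of the two operator rings. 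I would isolate this relabeling step as a short separate claim; once it is granted, cases 2(a), 2(b) and 2(c) all collapse onto the $D=\mathbb{N}$ construction above.
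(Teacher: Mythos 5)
Your proof takes a genuinely different route from the paper's. The paper never replaces $\sigma_p(A)$ by a standard set $D$; instead it keeps the original eigenvalues, picks an order-preserving bijection $f_A:\mathbb{K}\to\sigma(A)$ (with $\mathbb{K}\subset\mathbb{Z}$), extends $f_A$ to a piecewise-linear function $f'_A$ via Heaviside factors, forms the continuum family $A_\nu=\sum_i f'_A(i+\nu)\ketbra{i+\nu}{i+\nu}$ for $\nu\in[0,1)$, and observes that the union of the $\sigma(A_\nu)$ is an interval (bounded, semi-infinite, or all of $\mathbb{R}$ depending on the case). When that interval is not already $\mathbb{R}$ (cases 1, 2(a), 2(b)), the paper then invokes Lemma~\ref{lem:spec-ext-case1} a second time to tile $\mathbb{R}$ by translates of that interval. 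Your plan collapses this two-stage reduction into a single tiling: conjugate by a bijection to land on $D\in\{\{0,\dots,k-1\},\mathbb{N},\mathbb{Z}\}$, then tile $\mathbb{R}$ in one go with affine copies indexed by $\mathbb{Z}\times[0,1)$, $\{\pm1\}\times[0,1)$, or $[0,1)$ respectively. Both constructions feed the same Thm.~\ref{thm:unit-part}/Cor.~\ref{cor:norm-form} machinery at the end; yours is more modular (it literally reuses the $n_{(\tau,\nu)}$ family of Sec.~\ref{sub:map-trans-rot} as the $D=\mathbb{N}$ workhorse) and avoids the second pass through Lemma~\ref{lem:spec-ext-case1}, while the paper's piecewise-linear interpolation preserves the actual eigenvalues of $A$ throughout and therefore never needs the pre/post-composition with a relabeling homomorphism.

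Your worry about case 2(a) is well founded, and it is worth noting that the paper's own proof does not resolve it either. Any bounded infinite pure-point spectrum has accumulation points (Bolzano--Weierstrass), so the ordered set $\sigma_p(A)$ is generally \emph{not} order-isomorphic to $\mathbb{N}$ or $\mathbb{Z}$ --- your $\{0\}\cup\{1/i\}$ example has order type $1+\omega^{*}$, and a dense-in-itself countable spectrum is even worse. The paper's proof for case 2(a) simply writes ``there is an order preserving function $f_A:\sigma(A)\rightarrow\mathbb{K}$ \dots which we assume as given'' and proceeds; that assumption fails for exactly the operators your prototype describes, so the piecewise-linear construction $f'_A$ cannot even be written down there. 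Your proposed fix --- drop monotonicity, take an arbitrary cardinality bijection $\sigma_p(A)\to\mathbb{N}$, and carry out the relabeling in the spectral rigged Hilbert space $\mathcal{D}_{\tilde A}\subset\mathcal{H}_{\tilde A}\subset\mathcal{D}^{\cross}_{\tilde A}$ rather than in $\mathcal{D}_{Sch}$ --- is the natural way out, and you correctly flag that the price is that $\tilde A=f(A)$ may leave the $\tau_{\mathcal{D}_{Sch}}$-continuous algebra, so the composite homomorphism has to be verified on the level of the two spectral RHS's. Isolating this relabeling step as a separate claim, as you suggest, would be a genuine improvement over the paper's treatment, not merely a reformulation of it.
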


\begin{proof}
If $\sigma(A)$ is finite, there is an order preserving function $f_{A}:\sigma(A)\rightarrow I_{A}$, with $I_{A}$ being the set of positive integers up to $\abs{\sigma(A)}$, and which we assume as given, such that
\begin{equation}
A=\sum_{i\in I_{A}}f_{A}(i)\ketbra{i}{i}_{A}.
\end{equation}

Let $a\in\mathbb{R}$ with $\max(\sigma(A))<a$. Define the point-wise linear function
\begin{equation}
f^{'}_{A}(x)=\sum^{I_{A}}_{i=1}[f(i)+(f(i+1)-f(i))(x-i)](H_{i}(x)-H_{i+1}(x)),
\end{equation}
where $H_{i}(x):=H(x-i)$ is the Heaviside step function, and we also define $f(I_{A}+1)=a$. Let $\{A_{\nu}\}_{\nu\in[0,1)}$ be a collection of operators such that
\begin{equation}
A_{\nu}=\sum_{i\in I_{A}}f^{'}_{A}(i+\nu)\ketbra{i+\nu}{i+\nu}_{A_{\nu}}.
\end{equation}

We have that the union over all $\sigma(A_{\nu})$ is the closed interval $[\min(\sigma(A)),a]$. Let $V^{p}_{A_{\nu}}$, defined by
\begin{equation}
V^{p}_{A_{\nu}}:=\sum^{I_{A}}_{i=1}\tensor[_{p}]{\ketbra{i+\nu}{i+\nu}}{_{A_{\nu}}},
\end{equation}
be generalized canonical partial isometries. Then, let $\mathcal{D}_{Sch}\subset L^{2}([0,1))\subset\mathcal{D}_{Sch}^{\cross}$ be an RHS such that the multiplication operator
\begin{equation}
\mathbf{M}_{[0,1)}[\psi(x)]=x\psi(x),
\end{equation}
where $\psi\in\mathcal{D}_{Sch}$ and $x\in[0,1)$, has the generalized eigenvectors $\ket{\lambda}_{\mathbf{M}_{[0,1)}}$ such that any element $\ket{\phi}\in L^{2}([0,1))$ can be expanded as
\begin{equation}
\ket{\phi}=\int^{1}_{0}\dd{\mu}\phi(\mu)\ket{\mu}_{\mathbf{M}_{[0,1)}}.
\end{equation}

Now, let us make the identification $\ket{i+\nu}_{A_{\nu}}=\ket{i}_{A}\otimes\ket{\nu}_{\mathbf{M}_{[0,1)}}$, and define the operators
\begin{subequations}
\begin{equation}
\mathbf{S}^{\dagger}=\sum^{I_{A}}_{i=1}\int^{1}_{0}\dd{\lambda}\tensor[_p]{\ket{i+\nu}}{}(\bra{i}_{A}\otimes\bra{\lambda}_{\mathbf{M}_{[0,1)}}),
\end{equation}
and
\begin{equation}
A^{'}:=\int^{1}_{0}\dd{\mu}A_{\mu}\otimes\ketbra{\mu}{\mu}_{\mathbf{M}_{[0,1)}}. 
\end{equation}
\end{subequations}
Then, we can calculate
\begin{subequations}
\begin{align}
\mathbf{S}\mathbf{S}^{\dagger}&=\mathds{1}_{A}\otimes\mathds{1}_{\mathbf{M}_{[0,1)}},\\
\mathbf{S}^{\dagger}\mathbf{S}&=\int^{a}_{\min(\sigma(A))}\dd{\mu}\ketbra{\mu}{\mu}_{p}.
\end{align}
\end{subequations}
Therefore, $\mathbf{S}$ is a semi-unitary operator and $A^{'}$ is an operator with continuous spectrum $\sigma(A^{'})$. Therefore, we can apply Lem.\ref{lem:spec-ext-case1} to $\sigma(A^{'})$ and build the pair $(\Xi_{A},\Omega_{A})$.

If $\sigma(A)$ is not finite, but contained in a bounded interval, then there is an order preserving function $f_{A}:\sigma(A)\rightarrow\mathbb{K}$, where $\mathbb{K}$ is either $\mathbb{N}$ or $\mathbb{Z}$, and which we assume as given, such that
\begin{equation}
A=\sum_{i\in\mathbb{K}}f_{A}(i)\ketbra{i}{i}_{A}.
\end{equation}

Let $a\in\mathbb{R}$ with $\sup(\sigma(A))<a$. Define the point-wise linear function
\begin{equation}
f^{'}_{A}(x)=\sum_{i\in\mathbb{K}}[f(i)+(f(i+1)-f(i))(x-i)](H_{i}(x)-H_{i+1}(x)),
\end{equation}
where $H_{i}(x):=H(x-i)$ is the Heaviside step function. Let $\{A_{\nu}\}_{\nu\in[0,1)}$ be a collection of operators such that
\begin{equation}
A_{\nu}=\sum_{i\in\mathbb{K}}f^{'}_{A}(i+\nu)\ketbra{i+\nu}{i+\nu}_{A_{\nu}}.
\end{equation}

We have that the union over all $\sigma(A_{\nu})$ is the closed interval $[\min(\sigma(A)),a]$. Let $V^{p}_{A_{\nu}}$, defined by
\begin{equation}
V^{p}_{A_{\nu}}:=\sum_{i\in\mathbb{K}}\tensor[_{p}]{\ketbra{i+\nu}{i+\nu}}{_{A_{\nu}}},
\end{equation}
be generalized canonical partial isometries and let $\mathcal{D}_{Sch}\subset L^{2}([0,1))\subset\mathcal{D}_{Sch}^{\cross}$ be the same RHS used in the finite case. Now, let us make the identification $\ket{i+\nu}_{A_{\nu}}=\ket{i}_{A}\otimes\ket{\nu}_{\mathbf{M}_{[0,1)}}$, and define the operators
\begin{subequations}
\begin{equation}
\mathbf{S}^{\dagger}=\sum^{I_{A}}_{i=1}\int^{1}_{0}\dd{\lambda}\tensor[_p]{\ket{i+\nu}}{}(\bra{i}_{A}\otimes\bra{\lambda}_{\mathbf{M}_{[0,1)}}),
\end{equation}
and
\begin{equation}
A^{'}:=\int^{1}_{0}\dd{\mu}A_{\mu}\otimes\ketbra{\mu}{\mu}_{\mathbf{M}_{[0,1)}}. 
\end{equation}
\end{subequations}

Then, we can calculate
\begin{subequations}
\begin{align}
\mathbf{S}\mathbf{S}^{\dagger}&=\mathds{1}_{A}\otimes\mathds{1}_{\mathbf{M}_{[0,1)}},\\
\mathbf{S}^{\dagger}\mathbf{S}&=\int^{a}_{\min(\sigma(A))}\dd{\mu}\ketbra{\mu}{\mu}_{p}.
\end{align}
\end{subequations}
Therefore, $\mathbf{S}$ is a semi-unitary operator and $A^{'}$ is an operator with continuous spectrum $\sigma(A^{'})$. Therefore, we can apply Lem.\ref{lem:spec-ext-case1} to $\sigma(A^{'})$ and build the pair $(\Xi_{A},\Omega_{A})$.

If $\sigma(A)$ is not contained in any bounded interval, but contained in a semi-infinite interval $[a,\infty)$ (or $(-\infty,a]$), there is an order preserving function $f_{A}:\sigma(A)\rightarrow\mathbb{K}$, where $\mathbb{K}$ is either $\mathbb{N}$ or $\mathbb{Z}$, such that
\begin{equation}
A=\sum_{i\in\mathbb{K}}f_{A}(i)\ketbra{i}{i}_{A}.
\end{equation}
Define the point-wise linear function
\begin{equation}
f^{'}_{A}(x)=\sum_{i\in\mathbb{K}}[f(i)+(f(i+1)-f(i))(x-i)](H_{i}(x)-H_{i+1}(x)),
\end{equation}
where $H_{i}(x):=H(x-i)$ is the Heaviside step function. Let $\{A_{\nu}\}_{\nu\in[0,1)}$ be a collection of operators such that
\begin{equation}
A_{\nu}=\sum_{i\in\mathbb{K}}f^{'}_{A}(i+\nu)\ketbra{i+\nu}{i+\nu}_{A_{\nu}}.
\end{equation}

We have that the union over all $\sigma(A_{\nu})$ is the semi-infinite interval $[\min(\sigma(A)),\infty)$. Let $V^{p}_{A_{\nu}}$, defined by
\begin{equation}
V^{p}_{A_{\nu}}:=\sum_{i\in\mathbb{K}}\tensor[_{p}]{\ketbra{i+\nu}{i+\nu}}{_{A_{\nu}}},
\end{equation}
be generalized canonical partial isometries and let $\mathcal{D}_{Sch}\subset L^{2}([0,1))\subset\mathcal{D}_{Sch}^{\cross}$ be the same RHS used in the finite case. Now, let us make the identification $\ket{i+\nu}_{A_{\nu}}=\ket{i}_{A}\otimes\ket{\nu}_{\mathbf{M}_{[0,1)}}$, and define the operators
\begin{subequations}
\begin{equation}
\mathbf{S}^{\dagger}=\sum_{i\in\mathbb{N}}\int^{1}_{0}\dd{\lambda}\tensor[_p]{\ket{i+\nu}}{}(\bra{i}_{A}\otimes\bra{\lambda}_{\mathbf{M}_{[0,1)}}),
\end{equation}
and
\begin{equation}
A^{'}:=\int^{1}_{0}\dd{\mu}A_{\mu}\otimes\ketbra{\mu}{\mu}_{\mathbf{M}_{[0,1)}}. 
\end{equation}
\end{subequations}
Then, we can calculate that
\begin{subequations}
\begin{align}
\mathbf{S}\mathbf{S}^{\dagger}&=\mathds{1}_{A}\otimes\mathds{1}_{\mathbf{M}_{[0,1)}},\\
\mathbf{S}^{\dagger}\mathbf{S}&=\int^{\infty}_{\min(\sigma(A))}\dd{\mu}\ketbra{\mu}{\mu}_{p}.
\end{align}
\end{subequations}
Therefore, $\mathbf{S}$ is a semi-unitary operator and $A^{'}$ is an operator with continuous spectrum $\sigma(A^{'})$. Therefore, we can apply Lem.\ref{lem:spec-ext-case1} to $\sigma(A^{'})$ and build the pair $(\Xi_{A},\Omega_{A})$.

If $\sigma(A)$ is not contained in any bounded interval nor semi-infinite interval, there is an order preserving function $f_{A}:\sigma(A)\rightarrow\mathbb{Z}$, with such that
\begin{equation}
A=\sum_{i\in\mathbb{Z}}f_{A}(i)\ketbra{i}{i}_{A}.
\end{equation}
Define the point-wise linear function
\begin{equation}
f^{'}_{A}(x)=\sum_{i\in\mathbb{Z}}[f(i)+(f(i+1)-f(i))(x-i)](H_{i}(x)-H_{i+1}(x)),
\end{equation}
where $H_{i}(x):=H(x-i)$ is the Heaviside step function. Let $\{A_{\nu}\}_{\nu\in[0,1)}$ be a collection of operators such that
\begin{equation}
A_{\nu}=\sum_{i\in\mathbb{Z}}f^{'}_{A}(i+\nu)\ketbra{i+\nu}{i+\nu}_{A_{\nu}}.
\end{equation}

We have that the union over all $\sigma(A_{\nu})$ is $\mathbb{R}$. Let $V^{p}_{A_{\nu}}$, defined by
\begin{equation}
V^{p}_{A_{\nu}}:=\sum_{i\in\mathbb{Z}}\tensor[_{p}]{\ketbra{i+\nu}{i+\nu}}{_{A_{\nu}}},
\end{equation}
be generalized canonical partial isometries and let $\mathcal{D}_{Sch}\subset L^{2}([0,1))\subset\mathcal{D}_{Sch}^{\cross}$ be the same RHS used in the finite case. Now, let us make the identification $\ket{i+\nu}_{A_{\nu}}=\ket{i}_{A}\otimes\ket{\nu}_{\mathbf{M}_{[0,1)}}$, and define the operators
\begin{subequations}
\begin{equation}
\mathbf{S}^{\dagger}=\sum_{i\in\mathbb{Z}}\int^{1}_{0}\dd{\lambda}\tensor[_p]{\ket{i+\nu}}{}(\bra{i}_{A}\otimes\bra{\lambda}_{\mathbf{M}_{[0,1)}}),
\end{equation}
and
\begin{equation}
J_{A}:=\int^{1}_{0}\dd{\mu}A_{\mu}\otimes\ketbra{\mu}{\mu}_{\mathbf{M}_{[0,1)}}. 
\end{equation}
\end{subequations}

Then, we can calculate that
\begin{subequations}
\begin{align}
\mathbf{S}\mathbf{S}^{\dagger}&=\mathds{1}_{A}\otimes\mathds{1}_{\mathbf{M}_{[0,1)}},\\
\mathbf{S}^{\dagger}\mathbf{S}&=\mathds{1}_{p},
\end{align}
and
\begin{equation}
\mathbf{S}p\mathbf{S}^{\dagger}=J_{A},\quad\mathbf{S}^{\dagger}J_{A}\mathbf{S}=p.
\end{equation}
\end{subequations}

Again, let $\iota_{A}$ and $\kappa_{A}$ be the canonical algebra homomorphisms mapping $A$ to $J_{A}$ and $J_{A}$ to $A$, respectively. Then, we have that the maps $(\Xi_{A},\Omega_{A})$ defined by
\begin{subequations}
\begin{align}
\Xi_{A}[A]:=&\mathbf{S}\iota_{A}[A]\mathbf{S}^{\dagger}=\int^{1}_{0}\dd{\nu}V^{p\dagger}_{A_{\nu}}A_{\nu}V^{p}_{A_{\nu}}=p,\\
\Omega_{A}[p]:=&\kappa_{A}[\mathbf{S}^{\dagger}p\mathbf{S}]=\kappa_{A}[J_{A}]=A,
\end{align}
\end{subequations}
are algebra homomorphisms.
\end{proof}

\begin{lemma}\label{lem:spec-ext-case3}
Let $\mathcal{D}_{Sch}\subset L^{2}(\mathbb{R})\subset\mathcal{D}^{\cross}_{Sch}$ be the RHS where the algebra of observables generated by $q$ and $p$ are $\tau_{\mathcal{D}_{Sch}}$-continuous. Let $A$ be a self-adjoint operator that is an element of this algebra, whose spectra is a countable union of connected subsets of $\mathbb{R}$, such that $\sigma_{p}(A)$ is a discrete set, and $\sigma_{c}(A)$ is a closed set. Then, there is a pair of algebra homomorphisms $(\Xi_{A},\Omega_A)$ such that
\begin{equation}
\Xi_{A}[A]=p,\quad\Omega_{A}[p]=A.
\end{equation}
\end{lemma}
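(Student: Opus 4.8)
The plan is to reduce the general case to the two already settled, following the chain Lem.~\ref{lem:spec-ext-case1} $\Rightarrow$ Lem.~\ref{lem:spec-ext-case2} $\Rightarrow$ Lem.~\ref{lem:spec-ext-case3}. First I would apply the Nuclear Spectral Theorem to $A$ and split the Hilbert space orthogonally as $\mathcal{H}=\mathcal{H}_{p}\oplus\mathcal{H}_{c}$, where $\mathcal{H}_{p}$ is the closed span of the proper eigenvectors of $A$ and $\mathcal{H}_{c}$ its complement, so that $A=A_{p}\oplus A_{c}$ with $\sigma(A_{p})=\sigma_{p}(A)$ discrete and $\sigma(A_{c})=\sigma_{c}(A)$ closed. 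Since $\sigma_{c}(A)$ is closed it is a countable disjoint union of closed intervals, each bounded or semi-infinite, and $\sigma_{p}(A)$ being discrete and countable falls squarely inside the hypotheses catalogued in Lem.~\ref{lem:spec-ext-case2}.

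Next I would handle the two summands separately. For $A_{c}$ I would apply the constructions in the proof of Lem.~\ref{lem:spec-ext-case1} component by component: the bounded-interval trick (adjoining shifted copies $A_{c}+n(b_{2}-b_{1})\mathds{1}$) and the semi-infinite trick (adjoining the reflected operator $A_{c}'=2a\mathds{1}-A_{c}$) produce a family $\{A_{c,\mu}\}$ of continuous-spectrum operators containing $A_{c}$, with pairwise-disjoint spectra whose union is a prescribed closed half-line, say $[0,\infty)$, together with the canonical partial isometries $V^{p}_{A_{c,\mu}}$ onto $p$ restricted to that half-line. For $A_{p}$ I would run the gap-filling construction of Lem.~\ref{lem:spec-ext-case2}: the piecewise-linear interpolant $f'_{A}$ across the gaps of $\sigma_{p}(A)$ yields a continuous-spectrum operator $A_{p}'$ and a semi-unitary $\mathbf{S}_{p}$ from $\mathcal{H}_{p}\otimes L^{2}([0,1))$ onto $p$ restricted to an interval $I_{p}$, after which I pad $I_{p}$ by shifted/reflected copies so that $I_{p}$ and its copies tile the complementary half-line $(-\infty,0)$ with empty intersection. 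Choosing all shifts and interpolation ranges coherently makes the two tilings fit without overlap.

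Then I would assemble everything as in Alg.~\ref{alg:recipe_code}. The union of the two families gives operators $\{A_{\mu}\}_{\mu\in\Lambda}$, all with continuous spectrum, satisfying $\bigcup_{\mu}\sigma(A_{\mu})=\mathbb{R}$ and $\bigcap_{\mu}\sigma(A_{\mu})=\varnothing$, with $A$ recovered from the direct sum through the semi-unitaries already built. Setting $J_{A}=\bigoplus_{\mu}A_{\mu}$, the partial isometries $V^{p}_{A_{\mu}}$ verify the orthogonality and completeness hypotheses of Thm.~\ref{thm:unit-part}, so they glue into a unitary $U$ with $UpU^{\dagger}=J_{A}$ and $U^{\dagger}J_{A}U=p$. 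Taking $\iota_{A},\kappa_{A}$ to be the canonical co-projection and projection between the formal power series rings $\mathbb{C}[[A]]$ and $\mathbb{C}[[J_{A}]]$ (which are algebra homomorphisms), I set $\Xi_{A}[\cdot]:=\mathbf{S}\,\iota_{A}[\cdot]\,\mathbf{S}^{\dagger}$ and $\Omega_{A}[\cdot]:=\kappa_{A}[\mathbf{S}^{\dagger}[\cdot]\mathbf{S}]$, where $\mathbf{S}$ is the composite of $U$ with the semi-unitaries from the $A_{p}$ handling; both are composites of isometry conjugations with canonical (co)projections, hence algebra homomorphisms, and by construction $\Xi_{A}[A]=p$, $\Omega_{A}[p]=A$.

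The step I expect to be the main obstacle is the combinatorial bookkeeping in the second and third paragraphs: showing that an arbitrary closed set $\sigma_{c}(A)$ — possibly an infinite union of intervals accumulating somewhere — together with an arbitrary discrete set $\sigma_{p}(A)$ (which may accumulate onto $\sigma_{c}(A)$ or at infinity) can always be completed into a genuine partition of $\mathbb{R}$ using only the permitted moves (translation of spectral copies, reflection, and piecewise-linear interpolation over gaps). A secondary, more technical hurdle is that several of the $V^{p}_{A_{\mu}}$ coming from the pure-point summand are \emph{generalized} partial isometries mapping proper eigenvectors to Dirac-normalized ones, so one must check inside the rigged-Hilbert-space formalism of Sec.~\ref{app_sec:rigged} that $\mathbf{S}^{\dagger}$ is genuinely semi-unitary and that the $\delta(0)$-type weak-identity anomalies do not obstruct the gluing of Thm.~\ref{thm:unit-part}.
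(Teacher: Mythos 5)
There is a genuine gap in your second and third paragraphs, and it is precisely where you anticipated trouble: the global split $A=A_{p}\oplus A_{c}$ does not survive the fact that $\sigma_{p}(A)$ and $\sigma_{c}(A)$ can be interleaved on the real line. Your plan is to have the continuous-spectrum copies tile one half-line and the interpolated pure-point copies tile the other, but $\sigma_{p}(A)$ is a fixed subset of $\mathbb{R}$ and can sit on \emph{both} sides of $\sigma_{c}(A)$ (e.g.\ $\sigma_{p}(A)=\{-1,2\}$ with $\sigma_{c}(A)=[0,1]$). In that case the Lem.~\ref{lem:spec-ext-case2} interpolation across the gaps of the \emph{whole} $\sigma_{p}(A)$ sweeps an interval $I_{p}$ that straddles and therefore overlaps $\sigma_{c}(A)$, which destroys the requirement $\bigcap_{\mu}\sigma(A_{\mu})=\varnothing$ you need for Thm.~\ref{thm:unit-part}. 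You cannot move $I_{p}$ into a complementary half-line, because the original operator $A$, whose spectrum is $\sigma_{p}(A)\sqcup\sigma_{c}(A)$, must be recoverable as a summand of $J_{A}$ via the very semi-unitaries you built.

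The paper avoids this by not decomposing into ``pure-point part'' vs.\ ``continuous part'' globally. Instead it slices the spectrum into ordered blocks $A_{-\omega},\{A_{i}\}_{i\in\mathbb{Z}},A_{\omega}$ along the real line, with the structural constraint $\sup(\sigma_{p}(A_{i}))\leq\min(\sigma_{c}(A_{i}))$ in each block, and applies Lem.~\ref{lem:spec-ext-case2} \emph{locally} inside each block $A_{s}=A^{d}_{s}\oplus A^{c}_{s}$ with the interpolation endpoint chosen as $a=\min(\sigma_{c}(A^{c}_{s}))$. This makes the filled interval abut, rather than overlap, the adjacent continuous component, producing an operator $A'$ whose spectrum is a countable union of closed intervals; only then does it invoke the Lem.~\ref{lem:spec-ext-case1}-type padding to tile the remaining gaps of $\mathbb{R}$. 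The lesson is that the discrete pieces must be interpolated towards their \emph{nearest} continuous neighbour, block by block, not all at once. With that adjustment your outline would follow the paper's proof; as written, the reduction to Lemmas~\ref{lem:spec-ext-case1} and \ref{lem:spec-ext-case2} does not go through in the interleaved case.
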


\begin{proof}
If $A$ is as above, then either
\begin{equation}
\sigma(A)=\sigma(A_{-\omega})\cup\left(\bigcup_{i\in\mathbb{Z}}\sigma(A_{i})\right)\cup\sigma(A_{\omega}),
\end{equation}
such that $A_{i}$, $A_{-\omega}$ and $A_{\omega}$ can be either the null operator, or an operator whose spectrum,
\begin{enumerate}
    \item in the case of $A_{i}$ with $i\in\mathbb{Z}$, is bounded by $[\inf(\sigma_{p}(A)),\max(\sigma_{c}(A))]$ such that either
    \begin{enumerate}
    \item $\sigma_{p}(A_{i})=\varnothing$ and $\sigma_{c}(A_{i})\neq\varnothing$, or 
    \item $\sigma_{p}(A_{i})\neq\varnothing$ and $\sigma_{c}(A_{i})=\varnothing$, or
    \item $\sigma_{p}(A_{i})\neq\varnothing$ and $\sigma_{c}(A_{i})\neq\varnothing$, with $\sup(\sigma_{p}(A_{i}))\leq\min(\sigma_{c}(A_{i}))$,
\end{enumerate}
    \item while in the case of $A_{-\omega}$ the previous conditions apply, but with $\sigma(A_{-\omega})$ not belonging to any bounded interval, only to $(-\infty,\max(\sigma_{c}(A_{-\omega}))$, 
    \item and in the case of $A_{\omega}$ the same conditions also apply, but with $\sigma(A_{\omega})$ not belonging to any bounded interval, only to $[\inf(\sigma_{p}(A_{\omega}),\infty)$.
\end{enumerate}

Under these conditions, each operator $A$ has the direct product decomposition
\begin{equation}
A=A_{-\omega}\oplus\left(\bigoplus_{i\in\mathbb{Z}}A_{i}\right)\oplus A_{\omega},
\end{equation}
where each $A_{i}$, $A_{-\omega}$ and $A_{\omega}$ can be further decomposed as a direct-sum of a discrete part $A^{d}$ and a continuous part $A^{c}$, giving us the decomposition
\begin{equation}
A=(A^{d}_{-\omega}\oplus A^{c}_{-\omega})\oplus\left(\bigoplus_{i\in\mathbb{Z}}(A^{p}_{i}\oplus A^{c}_{i})\right)\oplus(A^{d}_{\omega}\oplus A^{c}_{\omega}).
\end{equation}

Let $S_{d}$ be the set containing the indices for which $A_{\nu}$ have non-empty $\sigma_{p}(A_{\nu})$, where $\nu$ is either in $\mathbb{Z}$, $\omega$, or $-\omega$. By construction $S_{d}$ must be countable, and for each $s\in S_{d}$ we have an operator $A_{s}=A^{d}_{s}\oplus A^{c}_{s}$, for which we can apply Lem.\ref{lem:spec-ext-case2} to $A^{d}_{s}$ choosing $a=\min(\sigma_{c}(A^{c}_{s})$. From this, we obtain a set of operators $\{A^{d}_{s,\lambda}\}_{(s,\lambda)\in S_{d}\cross[0,1)}$, from which we can construct a set of partial isometries $\mathbf{V}_{s}$, and a set of operators $A^{d'}_{s}$ such that $\sigma(A^{d'}_{s}\oplus A^{c}_{s})=[\inf(\sigma_{p}(A_{s})),\max(\sigma_{c}(A_{s}))]$, if $s\in\mathbb{Z}$. In $s=\omega$ then $\sigma(A^{d'}_{s}\oplus A^{c}_{s})=[\inf(\sigma_{p}(A_{s})),\infty)$, and if $s=-\omega$ then $\sigma(A^{d'}_{s}\oplus A^{c}_{s})=(-\infty,\max(\sigma_{c}(A_{s}))]$.

This construction allows us to build a partial isometry
\begin{equation}
\mathbf{W}:=
\begin{bmatrix}
\vdots\\
\mathbf{V}_{i}\\
\vdots
\end{bmatrix}
\end{equation}
where $i\in S_{d}$ that relates $A$ and the operator $A^{'}$, which has the same decomposition as $A$, but with the operator $A^{d'}_{s}$ in the place of $A^{d}_{s}$ for all $s\in S_{d}$. The spectrum of $A^{'}$ must, therefore, be at most a countable union of intervals closed in the topology of $\mathbb{R}$. 

Therefore, the difference $\mathbb{R}\backslash{\sigma(A^{'})}$ must satisfy
\begin{equation}
\mathbb{R}\backslash{\sigma(A^{'})}=N_{-\omega}\cup\left(\bigcup_{i\in\mathbb{Z}}N_{i}\right)\cup N_{\omega},
\end{equation}
where each $N_{i}$, or $N_{\omega}$, or $N_{-\omega}$ is empty, or an interval, where $N_{i}$ must be bounded, while $N_{\omega}$ and $N_{-\omega}$ must be semi-infinite. We can now chose a particular $A^{'}_{i}$ in the direct-sum decomposition of $A^{'}$, and produce a set of operators $\{A^{'}_{n}\}_{n\in I}$ where $I$ is at most $\mathbb{Z}$, and $A^{'}_{0}=A^{'}$, such that 
\begin{equation}
\bigcup_{n\in I}\sigma(A^{'}_{n})=\mathbb{R}.
\end{equation}
Then, Lem.\ref{lem:spec-ext-case1} gives us a way to build the algebra homomorphism $\Xi_{A}$ and $\Omega_{A}$ required by this lemma.
\end{proof}

\subsection{Transformation of Logical and Error Operators}\label{app_sub:trans-log-err}

Here we show how the main theorem can be applied to obtain a normal form of the logical operators and detectable error operators of unitary-invariant codes. First, we state a corollary of Lem.\ref{lem:spec-ext-case3}, that allows us to put the unitary and hermitian operators constructed in Alg.\ref{alg:recipe_code} in a normal form. Next, we give a Hilbert space representation of the action over operators of the canonical algebra homomorphism $\kappa_{A}$. We also extended this action from operators to states. These representations are used to construct the normal forms we claim to exist in Thm.\ref{thm:log-op}.

\begin{corollary}\label{cor:norm-form}
Let $\mathcal{D}_{Sch}\subset L^{2}(\mathbb{R})\subset\mathcal{D}^{\cross}_{Sch}$ be the RHS where the algebra of observables generated by $q$ and $p$ are $\tau_{\mathcal{D}_{Sch}}$-continuous. Let $A$ be a self-adjoint operator that is an element of this algebra, whose spectra is a countable union of connected subsets of $\mathbb{R}$, such that $\sigma_{p}(A)$ is a discrete set, and $\sigma_{c}(A)$ is a closed set. Then, there is a collection of operators $\{A_{(s,\lambda)}\}_{(s,\lambda)\in S\cross[0,1)}$ where $S$ is at most countable, and can be chosen as to contain $0$, such that $A_{s_{0},0}=A$, $U$ is an operator of the form
\begin{equation}
U^{\dagger}=\sum_{s\in S}\int^{1}_{0}\dd{\lambda}V^{p}_{A_{(s,\lambda)}}\bra{s}\otimes\bra{\lambda},
\end{equation}
and $J_{A}$ is a self-adjoint operator of the form
\begin{equation}
J_{A}=\sum_{s\in S}\int^{1}_{0}\dd{\lambda}A_{(s,\lambda)}\otimes\ketbra{s}{s}\otimes\ketbra{\lambda}{\lambda},
\end{equation}
where $V^{p}_{A_{(s,\lambda)}}$ are (possibly generalized) canonical partial isometries, and there is an RHS $\mathcal{D}\subset L^{2}([0,1))\subset\mathcal{D}^{\cross}$ such that $\ket{\lambda}$ is an anti-linear functional in $\mathcal{D}^{\cross}$ satisfying
\begin{equation}
\ket{\lambda}[\omega]:=\omega^{*}(\lambda),
\end{equation}
for any $\omega\in\mathcal{D}$ and $\lambda\in[0,1)$, and where $\{\ket{s}|s\in S\}$ span a Hilbert space $\mathcal{K}$.
\end{corollary}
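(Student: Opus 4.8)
The plan is to extract the desired normal form directly from the constructive proof of Lem.\ref{lem:spec-ext-case3}, reorganizing its intermediate objects into a single indexed family. First I would invoke Lem.\ref{lem:spec-ext-case3} to obtain the direct-sum decomposition
\begin{equation}
A=(A^{d}_{-\omega}\oplus A^{c}_{-\omega})\oplus\left(\bigoplus_{i\in\mathbb{Z}}(A^{d}_{i}\oplus A^{c}_{i})\right)\oplus(A^{d}_{\omega}\oplus A^{c}_{\omega}),
\end{equation}
together with the set $S_{d}$ of indices whose blocks carry a nonempty pure-point part. For each $s\in S_{d}$ I would apply Lem.\ref{lem:spec-ext-case2} to $A^{d}_{s}$, which produces the continuum of operators $\{A^{d}_{s,\lambda}\}_{\lambda\in[0,1)}$ and the associated generalized canonical partial isometries $V^{p}_{A^{d}_{s,\lambda}}$; for the blocks with no pure-point part, or once the above replacement has turned a block into one with purely continuous spectrum, Lem.\ref{lem:spec-ext-case1} supplies ordinary canonical partial isometries indexed by a single point $\lambda$ (or, after the interval-tiling step $A_{n}=A+n(b_{2}-b_{1})\mathds{1}_{A}$ of that lemma, by an integer that can be folded into $S$). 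Relabelling the union of all block-indices by a single countable set $S$, chosen so that the block containing the original $A$ sits at some $s_{0}\in S$ with $\lambda=0$, yields the family $\{A_{(s,\lambda)}\}_{(s,\lambda)\in S\times[0,1)}$ with $A_{(s_{0},0)}=A$.

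Next I would assemble the operators. Setting $J_{A}:=\sum_{s\in S}\int_{0}^{1}\dd{\lambda}\,A_{(s,\lambda)}\otimes\ketbra{s}{s}\otimes\ketbra{\lambda}{\lambda}$ is self-adjoint since each $A_{(s,\lambda)}$ is, and the summation/integration is the one already justified in Lem.\ref{lem:spec-ext-case3} via the identification $\ket{i+\nu}_{A_{\nu}}=\ket{i}_{A}\otimes\ket{\nu}_{\mathbf{M}_{[0,1)}}$. The operator $U^{\dagger}=\sum_{s\in S}\int_{0}^{1}\dd{\lambda}\,V^{p}_{A_{(s,\lambda)}}\bra{s}\otimes\bra{\lambda}$ is precisely the $\mathbf{S}^{\dagger}$ built in the proof of Lem.\ref{lem:spec-ext-case3}; its unitarity is the content of Lem.\ref{lem:semi-quo} and Thm.\ref{thm:unit-part}, using that the images of the (generalized) partial isometries $V^{p}_{A_{(s,\lambda)}}$ tile $L^{2}(\mathbb{R})$ and their co-images tile $\bigoplus_{s\in S}\mathcal{H}_{A_{s}}\otimes L^{2}([0,1))$. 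The Hilbert space $\mathcal{K}=\overline{\operatorname{span}}\{\ket{s}\mid s\in S\}$ is finite- or countably-infinite dimensional according to $\abs{S}$, and the RHS $\mathcal{D}\subset L^{2}([0,1))\subset\mathcal{D}^{\cross}$ is the spectral RHS of $\mathbf{M}_{[0,1)}$, in which, by the Nuclear Spectral Theorem applied exactly as in the Schwartz-space example, the generalized eigenvector $\ket{\lambda}$ is the anti-linear Dirac functional $\ket{\lambda}[\omega]=\omega^{*}(\lambda)$ for all $\omega\in\mathcal{D}$ and $\lambda\in[0,1)$.

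The only genuine work beyond bookkeeping is checking that the three spectral cases of Lem.\ref{lem:spec-ext-case3}—the bounded blocks $A_{i}$ and the two semi-infinite blocks $A_{\pm\omega}$—can all be funnelled into one uniform family indexed by $S\times[0,1)$. Concretely, I would verify that after the interval-tiling construction of Lem.\ref{lem:spec-ext-case1} each block contributes either finitely many or a $\mathbb{Z}$-worth of sub-blocks, so the total index is a countable disjoint union of countable sets and hence countable; and that one is free to reindex so that $0\in S$ and the distinguished sub-block equals $A$ itself, since the tiling translates are mutually unitarily equivalent and the block labelling is arbitrary. I expect this reindexing/countability check, together with confirming that mixing ordinary and generalized partial isometries within the single family $\{V^{p}_{A_{(s,\lambda)}}\}$ still satisfies the tiling hypotheses of Thm.\ref{thm:unit-part}, to be the main—though still essentially routine—obstacle; everything else is a transcription of the already-proven lemmas.
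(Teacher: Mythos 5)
The paper provides no proof of this corollary; it is stated without a \verb|proof| block and is clearly intended to be a repackaging of the objects already constructed in Lem.\ref{lem:spec-ext-case3}. Your plan is exactly the natural one, and it is essentially correct: pull the block decomposition and the families of (generalized) partial isometries out of the proofs of Lemmas \ref{lem:spec-ext-case1}--\ref{lem:spec-ext-case3}, relabel everything by a single index $S\times[0,1)$, and read off $J_{A}$, $U^{\dagger}$, and the RHS $\mathcal{D}\subset L^{2}([0,1))\subset\mathcal{D}^{\cross}$. Your references to Lem.\ref{lem:semi-quo} and Thm.\ref{thm:unit-part} for unitarity and to the Nuclear Spectral Theorem (mirroring the Schwartz-space example) for the $\ket{\lambda}[\omega]=\omega^{*}(\lambda)$ claim are the right ones, and the example of Sec.\ref{sub:map-trans-rot} with $n_{(\tau,\nu)}=\tau(n+\nu\mathds{1}_{\mathbb{N}})$, $n_{(1,0)}=n$, instantiates exactly the normal form you are asserting.

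One point you flag as ``essentially routine'' deserves slightly more care than that phrase suggests: the construction in Lem.\ref{lem:spec-ext-case3} is genuinely two-stage --- first the partial isometry $\mathbf{W}$ converts $A$ into $A'$ (continuous spectrum), then the interval-tiling of Lem.\ref{lem:spec-ext-case1} produces a family $\{A'_{n}\}_{n\in I}$ of translates of $A'$, not of $A$. To land on a \emph{single} layer $\sum_{s}\int_{0}^{1}V^{p}_{A_{(s,\lambda)}}$ with $A_{(s_{0},0)}=A$, you have to push the shift operator back through $\mathbf{W}$, i.e.\ conjugate the outer-stage tiling by the inner-stage partial isometries so that each $A_{(s,\lambda)}$ is defined directly on the $A$-side rather than the $A'$-side. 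This is doable (the translates commute with the $\mathbf{M}_{[0,1)}$ tensor factor) but it is the one genuinely non-transcriptional step, and a fully written-out proof should display the resulting $A_{(s,\lambda)}$ explicitly. Likewise, Thm.\ref{thm:unit-part} is stated for ordinary partial isometries, so you should cite Lem.\ref{lem:spec-ext-case2}'s direct computation of $\mathbf{S}\mathbf{S}^{\dagger}$ and $\mathbf{S}^{\dagger}\mathbf{S}$ for the generalized ones rather than Thm.\ref{thm:unit-part} verbatim. With those two points tightened, the proof stands.
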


\begin{proposition}{\textit{Representations of }$\kappa_{A}$:}\label{prop:repr-of-kappa}
Let $A$ be an operator satisfying the conditions of Thm.\ref{thm:main}. The unique algebra homomorphism $\kappa_{A}:\mathbb{C}[[J_{A}]]\rightarrow\mathbb{C}[[A]]$ can be represented by the map
\begin{equation}
\kappa_{A}[O]:=\underset{\epsilon\rightarrow0^{+}}{\lim}\mel{\kappa^{\epsilon}_{A}}{O}{\kappa^{\epsilon}_{A}},
\end{equation}
where $O$ is any operator of the form $UO(q,p)U^{\dagger}$ with $U$ having the normal form given in Cor.\ref{cor:norm-form}, and the state $\ket{\kappa^{\epsilon}_{A}}$ is given by
\begin{equation}
\ket{\kappa^{\epsilon}_{A}}:=\ket{0}\otimes\int^{1}_{0}\dd{\nu}\eta_{\epsilon}(\nu)\ket{\nu},
\end{equation}
where $\eta_{\epsilon}$ satisfies
\begin{equation}
\underset{\epsilon\rightarrow0^{+}}{\lim}\abs{\eta_{\epsilon}}^{2}=\delta(0).
\end{equation}
\end{proposition}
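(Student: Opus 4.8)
The plan is to evaluate the regularized matrix element on the right and show that its $\epsilon\to0^{+}$ limit reproduces the abstract projection that defines $\kappa_{A}$. First I would record, from Cor.\ref{cor:norm-form}, that $J_{A}$ is block diagonal,
\begin{equation}
J_{A}=\sum_{s\in S}\int_{0}^{1}\dd{\lambda}A_{(s,\lambda)}\otimes\ketbra{s}{s}\otimes\ketbra{\lambda}{\lambda},\qquad A_{(0,0)}=A,
\end{equation}
where $0$ may be taken in $S$ with $s_{0}=0$. Hence $f(J_{A})=\sum_{s}\int_{0}^{1}\dd{\lambda}f(A_{(s,\lambda)})\otimes\ketbra{s}{s}\otimes\ketbra{\lambda}{\lambda}$ for any formal power series $f$, and $\kappa_{A}$ is by definition the algebra homomorphism $J_{A}\mapsto A_{(0,0)}=A$, so $\kappa_{A}[f(J_{A})]=f(A)$; the nontrivial content is to realize this identity on $L^{2}$, which is delicate precisely because the slice $(s,\lambda)=(0,0)$ has zero $\lambda$-measure. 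Since $UpU^{\dagger}=J_{A}$ by the normal form, an operator $O=UO(q,p)U^{\dagger}$ with $O(q,p)=g(p)$ equals $g(J_{A})$, so it suffices to treat $O=f(J_{A})$, the general case being handled by the same contraction on the Fock matrix elements.

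Next I would perform the contraction. For $\epsilon>0$ the vector $\ket{\kappa^{\epsilon}_{A}}=\ket{0}\otimes\int_{0}^{1}\dd{\nu}\eta_{\epsilon}(\nu)\ket{\nu}$ is an honest element of the $\mathcal{K}\otimes L^{2}([0,1))$ factors, with $\braket{\kappa^{\epsilon}_{A}}{\kappa^{\epsilon}_{A}}=\int_{0}^{1}\dd{\nu}\abs{\eta_{\epsilon}(\nu)}^{2}<\infty$, so $\mel{\kappa^{\epsilon}_{A}}{f(J_{A})}{\kappa^{\epsilon}_{A}}$ is a well-defined partial inner product leaving an operator on the Fock factor. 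Using orthonormality $\braket{s}{0}=\delta_{s,0}$ in $\mathcal{K}$ and the Dirac normalization $\braket{\nu}{\lambda}=\delta(\nu-\lambda)$ of the generalized eigenvectors of $\mathbf{M}_{[0,1)}$ furnished by the nuclear spectral theorem of App.\ref{app_sub:spectral}, only the $s=0$ term survives and the two $\lambda$-integrations collapse, giving
\begin{equation}
\mel{\kappa^{\epsilon}_{A}}{f(J_{A})}{\kappa^{\epsilon}_{A}}=\int_{0}^{1}\dd{\nu}\abs{\eta_{\epsilon}(\nu)}^{2}\,f\!\left(A_{(0,\nu)}\right).
\end{equation}

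Finally I would let $\epsilon\to0^{+}$. The hypothesis on $\eta_{\epsilon}$ makes $\abs{\eta_{\epsilon}}^{2}$ a nascent Dirac delta concentrating at $\nu=0$ with unit mass, while $\nu\mapsto A_{(0,\nu)}$ is continuous in $\nu$ by the explicit construction of Lem.\ref{lem:spec-ext-case2} and Lem.\ref{lem:spec-ext-case3} (it is built from the spectral projectors of $A$ with eigenvalues depending piecewise-linearly on $\nu$, or is independent of $\nu$ altogether). Hence, on any pair of test vectors $\phi,\psi\in\mathcal{D}_{Sch}$, the map $\nu\mapsto\mel{\phi}{f(A_{(0,\nu)})}{\psi}$ is continuous at $\nu=0$, and the integral converges to $\mel{\phi}{f(A)}{\psi}$; that is, $\lim_{\epsilon\to0^{+}}\mel{\kappa^{\epsilon}_{A}}{f(J_{A})}{\kappa^{\epsilon}_{A}}=f(A)=\kappa_{A}[f(J_{A})]$ in the weak operator sense on $\mathcal{D}_{Sch}$, which is the assertion.

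I expect this last limit to be the main obstacle, for two reasons. First, $\nu=0$ is a boundary point of $[0,1)$, so $\abs{\eta_{\epsilon}}^{2}$ must be taken as a one-sided approximate identity and one must verify that no mass leaks out as $\epsilon\to0$; this is exactly why the statement carries the $\epsilon$-regularization rather than pairing directly against the non-normalizable ``vector'' $\ket{0}\otimes\ket{0}$, which belongs to no Hilbert space. Second, interchanging $\lim_{\epsilon\to0^{+}}$ with the $\nu$-integral must be justified in the weak operator topology on $\mathcal{D}_{Sch}$; this follows from the $\tau_{\mathcal{D}_{Sch}}$-continuity of each $f(A_{(0,\nu)})$ together with the spectral-RHS framework of App.\ref{app_sec:rigged}, in which $\ket{\nu}\in\mathcal{D}^{\cross}$ and the pairing with $\mathcal{D}$ is continuous, so that the weak limit of the approximate-identity integral is genuinely well defined.
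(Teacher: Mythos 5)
Your proposal is correct and follows essentially the same route as the paper: expand $f(J_{A})$ via the block-diagonal normal form of Cor.\ref{cor:norm-form}, contract against $\ket{\kappa^{\epsilon}_{A}}$ to isolate the $s=0$ block and reduce to $\int_{0}^{1}\dd{\nu}\abs{\eta_{\epsilon}(\nu)}^{2}f(A_{(0,\nu)})$, and let the nascent delta concentrate at $\nu=0$. You spell out two points the paper leaves tacit — the Dirac-orthogonality collapse of the double $\lambda$-integral and the weak-continuity of $\nu\mapsto A_{(0,\nu)}$ that justifies the one-sided $\epsilon\to0^{+}$ limit at the boundary of $[0,1)$ — which is a genuine strengthening of the argument but not a different proof.
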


\begin{proof}
We need to prove that the map we just defined is the same as $\kappa_{A}$, when acting over formal power series in the operator $J_{A}$. By the canonical form of Cor.\ref{cor:norm-form}, any power of $J_{A}$ is of the form
\begin{equation}
(J_{A})^{n}=\sum_{s\in S}\int^{1}_{0}\dd{\lambda}(A_{(s,\lambda)})^{n}\otimes\ketbra{s}{s}\otimes\ketbra{\lambda}{\lambda}.
\end{equation}
Therefore, for any formal power series we have
\begin{equation}
\mel{\kappa^{\epsilon}_{A}}{\sum^{\infty}_{n=1}\alpha_{n}(J_{A})^{n}}{\kappa^{\epsilon}_{A}}=\int^{1}_{0}\dd{\lambda}\left(\sum^{\infty}_{n=1}\alpha_{n}(A(\lambda))^{n}\right)\abs{\eta_{\epsilon}(\lambda)}^{2},
\end{equation}
which implies
\begin{equation}
\underset{\epsilon\rightarrow0^{+}}{\lim}\mel{\kappa^{\epsilon}_{A}}{\sum^{\infty}_{n=1}\alpha_{n}(J_{A})^{n}}{\kappa^{\epsilon}_{A}}=\sum^{\infty}_{n=1}\alpha_{n}A^{n}.
\end{equation}
\end{proof}

\begin{definition}
Let $A$ be a self-adjoint operator satisfying the conditions of Thm.\ref{thm:main}. Also, let $\ket{\psi}$ be any state of the form
\begin{equation}
\ket{\psi}=\int^{\infty}_{-\infty}\dd{z}\psi(z)\ket{z}_{p}.
\end{equation}
The action of $\Omega_{A}$ over $\ket{\psi}$ is defined by 
\begin{equation}
\Omega_{A}[\ket{\psi}]:=\bra{\kappa_{A}}U\ket{\psi},
\end{equation}
where $\bra{\kappa_{A}}=\bra{0}\otimes\bra{0}$ and $U$ has the normal form in Cor.\ref{cor:norm-form}.
\end{definition}

\begin{proposition}\label{prop:weyl-proj}
Let $\mathcal{D}_{Sch}\subset L^{2}(\mathbb{R})\subset\mathcal{D}^{\cross}_{Sch}$ be the RHS where the algebra of observables generated by $q$ and $p$ are $\tau_{\mathcal{D}_{Sch}}$-continuous, and let $A$ be a self-adjoint operator that is an element of this algebra. If the spectrum of $A$ is a countable union of connected subsets of $\mathbb{R}$, then for every $s,t\in\mathbb{R}$, the operator $\Omega_{A}[e^{itq}]$ satisfies
\begin{equation}
\Omega_{A}[e^{itq}]e^{isA}=e^{-ist}e^{isA}\Omega_{A}[e^{itq}].
\end{equation}
\end{proposition}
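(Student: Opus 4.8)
The plan is to obtain the stated relation by applying the algebra homomorphism $\Omega_{A}$ to the Weyl form of the canonical commutation relation between $q$ and $p$. Throughout, I identify the standard multiplication operator $\mathbf{M}_{\mathbb{R}}$ with $p$, as in Alg.\ref{alg:recipe_code}, so that $\Omega_{A}[p]=\Omega_{A}[\mathbf{M}_{\mathbb{R}}]=A$. The only structural inputs needed are that $\Omega_{A}$ preserves products and scalar multiples on the relevant domain, and that $\Omega_{A}[p]=A$.

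First I would record that $\Omega_{A}$ is multiplicative. By construction $\Omega_{A}[O]=\kappa_{A}\!\left(U^{\dagger}_{\mathbf{M}_{\mathbb{R}},J_{A}}\,O\,U_{\mathbf{M}_{\mathbb{R}},J_{A}}\right)$, where conjugation by the unitary $U_{\mathbf{M}_{\mathbb{R}},J_{A}}$ of Cor.\ref{cor:norm-form} is an algebra homomorphism, and $\kappa_{A}$ is the canonical algebra homomorphism from $\mathbb{C}[[J_{A}]]$ to $\mathbb{C}[[A]]$ used in the proof of Lem.\ref{lem:spec-ext-case3} (in the limit representation of Prop.\ref{prop:repr-of-kappa}); hence the composition $\Omega_{A}$ preserves products and scalar multiples. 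Applying this to the convergent power series $e^{isp}=\sum_{n}\frac{(is)^{n}}{n!}p^{n}$ and using $\Omega_{A}[p]=A$ gives
\begin{equation}
\Omega_{A}[e^{isp}]=e^{is\,\Omega_{A}[p]}=e^{isA}.
\end{equation}

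Next, since $[q,p]=i\mathds{1}$, the commutator $[\,itq,\,isp\,]=-ist\,\mathds{1}$ is a multiple of the identity, so the Baker--Campbell--Hausdorff formula gives the Weyl relation
\begin{equation}
e^{itq}\,e^{isp}=e^{-ist}\,e^{isp}\,e^{itq}.
\end{equation}
Applying $\Omega_{A}$ to both sides, using multiplicativity on the left side and scalar-homogeneity together with multiplicativity on the right side, yields
\begin{equation}
\Omega_{A}[e^{itq}]\,e^{isA}=e^{-ist}\,e^{isA}\,\Omega_{A}[e^{itq}],
\end{equation}
which is exactly the claimed identity, for all $s,t\in\mathbb{R}$.

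The main obstacle is making the step ``$\Omega_{A}$ is multiplicative'' rigorous for this particular pair of operators: $e^{itq}$ and $e^{isp}$ are exponentials of unbounded generators and a priori lie outside the formal power series ring on which $\kappa_{A}$ was originally defined, so one must instead work with the representation $\kappa_{A}[O]=\lim_{\epsilon\to0^{+}}\mel{\kappa^{\epsilon}_{A}}{O}{\kappa^{\epsilon}_{A}}$ of Prop.\ref{prop:repr-of-kappa} and verify that the limit $\epsilon\to0^{+}$ commutes with the operator product in question. I would handle this using the normal form of Cor.\ref{cor:norm-form}: conjugation by $U_{\mathbf{M}_{\mathbb{R}},J_{A}}$ sends $e^{isp}$ to $e^{isJ_{A}}=\sum_{s\in S}\int_{0}^{1}\dd{\lambda}\,e^{isA_{(s,\lambda)}}\otimes\ketbra{s}{s}\otimes\ketbra{\lambda}{\lambda}$, which is block-diagonal in the auxiliary factors $\mathcal{K}\otimes L^{2}([0,1))$ and hence sandwiches cleanly against $\ket{\kappa^{\epsilon}_{A}}=\ket{0}\otimes\int_{0}^{1}\dd{\nu}\,\eta_{\epsilon}(\nu)\ket{\nu}$, while conjugation turns $e^{itq}$ into an operator that the (generalized) canonical partial isometries building $U_{\mathbf{M}_{\mathbb{R}},J_{A}}$ move only within a fixed auxiliary block. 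Once this interchange is established, the Weyl/Baker--Campbell--Hausdorff computation above closes the argument.
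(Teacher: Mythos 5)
Your proposal follows the same overall route as the paper --- conjugate the Weyl relation $e^{itq}e^{isp}=e^{-ist}e^{isp}e^{itq}$ by the unitary of Cor.\ref{cor:norm-form} and then project down with $\kappa_A$ --- but the step you label ``$\Omega_A$ is multiplicative'' is a genuine gap. The map $\kappa_A$ is an algebra homomorphism only on $\mathbb{C}[[J_A]]$; the conjugated operator $U^\dagger e^{itq}U$ does not lie in $\mathbb{C}[[J_A]]$, so you cannot invoke the homomorphism property to split $\Omega_A[e^{itq}e^{isp}]$ into $\Omega_A[e^{itq}]\,\Omega_A[e^{isp}]$. In the limit representation of Prop.\ref{prop:repr-of-kappa}, $\kappa_A[O]=\lim_{\epsilon\to0^+}\mel{\kappa^\epsilon_A}{O}{\kappa^\epsilon_A}$ is a vector state in the auxiliary factors, and a vector state is not multiplicative: $\mel{\psi}{XY}{\psi}\neq\mel{\psi}{X}{\psi}\mel{\psi}{Y}{\psi}$ in general. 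The specific product at hand factors only because one of the two factors, namely $e^{isJ_A}$, is block-diagonal in the auxiliary factors $\mathcal{K}\otimes L^2([0,1))$; for such a product the $(0,0)\to(0,0)$ block of $[U^\dagger e^{itq}U]\,e^{isJ_A}$ (and of $e^{isJ_A}\,[U^\dagger e^{itq}U]$) reduces to a product of the corresponding blocks. This is precisely what the paper's proof exploits: it writes the conjugated Weyl relation in full block form, obtaining $\{B(t)\}^{(n',\lambda')}_{(n,\lambda)}\,e^{isA_{(n',\lambda')}}=e^{-ist}\,e^{isA_{(n,\lambda)}}\,\{B(t)\}^{(n',\lambda')}_{(n,\lambda)}$, and then reads off the $(0,0)\to(0,0)$ component.

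Your ``repair'' paragraph is aimed in the right direction but misidentifies both the obstacle and its cure. The difficulty is not that $e^{itq}$ and $e^{isp}$ are exponentials of unbounded generators; rather, the conjugate of $e^{itq}$ fails to respect the block structure, so $\kappa_A$ cannot simply be applied factor by factor. And contrary to your last sentence, $U^\dagger e^{itq}U$ is genuinely off-diagonal in the auxiliary blocks --- $e^{itq}$ translates in the spectrum of $p$, which moves you between different $A_{(s,\lambda)}$ sectors, which is exactly why the paper's $\{B(t)\}^{(n',\lambda')}_{(n,\lambda)}$ carries two independent block indices. What rescues the argument is that the \emph{other} factor $e^{isJ_A}$ is block-diagonal, so taking the $(0,0)$ block of the product yields $\Omega_A[e^{itq}]\,e^{isA}$ exactly (and analogously for the other side). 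If you replace your multiplicativity claim with this single observation --- that projecting to the $(0,0)$ block distributes over a product whenever one factor is block-diagonal --- the argument closes correctly and becomes the paper's proof.
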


\begin{proof}
By Thm.\ref{thm:main}, there is a family of operators $\{A_{n,\lambda}\}_{(n,\lambda)\in S\cross[0,1)}$ with $0\in S$ and $S$ at most countable, an unitary $U$, and a self-adjoint $J_{A}$ with the normal forms given in Cor.\ref{cor:norm-form}. Now, we can take the Weyl commutation relations
\begin{equation}
e^{itq}e^{isp}=e^{-ist}e^{isp}e^{itq},
\end{equation}
and multiply from the left by $U$ and from the right by $U^{\dagger}$ to obtain the commutation relations
\begin{equation}
e^{it(UqU^{\dagger})}e^{isJ_{A}}=e^{-ist}e^{isJ_{A}}e^{it(UqU^{\dagger})}.
\end{equation}

This equation is can be expanded and we can collect coefficients obtaining a set of equations
\begin{equation}
\{B(t)\}^{(n',\lambda')}_{(n,\lambda)}e^{isA_{(n',\lambda')}}=e^{-ist}e^{isA_{(n,\lambda)}}\{B(t)\}^{(n',\lambda')}_{(n,\lambda)},
\end{equation}
where $\{B(t)\}^{(n',\lambda')}_{(n,\lambda)}=[U^{\dagger}e^{itq}U]^{(n',\lambda')}_{(n,\lambda)}$. Therefore, in particular, we have that
\begin{equation}
\{B(t)\}^{(0,0)}_{0,0}e^{isA_{(0,0)}}=e^{-ist}e^{isA_{(0,0)}}\{B(t)\}^{(0,0)}_{0,0}.
\end{equation}
now, just note that $\Omega_{A}[e^{itq}]=\kappa_{A}[U^{\dagger}e^{itq}U]=\{B(t)\}^{(0,0)}_{0,0}$. Proving our statement.
\end{proof}

\begin{theorem}\label{thm:log-op}
Let $\mathcal{D}_{Sch}\subset L^{2}(\mathbb{R})\subset\mathcal{D}^{\cross}_{Sch}$ be the RHS where the algebra of observables generated by $q$ and $p$ are $\tau_{\mathcal{D}_{Sch}}$-continuous, and let $A$ be a self-adjoint operator that is an element of this algebra. Let $\mathcal{S}_{U_{2s}}$ be an unitary-invariant code stabilized by $U_{2s}=e^{i2sA}$. Then it is true that 
\begin{enumerate}
    \item the operators $\Omega_{A}[e^{isp}]$, $\Omega_{A}[e^{i(s^{2}/2\pi)p^{2}}]$ and $\Omega_{A}[e^{i(s^{4}/4\pi^{3})p^{4}}]$ are, respectively, the logical $\Bar{Z}$,$\Bar{S}$ and $\Bar{T}$ single-qubit gates for any $\mathcal{S}_{U_{2s}}$ code, and
    \item the operator $\Omega_{A}[e^{i(-2\pi/s)q}]$ is a stabilizer of the ideal $\mathcal{S}_{U_{2s}}$ code, and
    \item the operators 
    \begin{equation}
    \Omega_{A}[e^{-i(\pi/s)q}]\text{, and }\Omega_{A}[e^{i\pi/2((s^{2}p^{2}/\pi)+(\pi q^{2}/s^{2}))}]
    \end{equation}
    are, respectively the logical $\Bar{X}$ and $\Bar{H}$ single-qubit gates for the ideal $\mathcal{S}_{U_{2s}}$ code.
\end{enumerate}
\end{theorem}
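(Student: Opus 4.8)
The plan is to reduce the whole statement to transporting the already‑known logical operators of the GKP code through $\Omega_{A}$, with the bookkeeping identification $\lambda=-s/\sqrt{\pi}$: then the ideal code on the ``$p$ side'' is $\mathcal{GKP}_{\lambda}=\mathcal{S}^{*}_{T^{q}_{2\lambda\sqrt{\pi}}}$, and every entry of the GKP column of Tab.\ref{tab:my_label_2} (for this $\lambda$) equals the operator claimed in the theorem --- e.g.\ $e^{-i\lambda\sqrt{\pi}p}=e^{isp}$, $e^{i\lambda^{2}p^{2}/2}=e^{is^{2}p^{2}/2\pi}$, $e^{i\sqrt{\pi}q/\lambda}=e^{-i\pi q/s}$, $e^{i2\sqrt{\pi}q/\lambda}=e^{-i2\pi q/s}$, $e^{i(\pi/2)(\lambda^{2}p^{2}+q^{2}/\lambda^{2})}=e^{i(\pi/2)((s^{2}/\pi)p^{2}+(\pi/s^{2})q^{2})}$, and likewise for the remaining entry. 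The two tools I would lean on are: (i) by Thm.\ref{thm:main} (via Lem.\ref{lem:spec-ext-case3}), $\Omega_{A}$ is an algebra homomorphism on $\mathbb{C}[[p]]$ with $\Omega_{A}[p]=A$, so $\Omega_{A}[f(p)]=f(A)$ for every analytic $f$; and (ii) by Prop.\ref{prop:weyl-proj}, $\Omega_{A}[e^{itq}]\,e^{isA}=e^{-ist}\,e^{isA}\,\Omega_{A}[e^{itq}]$, and --- reading off the normal forms of $U$ and $\kappa_{A}$ from Cor.\ref{cor:norm-form} and Prop.\ref{prop:repr-of-kappa} --- on the grid $\{(2m+j)\pi/s\}_{m}$ of $A$-eigenvalues $\Omega_{A}[e^{itq}]$ acts as the shift $\ket{\mu}_{A}\mapsto\ket{\mu+t}_{A}$, and as $0$ when $\mu+t\notin\sigma(A)$.

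For item 1, fact (i) gives at once $\Omega_{A}[e^{isp}]=e^{isA}=U_{s}$, $\Omega_{A}[e^{is^{2}p^{2}/2\pi}]=e^{is^{2}A^{2}/2\pi}$ and $\Omega_{A}[e^{is^{4}p^{4}/4\pi^{3}}]=e^{is^{4}A^{4}/4\pi^{3}}$, all diagonal in the eigenbasis of $A$. On an $A$-eigenvector with eigenvalue $\mu=(2m+j)\pi/s$ --- i.e.\ on the range of $\Pi^{j}_{U_{s}}$ --- one has $e^{is\mu}=(-1)^{j}$, $e^{is^{2}\mu^{2}/2\pi}=e^{i(\pi/2)(2m+j)^{2}}=i^{j}$ (using $(2m+j)^{2}\equiv j\pmod{4}$), and $e^{is^{4}\mu^{4}/4\pi^{3}}=e^{i(\pi/4)(2m+j)^{4}}=e^{i\pi j/4}$ (using $(2m+j)^{4}\equiv j\pmod{8}$). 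Hence each of the three operators acts as the scalar $(-1)^{j}$, $i^{j}$, $e^{i\pi j/4}$ on the whole $(-1)^{j}$-eigenspace of $U_{s}$, so it maps every codeword $\ket{j}^{\Theta}\propto\Pi^{j}_{U_{s}}\ket{\Theta}$ to that multiple of itself --- exactly the action of $\bar{Z}$, $\bar{S}$, $\bar{T}$ --- which proves item 1. (This is the $U$-invariant version of the footnote in Sec.\ref{sub:rot-sym}, recovered at $s=\pi/N$.)

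For items 2 and 3 I would work with the ideal code, whose codewords are the combs $\ket{j}^{*}\propto\sum_{\mu}\ket{\mu}_{A}$ over $\mu\in\sigma(A)$ on the grid $\{(2m+j)\pi/s\}$. Using tool (ii): for $t=-2\pi/s$ one has $e^{-ist}=1$, so $\Omega_{A}[e^{-i2\pi q/s}]$ commutes with $U_{s}$ (hence with $U_{2s}$), and the shift by $-2\pi/s$ permutes the grid inside each parity class; the terms that leave $\sigma(A)$ are annihilated, but the surviving sum reindexes back to $\ket{j}^{*}$ (just as $(a\sqrt{n})^{2N}$ fixes the ideal rotation codewords), so $\Omega_{A}[e^{-i2\pi q/s}]$ is a stabilizer of the ideal code --- item 2. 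For $t=-\pi/s$ one has $e^{-ist}=-1$, so $\Omega_{A}[e^{-i\pi q/s}]$ anticommutes with $\bar{Z}=U_{s}$ and the shift by $-\pi/s$ carries parity class $j$ to $j\oplus_{2}1$, so it is the logical $\bar{X}$ (up to a stabilizer or a conjugate) --- the first operator of item 3. For the Hadamard I would compute $\Omega_{A}[e^{i(\pi/2)((s^{2}/\pi)p^{2}+(\pi/s^{2})q^{2})}]$ directly from the normal form of $U$: the GKP $\bar{H}$ is, up to the squeezing absorbed into $\lambda$, a phase-space quarter rotation (essentially a Fourier transform), so $\Omega_{A}$ turns it into an operator whose matrix elements in the $A$-eigenbasis are Gaussian phases --- the $U$-invariant analogue of $\frac{1}{\sqrt{2\pi}}\sum_{m,m'}e^{-i\pi mm'/N^{2}}\ketbra{m}{m'}_{n}$ of Sec.\ref{sub:map-trans-rot} --- and Poisson summation on the codeword combs then shows it sends $\ket{0}^{*}\mapsto(\ket{0}^{*}+\ket{1}^{*})/\sqrt{2}$ and $\ket{1}^{*}\mapsto(\ket{0}^{*}-\ket{1}^{*})/\sqrt{2}$, i.e.\ it is the logical $\bar{H}$.

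The main obstacle is the Hadamard: it is the only gate that is diagonal in neither the $p$- nor the $A$-eigenbasis, so $\Omega_{A}$ does not act on it by substitution, and it cannot be assembled from $\bar{X}$ and $\bar{Z}$ because $\Omega_{A}$ is multiplicative only on $\mathbb{C}[[p]]$ and not on the mixed quadratic. One must therefore push the squeezing/Fourier structure of $e^{i(\pi/2)(\lambda^{2}p^{2}+q^{2}/\lambda^{2})}$ through the co-projection $\kappa_{A}$ and control the convergence of the resulting Gaussian kernel under the $\epsilon\to0^{+}$ regularisation of Prop.\ref{prop:repr-of-kappa}; restricting items 2--3 to the ideal code is precisely what makes this tractable, since there the Dirac-comb structure lets Poisson summation close the computation. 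A secondary point needing care is the boundary bookkeeping in item 2 when $\sigma(A)$ is only semi-bounded, so that the shift genuinely fixes the ideal codeword rather than merely preserving the code space.
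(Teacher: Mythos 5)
Your proposal is correct in substance and, for items~1, item~2, and the logical~$\bar{X}$ in item~3, it follows essentially the same strategy as the paper's proof: act by the unitary $U$ (or equivalently push through $\Omega_{A}$ using $\Omega_{A}[f(p)]=f(A)$) onto the comb basis $\ket{k,l_{k},\mu_{k}}$ labelled by the grid $\{(2m+j)\pi/s\}$, compute the resulting scalars $(-1)^{j}$, $i^{j}$, $e^{i\pi j/4}$ via the same modular arithmetic, and then use Prop.\ref{prop:weyl-proj} plus the eigenvalue-shift behaviour of $\Omega_{A}[e^{itq}]$ for the stabilizer and $\bar{X}$. Your observation about the semi-bounded spectrum is the same bookkeeping the paper handles through the $l_{k},\mu_{k}$ labels and the remark that shifts ``cycle through members of the family $\{A_{l_{k},\mu_{k}}\}$''; nothing there would fail.

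The genuine divergence is the Hadamard in item~3, and it is worth flagging because your route is not the paper's. You propose to push the quarter-rotation $e^{i(\pi/2)(\lambda^{2}p^{2}+q^{2}/\lambda^{2})}$ through the co-projection $\kappa_{A}$ directly, obtain a Gaussian-phase kernel in the $A$-eigenbasis, and close with Poisson summation on the codeword combs. That route is exactly how the explicit formula $\frac{1}{\sqrt{2\pi}}\sum_{m,m'}e^{-i\pi mm'/N^{2}}\ketbra{m}{m'}_{n}$ is obtained in Sec.\ref{sub:map-trans-rot}, and you correctly identify the convergence control under the $\epsilon\to 0^{+}$ regularisation of Prop.\ref{prop:repr-of-kappa} as the delicate point. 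The paper's proof of Thm.\ref{thm:log-op} avoids this entirely: it uses the Clifford intertwining relation
\begin{equation}
U\,e^{i\pi/2\left(\frac{s^{2}p^{2}}{\pi}+\frac{\pi q^{2}}{s^{2}}\right)}U^{\dagger}
= e^{-i(\pi/s)UqU^{\dagger}}\left(U\,e^{i\pi/2\left(\frac{s^{2}p^{2}}{\pi}+\frac{\pi q^{2}}{s^{2}}\right)}U^{\dagger}\right)e^{isJ_{A}},
\end{equation}
i.e.\ $\bar{H}=\bar{X}\bar{H}\bar{Z}$ at the $\mathcal{GKP}_{-s/\sqrt{\pi}}$ level conjugated by $U$, and then reads off matrix elements in the $\ket{k,l_{k},\mu_{k}}$ basis, so that the action as a Hadamard follows from the already-established action of $\bar{X}$ and $\bar{Z}$ together with Prop.\ref{prop:weyl-proj}. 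This is strictly cleaner: no Gaussian kernel, no Poisson summation, and no $\epsilon$-limit to control. What your route buys in exchange is the explicit closed form of $\bar{H}$ --- which the algebraic route does not produce --- so the two are complementary rather than redundant, but if the goal is only to prove the stated theorem, the conjugation argument is the one to adopt.
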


\begin{proof}
Let $\mathcal{C}^{p}_{s}:=\{\ket{k\pi/s}_{p}|k\in\mathbb{Z}\}$ be the generalized subspace spanned by the generalized eigenstates of $p$ that is stabilized by $e^{i2sp}$. By construction, we known that $Ue^{i2sp}U^{\dagger}=e^{i2sJ_{A}}$, where $J_{A}$ and $U$ have the normal forms given in Cor.\ref{cor:norm-form}. Since $e^{i2sp}$ acts as the identity operator over $\mathcal{C}^{p}_{s}$, and $U$ is unitary, $e^{i2sJ_{A}}$ must also act as the identity operator over $\mathcal{C}^{J_{A}}_{s}=U\mathcal{C}^{p}_{s}$. This allows us to choose a basis for $\mathcal{C}^{J_{A}}_{s}$ as follows. For every $k\pi/s$ in the spectrum of $p$, there must be exactly one operator $A_{(l_{k},\mu_{k})}$ whose spectrum contains $k\pi/s$, such that different values of $k$ might be associated to the same operator. Then, define the (possibly generalized) eigenvectors
\begin{equation}
\ket{k,l_{k},\mu_{k}}:=\ket{k(\pi/s)}_{A_{(l_{k},\mu_{k})}}\otimes\ket{l_{k}}\otimes\ket{\mu_{k}},
\end{equation}
with $k\in\mathbb{Z}$, $l_{k}\in S$ and $\mu_{k}\in[0,1)$.The set of all of these eigenvectors is a basis for $\mathcal{C}^{J_{A}}_{s}$. 

Now, let $\ket{\Theta}_{p}$ be a primitive state that generates a translation-symmetric code with stabilizer $T^{q}_{2s}$. The state $\ket{\Theta}_{A}=\Omega_{A}[\ket{\Theta}_{p}]$ will only be a primitive state for a $\mathcal{S}_{U_{2s}}$ code if and only if $\mathcal{C}^{J_{A}}_{s}$ contains basis states of the form $\ket{n,0,0}$ with $n\in N$, such that $N$ contains at least one even and one odd integer, and these states are present in the expansion of $U\ket{\Theta}_{p}$ in the $\mathcal{C}^{J_{A}}_{s}$ basis.

With this basis at hand, notice that
\begin{subequations}
\begin{align}
Ue^{isp}U^{\dagger}\ket{k,l_{k},\mu_{k}}&=(-1)^{k}\ket{k,l_{k},\mu_{k}},\\
Ue^{i\left(\frac{s^{2}}{2\pi}\right)p^{2}}U^{\dagger}\ket{k,l_{k},\mu_{k}}&=e^{i\frac{\pi}{2}(k(\textup{mod}2))}\ket{k,l_{k},\mu_{k}},\\
Ue^{i\left(\frac{s^{4}}{4\pi^{3}}\right)p^{4}}U^{\dagger}\ket{k,l_{k},\mu_{k}}&=e^{i\frac{\pi}{4}(k(\textup{mod}2))}\ket{k,l_{k},\mu_{k}}.
\end{align}
\end{subequations}
Therefore, for any primitive state $\ket{\Theta}_{A}$ of the form described above, we have that the operators $\Omega_{A}[e^{isp}]$, $\Omega_{A}[e^{i(s^{2}/2\pi)p^{2}}]$ and $\Omega_{A}[e^{i(s^{4}/4\pi^{3})p^{4}}]$ act as the logical $\Bar{Z}$,$\Bar{S}$ and $\Bar{T}$ single-qubit gates over the codewords
\begin{subequations}
\begin{align}
\ket{j}^{\Theta_{A}}_{\mathcal{S}_{U_{2s}}}&=\frac{\Pi^{j}_{U_{2s}}\ket{\Theta_{A}}}{\sqrt{\mel{\Theta_{A}}{\Pi^{j}_{U_{2s}}}{\Theta_{A}}}}\text{, with}\\
\Pi^{j}_{U_{2s}}&=\sum_{m\in\mathbb{Z}}((-1)^{j}U_{2s})^{m}.
\end{align}
\end{subequations}

Next, remember that the $\mathcal{GKP}_{-s/\sqrt{\pi}}$ codewords are given by
\begin{equation}
\ket{j}_{\mathcal{GKP}_{-s\sqrt{\pi}}}:=\sum_{m\in\mathcal{Z}}\ket{(2m+j)\pi/s}_{p}
\end{equation}
It is not hard to see that we can use the description of our basis to write in the form $\ket{j}^{*}_{e^{isJ_{A}}}=U\ket{j}_{\mathcal{GKP}_{-s\sqrt{\pi}}}$
\begin{subequations}
\begin{equation}
\ket{j}^{*}_{e^{isJ_{A}}}=\sum_{m\in\mathbb{Z}}\ket{(2m+j),l_{2m+j},\mu_{2m+j}}.
\end{equation}
Again, in order to have a well-defined ideal $\mathcal{S}^{*}_{U_{2s}}$ code, we need to have states of the form $\ket{2m_{j}+j,0,0}:=\ket{2m_{j}+j}$ where $m_{j}\in M_{j}$ are integers that may depend on $j$, and $M_{j}$ both are countably infinite. Then, we can write the previous sum as
\begin{equation}
\sum_{m_{0}\in M_{j}}\ket{2m_{j}+j}+\sum_{n\in\mathbb{Z}\backslash{M_{j}}}\ket{(2n+j),l_{2n+j},\mu_{2n+j}}.
\end{equation}
\end{subequations}

\begin{equation}
Ue^{-i\frac{n\pi}{s}q}U^{\dagger}\ket{k,l_{k},\mu_{k}}=\ket{k+n,l_{k+n},\mu_{k+n}}.
\end{equation}
This equation implies that $Ue^{-i\frac{n\pi}{s}q}U^{\dagger}$ changes $k$ in $\ket{k,l_{k},\mu_{k}}$ by cycling through members of the family $\{A_{l_{k},\mu_{k}}\}$, in order to find the operators that have $k+n$ as an eigenvalue. Therefore, since the vector $\ket{0}^{*}_{e^{isJ_{A}}}$ contains all $\ket{k,l_{k},\mu_{k}}$ with $k$ even, and $\ket{1}^{*}_{e^{isJ_{A}}}$ contains all $\ket{k,l_{k},\mu_{k}}$ with $k$, we have that $Ue^{-i\frac{2\pi}{s}q}U^{\dagger}$ acts as the identity and $Ue^{-i\frac{n\pi}{s}q}U^{\dagger}$ acts as the Pauli-X over $\ket{j}^{*}_{e^{isJ_{A}}}$.

Now, by the Weyl commutation relations we have that
\begin{subequations}
\begin{align}
e^{-i\frac{2\pi}{s}q}e^{i2sp}=e^{i2sp}e^{-i\frac{2\pi}{s}q}\text{, and }\\
e^{-i\frac{\pi}{s}q}e^{isp}=-e^{isp}e^{-i\frac{\pi}{s}q}.
\end{align}
\end{subequations}
This fact, when taken with the last considerations and with Prop.\ref{prop:weyl-proj}, shows that $\Omega_{A}[e^{-i\frac{2\pi}{s}q}]$ is a stabilizer of the ideal $\mathcal{S}_{U_{2s}}$ code, and that $\Omega_{A}[e^{-i\frac{\pi}{s}q}]$ is its logical Pauli-X.

Lastly, to show that $\Omega_{A}[e^{i\pi/2((s^{2}p^{2}/\pi)+(\pi q^{2}/s^{2}))}]$ is a logical Hadarmard gate for the ideal $\mathcal{S}_{U_{2s}}$ code, notice that the fact that $e^{i\pi/2((s^{2}p^{2}/\pi)+(\pi q^{2}/s^{2}))}$ is the Hadamard gate for $\mathcal{GKP}_{-s/\sqrt{\pi}}$ implies
\begin{multline}
Ue^{i\pi/2((s^{2}p^{2}/\pi)+(\pi q^{2}/s^{2}))}U^{\dagger}=\\
=e^{-i(\pi/s)UqU^{\dagger}}(Ue^{i\pi/2((s^{2}p^{2}/\pi)+(\pi q^{2}/s^{2}))}U^{\dagger})e^{isJ_{A}}.
\end{multline}
Then, we can take the matrix elements of this equation with relation to the eigenvectors $\ket{k,l_{k},\mu_{k}}$, to show that $Ue^{i\pi/2((s^{2}p^{2}/\pi)+(\pi q^{2}/s^{2}))}U^{\dagger}$ acts as a Hadarmard gate over the states $\ket{j}^{*}_{e^{isJ_{A}}}$, implying that $\Omega_{A}[e^{i\pi/2((s^{2}p^{2}/\pi)+(\pi q^{2}/s^{2}))}]$ acts as a Hadamard over the ideal $\mathcal{S}_{U_{2s}}$ code.
\end{proof}

\end{document}